\let\ManuscriptKernelLabel\label
\documentclass[aps,prx,reprint,onecolumn,notitlepage,11pt,
 superscriptaddress,longbibliography,nofootinbib]{revtex4-2}
\let\label\ManuscriptKernelLabel
\usepackage[T1]{fontenc}
\usepackage[utf8]{inputenc}
\usepackage{lmodern}

\usepackage{amsmath,amssymb,amsthm,mathtools,mathrsfs}
\usepackage{microtype,booktabs}
\usepackage{array}[=2016-10-06] 
\usepackage{tabularx}
\usepackage{placeins}
\usepackage{xcolor,graphicx}
\usepackage[colorlinks=true,linkcolor=blue!45!black,citecolor=blue!45!black,
 urlcolor=blue!45!black]{hyperref}
\hypersetup{pdftitle={Optimal limits on weak integrability breaking and protected thermal memory near qutrit exchange},
 pdfauthor={Hanbing Liang and Fujun Liu},pdflang={en}}
\newtheorem{theorem}{Theorem}
\newtheorem{corollary}[theorem]{Corollary}
\newtheorem{proposition}[theorem]{Proposition}
\newtheorem{lemma}[theorem]{Lemma}
\theoremstyle{remark}

\newcommand{\I}{\mathbb I}
\newcommand{\C}{\mathbb C}
\newcommand{\R}{\mathbb R}
\newcommand{\tr}{\operatorname{tr}}
\newcommand{\Tr}{\operatorname{Tr}}
\newcommand{\End}{\operatorname{End}}
\newcommand{\Span}{\operatorname{span}}
\newcommand{\Sym}{\operatorname{Sym}}
\newcommand{\Stab}{\operatorname{stab}}
\newcommand{\Tel}{\operatorname{Tel}}
\newcommand{\Hel}{\operatorname{Hel}}
\newcommand{\ad}{\operatorname{ad}}
\newcommand{\spec}{\operatorname{spec}}
\newcommand{\slthree}{\mathfrak{sl}_3}
\newcommand{\norm}[1]{\left\lVert#1\right\rVert}
\newcommand{\ket}[1]{|#1\rangle}
\newcommand{\bra}[1]{\langle#1|}
\newcommand{\HS}{\mathrm{HS}}
\begin{document}
\title{Optimal limits on weak integrability breaking and protected thermal memory near qutrit exchange}
\author{Hanbing Liang}
\affiliation{Nanophotonics and Biophotonics Key Laboratory of Jilin Province,
School of Physics, Changchun University of Science and Technology,
Changchun 130022, China}
\author{Fujun Liu}
\email[Contact author: ]{fjliu@cust.edu.cn}
\affiliation{Nanophotonics and Biophotonics Key Laboratory of Jilin Province,
School of Physics, Changchun University of Science and Technology,
Changchun 130022, China}
\date[]{}
\begin{abstract}
Although integrability does not universally require a continuous one-site symmetry, we rigorously prove that every jointly analytic, regular Yang-Baxter deformation of the qutrit exchange interaction necessarily retains a nontrivial, analytically varying one-site charge. Breaking this local symmetry imposes a fundamental physical constraint on approximate conservation, governed by the optimal uniform bound $\delta^3 \le C\varepsilon$ that explicitly relates the minimal one-site symmetry defect $\delta$ to the local current-conservation residual $\varepsilon$. While breaking all one-site charges strictly forbids an exact integrable completion, an optimally compensated nearest-neighbor interaction saturates this cubic limit and anomalously extends the guaranteed infinite-temperature energy-current correlation window to order $|\lambda|^{-3}$ in the perturbation strength $\lambda$. Furthermore, we reveal a fundamental resonance obstruction for intrinsic conversion perturbations that strictly prevents any exact first-order repair of a broken one-site charge on any finite ring. Nevertheless, we demonstrate that the complete eight-dimensional charge memory matrix remains thermodynamically protected and approaches the identity for timescales $t=o(|\lambda|^{-3/2})$, a robust feature of the full infinite-temperature dynamics when the thermodynamic limit is taken before weak coupling.
\end{abstract}

\maketitle
\section{Introduction}

While integrability can sometimes survive the total loss of continuous one-site symmetry, as seen in the anisotropic spin-$1/2$ XYZ chain~\cite{baxter}, such freedom is severely restricted in higher-spin systems. We prove that this unrestricted symmetry breaking is fundamentally absent for Hermitian qutrit densities that admit a jointly analytic, regular additive Yang-Baxter deformation through exchange. In this setting, integrability itself forcefully selects a continuous symmetry, ensuring that at least one nontrivial one-site charge survives with a constant spectrum. This necessity complements previous classifications that rely on conserved charges and vertex patterns as initial inputs~\cite{zf,ik,idzumi,cfr,boost,vieira,pt19,mansson,ffr}. Because exact local criteria relate current conservation directly to the commutation of the underlying integrable hierarchy~\cite{grabowski,hokkyo,ssi}, understanding how these symmetry constraints change under general perturbations is critical for determining the stability of integrable dynamics and macroscopic transport~\cite{ilievski,surace}.

The rigidity of this exact symmetry requirement provides a rigorous foundation for understanding weak integrability breaking. Constructive approaches to weak breaking typically utilize local, boost, and bilocal deformations to engineer higher-order approximate conservation laws~\cite{surace,vanovac,schouten}. However, a complementary physical question arises regarding what compensation can achieve when the interaction range remains strictly fixed. We establish a necessary restriction for the full space of nearest-neighbor interactions near the qutrit exchange point, proving that the least one-site symmetry defect $\delta$ and the local integrability residual $\varepsilon$ satisfy the optimal uniform bound $\delta^3\le C\varepsilon$. This cubic constraint reveals that suppressing the current-conservation residual necessarily forces at least one generator closer to exact conservation. The relation holds uniformly over all perturbation directions, dictating that if every one-site generator remains broken at first order, no higher-order two-site counterterms can reduce the local residual beyond cubic order.

This quantitative local bound translates directly into stringent limits on many-body dynamical observables. An exact periodic norm identity equates the infinite-temperature variance of the current force to a weighted norm of the local residual, which converts the cubic constraint into an optimal sixth-power lower bound on the initial current curvature. The necessity of this bound is invisible to linearized equations, where a cyclic pair-conversion tangent breaks all one-site symmetries and passes the first-order integrability test before failing a cubic obstruction. Nevertheless, this same tangent supports an optimally compensated nearest-neighbor interaction that successfully attains the cubic limit. Comparing this saturated model with the uncompensated conversion shows that while both break all one-site charges at the same leading order, the optimal compensation reduces the current curvature from fourth to sixth order in the perturbation strength $\lambda$. Spectral positivity subsequently extends the guaranteed current-memory window from inverse-square to inverse-cube scaling, thereby controlling the energy-spreading second moment within that extended timeframe.

Beyond local transport, the failure of exact charge repair on finite lattices does not strictly dictate the ultimate fate of thermal memory. Throughout the Hermitian intrinsic conversion sector, resonances fundamentally prevent every global first-order repair of a broken one-site charge on any finite ring. Despite this exact finite-ring obstruction, a truncated bilocal correction successfully controls the full perturbed dynamics to provide robust thermodynamic protection. For these intrinsic perturbations, this mechanism protects the complete eight-dimensional memory matrix of uniform one-site charges for timescales $t=o(|\lambda|^{-3/2})$ when the thermodynamic limit is taken before the weak-coupling limit. By combining the exact charge continuation and uniform stability bounds with these thermodynamic memory protections, we establish a comprehensive framework for weakly broken qutrit integrability. The remainder of this work details the exchange neighborhood and its conversion processes, proves the optimal stability bound, and ultimately connects the repair obstruction to thermal protection and complete charge memory in Section~\ref{main:sec:memory} and Section~\ref{main:sec:proof}.

\section{Exact Charge Conservation in Analytic Deformations}
\label{main:sec:exact}

\subsection{Exchange interactions, cyclic conversion, and the linear integrability test}
\label{main:sec:setting}

To establish the physical premise of our analysis, we define the local vector space $V=\mathbb{C}^3$ with orthonormal states $\ket0,\ket1,\ket2$, and introduce the exchange operator $P\ket{ab}=\ket{ba}$. Exchange moves these internal states between neighboring sites without changing their total populations. This operation trivially satisfies $[P,q\otimes\I+\I\otimes q]=0$ for every one-site operator $q$, meaning the exchange chain conserves every uniform charge $Q_L(q)=\sum_jq_j$. To answer whether an integrable deformation can remove all of these continuous one-site symmetries, we examine a concrete process that alters populations known as the cyclic pair conversion $T=K_{\rm F}=\sum_{\rm cyclic}\bigl[\ket{aa}(\bra{bc}-\bra{cb})+(\ket{bc}-\ket{cb})\bra{aa}\bigr]$, where the sum runs over the three cyclic permutations of $(0,1,2)$. Each term converts two equal states into the antisymmetric combination of the other two species, and reverses that process. A diagonal one-site charge would assign weights $q_0,q_1,q_2$ to the species, and conservation across all three channels strictly requires $2q_0=q_1+q_2$, $2q_1=q_2+q_0$, and $2q_2=q_0+q_1$. Only equal weights satisfy these constraints, resulting in the trivial identity charge. The full commutator equation additionally excludes every off-diagonal generator, demonstrating that this cyclic conversion breaks every nontrivial one-site charge. These specific vertices already occur in the Leigh-Strassler Hamiltonian~\cite{mansson}. 

Having identified a process that shatters all one-site symmetries, we must determine if such a mechanism can survive the rigorous constraints of quantum integrability. We therefore allow arbitrary higher-order Hermitian corrections to this cyclic process and investigate whether it can be completed into an exact integrable family. More generally, we consider an analytic Yang-Baxter density arc $h(\tau)=P+\tau K+O(\tau^2)$ with $h(\tau)^\dagger=h(\tau)$ for real $\tau$. This arc is associated with a matrix $\check R(\tau,u)$ that is jointly analytic near $(0,0)$ and satisfies $\check R(\tau,0)=\I$ alongside $h(\tau)=\partial_u\check R(\tau,0)$. Integrability demands the spectral equation $\check R_{12}(\tau,u)\check R_{23}(\tau,u+v)\check R_{12}(\tau,v)=\check R_{23}(\tau,v)\check R_{12}(\tau,u+v)\check R_{23}(\tau,u)$, where the model parameter $\tau$, the spectral parameters $u,v$, and the physical time are distinct. The exchange density possesses the elementary realization $\check R_0(u)=\I+uP$. We impose no predefined charge, matrix-entry pattern, or restriction on higher model derivatives. Differentiating the spectral equation yields the necessary Reshetikhin condition $F(h):=[h_{12}+h_{23},[h_{12},h_{23}]]=\Delta X$, where $\Delta X=X_{23}-X_{12}$~\cite{kulish,grabowski}. The linearized condition readily accepts the conversion tangent via $DF_P(T)=2\Delta T$, meaning that complete one-site symmetry breaking successfully passes the first-order integrability test. However, a severe mathematical obstruction universally appears at third order and survives every choice of higher derivatives, including corrections that leave the original vertex ansatz intact. This obstruction strictly excludes an exact analytic Yang-Baxter completion of $T$, yet this same direction admits a quadratic correction with only a cubic residual, directly connecting the exact algebraic obstruction to the physical realm of approximate conservation.

\subsection{Survival of an exact charge along every analytic deformation}

The failure of the cyclic conversion to form an integrable family hints at a deeper structural rigidity governing the system. Specifically, a continuous one-site symmetry supplies a fundamental local selection rule. In a basis that diagonalizes its generator, every allowed two-site process preserves the sum of the two charge weights. The cyclic conversion introduced above violates every nontrivial choice of weights, even after a common change of basis. The following theorem definitively proves that any valid analytic integrable deformation must retain at least one such rule. Because its generator can dynamically rotate as the coupling changes, the formal statement accurately follows the charge along the prescribed family.

\begin{theorem}[Exact charge along every analytic deformation]
\label{fs:thm:main}
Let \(h(\tau)\) satisfy the specified hypotheses. There are \(\epsilon>0\), a nonzero traceless Hermitian matrix \(q_0\), and a real-analytic common one-site unitary \(U(\tau)\), with \(U(0)=\I\), such that
\begin{equation}
q(\tau)=U(\tau)q_0U(\tau)^\dagger,\qquad
[h(\tau),q(\tau)\otimes\I+\I\otimes q(\tau)]=0 \qquad (|\tau|<\epsilon).
\label{fs:eq:exactcharge}
\end{equation}
The charge is nonzero at the base point and has constant spectrum. No nonvanishing first-derivative or simple-spectrum assumption is required.
\end{theorem}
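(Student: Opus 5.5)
The plan is to reduce the question to a known classification near the exchange point. Regular solutions of the Yang--Baxter equation whose density passes through $P$ are controlled by the Lie algebra structure of $\slthree$ acting on $V$. The approach has four steps: a formal expansion, a jet-level argument that a charge survives, a lift to analytic dependence on $\tau$, and a final normalization.

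\textbf{Step 1: formal expansion.} Write $h(\tau)=P+\sum_{k\ge1}\tau^k h_k$. I will substitute this into the Reshetikhin condition $F(h(\tau))=\Delta X(\tau)$, and also into the full spectral equation, order by order in $\tau$. At order one we get $DF_P(h_1)=\Delta X_1$. The computation $DF_P(T)=2\Delta T$ quoted above shows that this linear constraint accepts a large space of tangents, including ones that break every symmetry. So first order cannot decide the question. The argument has to use the full jointly analytic $\check R(\tau,u)$, not just $h$.

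\textbf{Step 2: jets and a surviving charge.} I would use the gauge freedom. Conjugating $\check R$ by $V(\tau)\otimes V(\tau)$, adding multiples of $\I$, reparametrizing $u\mapsto f(\tau)u$, and twisting all preserve the hypotheses. Around $\check R_0(u)=\I+uP$, the infinitesimal deformations of the Yang, i.e.\ $\mathfrak{gl}_3$, $R$-matrix modulo gauge are classified by the known rigidity of Yang's rational solution: the $(\tau,u)$-jets reduce to trigonometric (XXZ-type $U_q(\widehat{\mathfrak{sl}}_3)$) or rational twists. Each of these commutes with a Cartan element, or at least a one-dimensional torus, of $\mathfrak{gl}_3$. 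Concretely, I would show the following. The Lie algebra $\mathfrak g(\tau)=\{q\in\slthree:[h(\tau),q\otimes\I+\I\otimes q]=0\}$ is cut out by analytic linear equations in $q$. It therefore has constant dimension $d$ for all small $\tau\neq0$, possibly jumping up at $\tau=0$, where it is all of $\slthree$. The goal is to prove $d\ge1$. Suppose instead $d=0$. Then for generic $\tau$ the symmetry algebra is trivial. Expanding the spectral equation to third order in $\tau$ then produces the cubic obstruction, generalizing the one described for $T$. The leading nonvanishing jet $h_m$, taken modulo gauge, must satisfy a quadratic compatibility at order $2m$ and a cubic one at order $3m$. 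These force the $\slthree$-module component of $h_m$ in $\End(V\otimes V)$ to lie in a sector preserved by a nonzero element of $\slthree$. Higher jets are then constrained inductively to respect the same element. This step is the main obstacle. I would attempt it through the decomposition $V\otimes V=\Sym^2V\oplus\Lambda^2V$ together with Schur's lemma: the conversion-type pieces mapping $\Sym^2\to\Lambda^2$ are precisely the ones the cubic obstruction kills.

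\textbf{Step 3: analytic charge.} Once $d\ge1$ holds generically, I need an analytic charge, not just a pointwise one. I would take $q(\tau)$ to be an analytic kernel vector of the analytic family of linear maps $q\mapsto[h(\tau),q\otimes\I+\I\otimes q]$, chosen via Rellich/Kato analytic kernel selection. If necessary I divide by the leading power of $\tau$, so that $q(0)\neq0$. Since each $h(\tau)$ is Hermitian for real $\tau$, the real and imaginary parts of $q(\tau)$ are separately charges. One of them has a nonzero limit, so I may assume $q(\tau)$ is Hermitian and traceless.

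\textbf{Step 4: constant spectrum.} The charges commuting with $h(\tau)$ form a Lie algebra of constant dimension, hence an analytic family of subalgebras $\mathfrak g(\tau)$. Its compact real form is locally constant up to conjugacy, and analytic families of subalgebras are conjugate by an analytic $U(\tau)$, by a Richardson-type rigidity argument applied on the compact group. I would choose $q_0$ in the centre or Cartan part of $\mathfrak g(0^+)$ and set $q(\tau)=U(\tau)q_0U(\tau)^\dagger$. This gives the charge with constant spectrum asserted in the theorem.
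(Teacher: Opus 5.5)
\textbf{Step~2 as proposed cannot work.} Step~2 carries the whole theorem, and it fails for two reasons.

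There is no rigidity result that reduces regular deformations of $\I+uP$ to trigonometric or rational twists keeping a torus. The paper's own elliptic comparison, the $\mathbb Z_3$-Belavin degeneration, is a counterexample. It is a jointly analytic regular family with $\check R_{0,\Omega}(u)=\I+uP$. Its first variation is $K_{f_\Omega,0}$, where $f_\Omega$ is a Hesse-pencil cubic that for generic $\Omega$ has no nonzero $\slthree$ stabilizer. So that family has no continuous one-site charge at all, and it is excluded only because its density is not Hermitian ($g=0\neq\bar f_\Omega$).

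Your Step~2 never uses Hermiticity; it enters only in Step~3. So no argument of that shape can be correct. The missing input is Hermiticity, not the full $\check R$: the paper extracts everything from the given arc via the Reshetikhin condition $F(h)\in\operatorname{im}\Delta$. Concretely:
\begin{itemize}
\item Every term of the cubic covariant $E_{70}(f,g)$ contains $g$, so it vanishes identically on the slice $g=0$. It constrains $f$ only on the real form $g=\bar f$.
\item Even there, it does not ``kill'' the pieces converting between $\Sym^2V$ and $\Lambda^2V$. The surviving triangle and secant types are exactly such pieces. Isolating them is a Hermitian computation (a maximum of $|f|$ on the unit sphere, Takagi values, explicit determinant identities), not Schur's lemma.
\item $[A,B]=0$ for the dressing data likewise comes from positivity: $0=\tr(A[B,C])=\tr(C^2)$ with $C=[A,B]$ Hermitian.
\end{itemize}

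\textbf{Passing from the leader to the arc.} Suppose you grant a symmetry of the leading coefficient $p$ in $x(\tau)=\tau^m(p+O(\tau))$. The claim that ``higher jets are constrained inductively to respect the same element'' is still unproved, and in general false.
\begin{itemize}
\item The charge may rotate at higher order; this is why $U(\tau)$ appears.
\item For a pure one-site leader $(A_0,0;0,0)$, all leading-order equations are vacuous. Nothing at that level prevents charged components of $B$ or $f$ from entering later.
\item Your contradiction hypothesis $d=0$ is therefore not contradicted by any constraint on $p$ alone.
\end{itemize}
The paper replaces jet induction by an $SU(3)$-equivariant analytic section, genuine analytic representatives of the initial equations, and three closing arguments:
\begin{itemize}
\item For $f_0\neq0$, the rank-12 differential of $E_{70}$ makes the rescaled zero set smooth and exactly filled by the triangle/secant families.
\item For a helix leader, charged variables are eliminated by an equivariant implicit-function argument in an analytic block gauge.
\item For a pure telescope leader, the analytic scalar built from $\mu_3F(h)=[\mu_2(h^2),\mu_2(h)]$ has Hessian $\tr([A_0,v]^2)>0$ on charged $v$.
\end{itemize}

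\textbf{Step~4 fails independently.} Richardson-type rigidity does not hold for tori. The family $\C\,\operatorname{diag}(1,-1+\tau,-\tau)$ consists of $\dagger$-closed one-dimensional subalgebras. For small $\tau\neq0$ none of them is conjugate to the $\tau=0$ member, because the spectrum is no longer equally spaced. So when $d=1$ your argument gives no constant spectrum. In the paper, constant spectrum holds because, after an analytic common basis change, either the whole density is fixed by a fixed compact torus, or the arc is a conjugated member of an exact family with fixed standard charges. Your Step~3 (analytic kernel selection and taking Hermitian parts) is sound, but it only helps once these points are settled.
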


The fixed eigenvalues generated by this theorem guarantee that the charge weights stay exactly the same while their one-site basis can change, and the identical unitary acts on every site. This powerful conclusion includes arbitrary higher coefficients and vanishing first derivatives without selecting one universal generator shared by all possible arcs. In particular, a common basis rotation beginning at the identity leaves $P$ unchanged but can dynamically rotate the surviving charge at higher orders. Two distinct conversion patterns emerge for arcs possessing a nonzero intrinsic leading interaction. In an analytic common basis, species exchange preserves two independent populations, whereas neutral-pair conversion preserves one equally spaced charge proportional to $\operatorname{diag}(1,-1,0)$. These distinct charges allow fifteen and nineteen matrix entries, respectively, with integrability further constraining their exact amplitudes. The nonlinear equations definitively select these specific patterns from the unrestricted tangent space, proving that integrability forcefully dictates the underlying symmetry.

\subsection{Macroscopic bounds on initial charge drift}

While the theorem establishes a dynamically rotating conserved quantity, a physical observer may prefer to measure the initial exchange charge rather than follow its rotating generator. Proximity to the exact conserved charge then rigorously bounds the macroscopic drift of that fixed observable at every physical time. We normalize the charge in Theorem~\ref{fs:thm:main} by setting $\norm{q_0}_{\rm op}=1$. For a periodic chain, we define the macroscopic quantities $H_L(\tau)=\sum_jh_{j,j+1}(\tau)$, $Q_L(\tau)=\sum_jq_j(\tau)$, and $Q_L^0=\sum_j(q_0)_j$. There exist finite arc-dependent constants $M_q,M_r$ such that
\begin{equation}
\frac1L\norm{Q_L^0(t)-Q_L^0}_{\rm op}
\le\min\{4M_r\tau^2|t|,\,2M_q|\tau|\}.
\label{main:eq:drift}
\end{equation}
This many-body bound is entirely independent of chain length and initial state. Its time-independent term follows directly from proximity to the exact analytic charge, providing a robust, volume-independent constraint on initial charge drift across the entire macroscopic system.

\section{Sharp Stability Bounds and Saturating Pair Conversions}
\label{main:sec:stability_merged}

\subsection{A uniform cubic bound for arbitrary nearby densities}
\label{main:sec:stability}

Moving away from exact integrability, all one-site generators can be systematically broken. The core quantitative question is how small the current-conservation residual can become while the interaction remains strictly nearest-neighbor. We compare two specific defects of a specified density, namely its least breaking of a normalized one-site generator and the local current residual after removing boundary differences. For any Hermitian two-site density, we define the symmetry defect $\delta(h)=\min \norm{[h,q\otimes\I+\I\otimes q]}_\HS$ operating over $q=q^\dagger$ with $\tr q=0$ and $\norm q_\HS=1$, alongside the local residual $r(h)=\pi_\Delta F(h)$ and its magnitude $\varepsilon(h)=\norm{r(h)}_\HS$. Local Hilbert-Schmidt norms utilize the unnormalized trace, and $\pi_\Delta$ represents the orthogonal projection away from the boundary differences $\Delta X=X_{23}-X_{12}$. The minimum formulation selects the least broken one-site generator without choosing its basis in advance. The residual measures the explicit failure of the local current-conservation condition rather than serving as a mere distance to an assumed manifold of integrable models.

This defect is an intrinsic property of the specified two-site density. For a general representative, $\delta>0$ need not exclude a periodic uniform charge because a one-site difference $A\otimes\I-\I\otimes A$ inherently cancels in $H_L$. If the density is balanced such that $\Tr_1h=\Tr_2h$, the periodic conservation of $Q_L(q)$ is equivalent to $[h,q_1+q_2]=0$ for $L\ge3$. The kernel of two-site periodization consists of these exact differences. Because their two partial traces are opposite, whereas those of $[h,q_1+q_2]$ are equal for balanced $h$, a vanishing periodic sum forces the local commutator to vanish completely. To separate measurement units from proximity to exchange, we write the orthogonal projection of $h$ onto $\Span\{\I,P\}$ as $a\I+bP$, and set $h_n=(h-a\I)/b$ for $b\ne0$. This scaling yields $\delta(h)=|b|\delta(h_n)$ and $\varepsilon(h)=|b|^3\varepsilon(h_n)$.

\begin{theorem}[Sharp uniform local exponent]\label{st:thm:sharp}
There are $r_0,C>0$ such that every Hermitian qutrit density with $b\ne0$ and $\norm{h_n-P}_\HS<r_0$ satisfies the optimal bound $\delta(h)^3\le C\varepsilon(h)$.
The exponent $1/3$ in $\delta\le C^{1/3}\varepsilon^{1/3}$ cannot be improved. The density need not be integrable or belong to a prescribed analytic arc.
\end{theorem}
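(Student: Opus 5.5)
By the scaling identities $\delta(h)=|b|\delta(h_n)$ and $\varepsilon(h)=|b|^3\varepsilon(h_n)$, both sides of $\delta^3\le C\varepsilon$ scale as $|b|^3$. It therefore suffices to prove the bound for normalized densities $h=P+x$, where $x$ is Hermitian, orthogonal to $\Span\{\I,P\}$, and $\norm{x}_\HS<r_0$. The plan is a local expansion of the map $x\mapsto r(P+x)=\pi_\Delta F(P+x)$ around $x=0$, split into a linear part, a quadratic part, and a cubic part. This should be organized as a Lyapunov--Schmidt reduction onto the kernel of the linearization.

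\textbf{Linearization and first splitting.} Write $\mathcal{L}=\pi_\Delta DF_P$. We decompose $x=x_s+x_c+x_\perp$. Here $x_s$ is the symmetric-sector part, meaning the tangent directions that commute with $q\otimes\I+\I\otimes q$ for some $q$, or more precisely the image of the infinitesimal one-site rotations together with the charge-preserving integrable tangents classified in Theorem~\ref{fs:thm:main}. The component $x_c$ lies in the symmetry-breaking part of $\ker\mathcal L$, such as the conversion tangent $T$ with $DF_P(T)=2\Delta T$. The component $x_\perp$ lies in a complement of $\ker\mathcal L$. On the complement, $\norm{\mathcal L x_\perp}\ge c\norm{x_\perp}$. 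Hence, up to higher-order terms, $\varepsilon\gtrsim\norm{x_\perp}-O(\norm{x}^2)$.

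The defect is Lipschitz: $\delta(P+x)\le\norm{[x,q_1+q_2]}\le 2\norm{x}\norm{q}_{\rm op}$. More precisely, because $P$ commutes with every $q_1+q_2$, we have $\delta(P+x)=\min_q\norm{[x,q_1+q_2]}_\HS$. This is a homogeneous degree-one function of $x$ that vanishes exactly on the union of the symmetric sectors. The job is then to show the following: whenever $\varepsilon$ is small compared with $\norm{x}^3$, the kernel component must lie within distance $O((\varepsilon)^{1/3})$ of that union.

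\textbf{Higher-order obstruction.} Restricting to $\ker\mathcal L$ and eliminating $x_\perp$ via the implicit function theorem, we obtain a reduced analytic map $\Phi(y)=\Phi_2(y)+\Phi_3(y)+\dots$ on the finite-dimensional space $K=\ker\mathcal L/\text{(one-site rotations)}$, with $\varepsilon\gtrsim\norm{\Phi(y)}$. The key algebraic input is the statement already announced in Section~\ref{main:sec:setting}: every symmetry-breaking kernel direction, such as $T$, passes first order, admits a quadratic correction, and then fails at third order. Concretely, I would show two things:
\begin{enumerate}
\item[(i)] On the directions where $\Phi_2$ vanishes, the cubic term satisfies $\norm{\Phi_3(y)}\ge c\,\mathrm{dist}(y,\mathcal S)^3$, where $\mathcal S$ is the cone of charge-preserving tangents.
\item[(ii)] Elsewhere, $\norm{\Phi_2(y)}\ge c\,\mathrm{dist}(y,\mathcal S)^2$.
\end{enumerate}
Both follow from compactness on the unit sphere of $K$ once one knows that the common zero set of the leading homogeneous obstructions is contained in $\mathcal S$. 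That containment is essentially the infinitesimal content of Theorem~\ref{fs:thm:main}, applied to formal rather than analytic arcs. Combined with $\delta\lesssim\mathrm{dist}(y,\mathcal S)+\norm{x_\perp}$, this yields $\delta^3\lesssim\varepsilon$, with a uniform $C$ after shrinking $r_0$.

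\textbf{Main obstacle.} The hardest step is the uniform mixing of degrees. A quadratic lower bound on one subcone and a cubic one on another do not immediately combine: near their interface, $\Phi_2$ is small but nonzero and can cancel against $\Phi_3$. I would first try a Łojasiewicz-type argument. The reduced residual is real-analytic, and its zero set near $0$ is contained in $\mathcal S$ by Theorem~\ref{fs:thm:main}, so $\norm{\Phi(y)}\ge c\,\mathrm{dist}(y,\mathcal S)^\alpha$ for some $\alpha$. It then remains to show $\alpha=3$ by an explicit weighted-homogeneous blow-up along the $T$-type directions. In that blow-up, $y=s y_1+s^2y_2$ with $y_1$ breaking and $y_2$ the quadratic correction.

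\textbf{Optimality.} The exponent cannot be improved by the saturating family $h_\lambda=P+\lambda T+\lambda^2 W$, where $W$ is the quadratic correction mentioned above. Along this family, $\delta(h_\lambda)\asymp|\lambda|$ because $T$ breaks every one-site charge, while $\varepsilon(h_\lambda)=O(|\lambda|^3)$. Hence $\delta\le C'\varepsilon^{\beta}$ with $\beta>1/3$ fails as $\lambda\to0$.
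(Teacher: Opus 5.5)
Your scaling reduction, the elimination of the normal coordinates and the optimality argument along $P+\lambda T+\lambda^2N$ agree with the paper. For the elimination you need the exact analytic section, so that $\norm{h_n-\mathfrak h(x)}_{\HS}\le C_0\varepsilon$ with no leftover $O(\norm x^2)$; an $\norm x^2$ term in $\delta$ would already destroy the cube. Also, the symmetric locus is a cone (a union of subspaces), so $x=x_s+x_c+x_\perp$ is not a linear splitting.

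The genuine gap is the core estimate. Your claims (i)--(ii) concern only the homogeneous truncations $\Phi_2,\Phi_3$. You concede they do not combine, and then defer to a \L{}ojasiewicz inequality whose exponent $\alpha=3$ is ``to be shown.'' \L{}ojasiewicz gives only some unknown exponent, and the exponent is the whole content of the theorem. It would also presuppose that Reshetikhin zeros near $P$ lie in $\mathcal S$; Theorem~\ref{fs:thm:main} is a statement about analytic arcs, so you would at least need curve selection.

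You also look for the difficulty in the wrong place. The $\Phi_2$/$\Phi_3$ cancellation is removed in the paper by Lemma~\ref{fs:lem:genuineinitial}. It supplies analytic elements of the residual ideal, with bounded coefficients, whose initial terms are exactly the action equations, $E_{70}$ and $[B,[A,B]]$. In scaled coordinates $x=\sigma z$ this gives $|\widetilde G_i(\sigma,z)|\le C\rho/\sigma^{d_i}$. With the trivial bound $\delta\le C\sigma$, any violating sequence has $\rho/\sigma^3\to0$, so its limiting direction $p$ solves all initial equations. Directions such as $T$, which fail $E_{70}$, are therefore harmless: near them $\rho\gtrsim\sigma^3\gtrsim\delta^3$. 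A blow-up along $T$-type directions is relevant only to optimality.

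The real work is at limit directions $p$ on the first-order classification: triangle, secant, helix and telescope. There a truncated inequality $\norm{\Phi_3(z)}\ge c\,\mathrm{dist}(z,\mathcal S)^3$ yields only $\rho\ge c\sigma^3\mathrm{dist}(z,\mathcal S)^3-C\sigma^4$. This is empty on a whole $\sigma^{1/3}$-neighbourhood of the cone, where $\delta$ may be as large as $\sigma^{4/3}$. You need two things:
\begin{itemize}
\item[(a)] the exact zeros of the full analytic equations near $(0,p)$ carry exact charges;
\item[(b)] the distance to that zero set is controlled linearly by $\rho/\sigma^3$.
\end{itemize}

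For intrinsic leaders the paper obtains this from three ingredients. The rank computation of Lemma~\ref{fs:lem:intrinsicrank} shows the initial equations cut out the incidence manifold transversally. Proposition~\ref{fs:prop:familyfilling} shows exact Yang--Baxter families fill the smooth scaled zero set $Y$. Ambient equivariance of the section then gives $\delta\le C\rho/\sigma^2$. Helix leaders reach the same bound by equivariant elimination of the charged variables.

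For pure telescope leaders the initial equations do not control the charged $B$ variables at all. Subsection~\ref{fs:app:telescopecontinuation} instead uses a positive analytic scalar with leading form $\tr([A_0,v]^2)$, which gives only $\delta^2\le C(\rho/\sigma+\rho^2/\sigma^2)$.

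The cube then comes from multiplying these estimates by $\delta\le C\sigma$, e.g.\ $\delta^3\le(C\rho/\sigma^2)(C\sigma)^2$, not from a cubic lower bound on a homogeneous obstruction. None of these ingredients appears in your plan. In particular, nothing addresses the degenerate telescope directions, where quadratic positivity in $v$ is exactly what secures the exponent $3$. As written, the proposal does not establish the bound.
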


This bound fundamentally limits what physical compensation can achieve. If a nearby family breaks every normalized one-site generator at least linearly in its perturbation strength, ensuring $\delta(h_\lambda)\ge d_0|\lambda|$ for a fixed $d_0>0$, then the residual must satisfy $\varepsilon(h_\lambda)\ge (d_0^3/C)|\lambda|^3$. No higher-order two-site counterterms can make the residual $o(|\lambda|^3)$ while preserving this amount of symmetry breaking. Every individual density with zero residual in this neighborhood must possess a nontrivial continuous one-site symmetry. While this fixed-density conclusion requires no jointly analytic model path, a spectral realization still demands the regularity and difference-form requirements specified previously.

\subsection{The unavoidable cost in energy-current curvature}

The local residual has a direct and profound meaning for infinite-temperature current fluctuations. Their autocorrelation measures the extent to which an initial energy-current fluctuation overlaps with the macroscopic current at a later time. Its initial curvature is precisely the variance per site of the current force $i[H_L,J_L]$. Fixing the density representative and using $\hbar=1$ on a periodic chain, we define $H_L=\sum_jh_{j,j+1}$, $J_L=i\sum_j[h_{j,j+1},h_{j+1,j+2}]$, and the correlation $C_J(t)=\Tr[J_L(t)J_L]/(L3^L)$. We write $\chi_J=C_J(0)$, $\kappa_J=-C_J''(0)$, and $\Gamma_J=\kappa_J/\chi_J$. Complete cancellation of four-site commutators yields $i[H_L,J_L]=-\sum_jr_{j,j+1,j+2}$~\cite{surace}. An orthogonal operator-word identity then establishes $\kappa_J=(3\norm{r_1}_\HS^2+2\norm{r_2}_\HS^2+\norm{r_3}_\HS^2)/27$, leading to the rigorous bounds $\varepsilon^2/27\le\kappa_J\le\varepsilon^2/9$ for every $L\ge5$. Here $r_s$ groups residual terms by their minimum support span in the identity/traceless tensor-word basis, and the weights account for overlapping copies when these local terms are summed around the ring. Thus, a nonzero local residual cannot disappear from the current-force variance in a long chain. Theorem~\ref{st:thm:sharp} therefore implies an optimal sixth-power bound:
\begin{equation}
\kappa_J(h)\ge\frac{\delta(h)^6}{27C^2}.
\label{main:eq:sixth}
\end{equation}
The susceptibility equals $16/9$ at $P$ and stays bounded above and away from zero in a smaller normalized neighborhood. At a fixed exchange scale, $\Gamma_J\ge c\delta^6$, meaning that at a fixed local symmetry defect, the current correlation possesses an unavoidable initial curvature. For densities with scalar partial traces where $\Tr_1h=\Tr_2h=c_0\I$, the one-site response is equally direct. We find $\Gamma_{\rm one}:=\min_q[-C_q''(0)/C_q(0)]=\delta(h)^2/3\le C^{2/3}\kappa_J^{1/3}$, utilizing the same traceless normalized $q$ defining the defect and the macroscopic charge correlation $C_q(t)$. The scalar-partial-trace assumption ensures that the translated charge forces are orthogonal, and the representative of $h$ remains fixed throughout to maintain consistency between the local defect and current curvature.

\subsection{Optimal compensation in the Fermat interaction}
\label{main:sec:model}

The forbidden exact tangent $T$ becomes the optimal near-integrable example after applying a highly specific structural compensation. We define the compensated Fermat interaction as $h_\lambda=P+\lambda T+\lambda^2N$, where the correction operator is $N=\I+2P-5D-C_+$, with $D=\sum_a\ket{aa}\bra{aa}$ and $C_+=\sum_{\rm cyc}\bigl[\ket{aa}(\bra{bc}+\bra{cb})+(\ket{bc}+\ket{cb})\bra{aa}\bigr]$. Its added term carefully adjusts exchange, diagonal energies, and symmetric pair-conversion amplitudes at second order. These contributions perfectly cancel the quadratic current force, while the first-order breaking of one-site charges rigidly remains. The residual then initiates at the unavoidable third order, demonstrating an additional property beyond the previous occurrence of the conversion vertices~\cite{mansson}.

Removing only the quadratic compensation yields the uncompensated raw path $h_\lambda^{\rm raw}=P+\lambda T$. Both paths break every one-site charge at the exact same leading order, but their current forces differ drastically. The raw force starts at second order, whereas the compensated force starts at third order. The raw path produces $\delta_{\rm raw}=\sqrt{12}|\lambda|$ alongside a raw curvature $\Gamma_J^{\rm raw}=22\lambda^4+O(\lambda^6)$. In stark contrast, the compensated path yields $\Gamma_J=486\lambda^6+O(\lambda^8)$. Spectral positivity translates this dramatic reduction in force into a significantly longer guaranteed memory window. The half-retention bounds subsequently obey $t_{\rm cert}^{\rm raw}\sim|\lambda|^{-2}/\sqrt{22}$ versus the extended $t_{\rm cert}\sim|\lambda|^{-3}/\sqrt{486}$ for the compensated interaction. The compensation thus improves the weak-coupling scaling without restoring any one-site charge. Figure~\ref{main:fig:sharp}(b) illustrates this exact curvature comparison, highlighting an asymptotic improvement as $\lambda\to0$.

\begin{figure}[t]
\centering
\includegraphics[width=\linewidth]{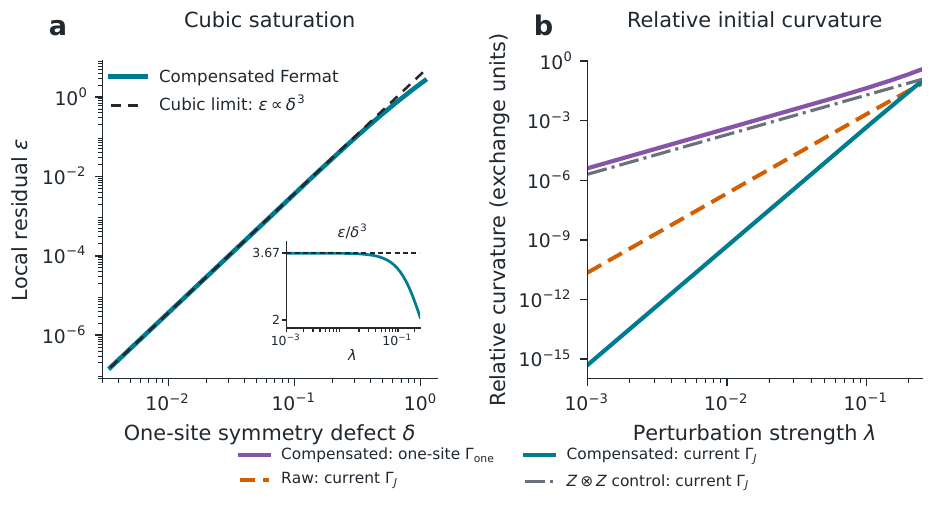}
\caption{Sharp local stability and separated initial responses.
(a) Exact defects of the compensated Fermat interaction for
$10^{-3}\le\lambda\le0.25$. The dashed curve is the asymptote
$\varepsilon=(3\sqrt6/2)\delta^3$; its coefficient is specific to this
family, not the unspecified universal constant $C$.
The inset shows the ratio \(\varepsilon/\delta^3\) approaching that
coefficient as \(\lambda\to0\).
(b) Relative initial curvatures. The compensated Fermat model breaks every one-site
charge with $\Gamma_{\rm one}=4\lambda^2+36\lambda^4$, while its
current curvature starts at $486\lambda^6$. Removing the quadratic
compensation, while keeping the same tangent, gives the raw current
curvature starting at $22\lambda^4$. The control
$P+\lambda Z\otimes Z$, $Z=\operatorname{diag}(1,-1,0)$, retains
two diagonal charges but has current curvature starting at $2\lambda^2$.
All curves evaluate exact local expressions and hold for every $L\ge5$;
they have no statistical error bars. AI-assisted plotting code and source
data are available in Ref.~\cite{code}; see
\hyperref[main:sec:computational]{Computational methods and AI assistance}.}
\label{main:fig:sharp}
\end{figure}

The exact residual and susceptibility polynomials hold for every $L\ge5$ and generate the analytical curves presented in Fig.~\ref{main:fig:sharp}. The residual is demonstrably nonzero at every $\lambda\ne0$, completely excluding a regular additive spectral realization for that density. Every nontrivial one-site charge is concurrently broken at every $\lambda\ne0$. For small coupling $\lambda^2<1/3$, we explicitly find $\delta=|\lambda|\sqrt{12+108\lambda^2}$, $\Gamma_{\rm one}=4\lambda^2+36\lambda^4$, and $\Gamma_J=486\lambda^6+O(\lambda^8)$. Therefore, the ratios $\varepsilon/\delta^3\to3\sqrt6/2$ and $\kappa_J/\delta^6\to1/2$ emerge naturally. Both the cube-root symmetry bound and the sixth-power current bound are perfectly saturated in exponent, cementing the optimally compensated nature of the Fermat model.

\subsection{Current retention and energy spreading}
\label{main:sec:retention}

Small force fluctuations fundamentally control a finite interval of the full macroscopic dynamics. At infinite temperature, the current autocorrelation operates as a positive weighted sum of cosines of energy differences. Utilizing the elementary bound $0\le1-\cos x\le x^2/2$, we derive $0\le1-C_J(t)/\chi_J\le\Gamma_J t^2/2$. For the optimally compensated explicit model, at least half of the initial correlation rigidly remains throughout the guaranteed timeframe $|t|\le t_{\rm cert}:=\Gamma_J^{-1/2}\sim|\lambda|^{-3}/\sqrt{486}$. This guaranteed window is strictly uniform in volume, deriving its cubic scale directly from the explicit upper bound on the current force alongside spectral positivity. A general lower bound alone would not mathematically imply such a precise persistence window.

Energy conservation intimately connects this current memory to the spatial spread of an associated energy perturbation. The ensuing bound controls the deviation of its second moment from quadratic growth throughout the guaranteed current-memory window. On the infinite chain, we define the centered density $e=h_\lambda-\I/3$, the susceptibility $\chi_E=\frac29(4+12\lambda^2+27\lambda^4)$, and the normalized second moment $M_2(t)=\chi_E^{-1}\sum_xx^2\langle e_x(t)e_0\rangle_0$. With $j_x=i[h_{x,x+1},h_{x+1,x+2}]$, the sum $C_J^\infty(t)=\sum_x\langle j_x(t)j_0\rangle_0$ serves as the fixed-time thermodynamic limit. Finite-range locality bounds guarantee that this sum and the energy second moment are rigorously convergent at every fixed time~\cite{bravyi}. The scalar marginals give $M_2(0)=0$, and tracial spectral symmetry confirms $M_2'(0)=0$. The continuity equation enforces $M_2''(t)=2C_J^\infty(t)/\chi_E$, allowing us to integrate the bounds and find:
\begin{equation}
\frac{\chi_J}{\chi_E}t^2-\frac{\kappa_J}{12\chi_E}t^4 \le M_2(t)\le\frac{\chi_J}{\chi_E}t^2.
\label{main:eq:spreading}
\end{equation}
Throughout the guaranteed window, the lower bound consistently remains at least $11/12$ of the ballistic expression $(\chi_J/\chi_E)t^2$. For analytical comparison, the symmetry-preserving control $P+\lambda Z\otimes Z$ with $Z=\operatorname{diag}(1,-1,0)$ preserves two diagonal charges but produces a significantly faster curvature $\Gamma_J^{(Z)}=2\lambda^2+O(\lambda^4)$. The same spectral argument guarantees only a window of order $|\lambda|^{-1}$ for this control. Consequently, a retained simple symmetry does not inherently entail the powerful cubic current-retention window exhibited by the optimally compensated symmetry-breaking construction.

\section{Resonant Obstructions and Thermodynamic Thermal Protection}
\label{main:sec:resonance_memory_merged}

The current in the optimally compensated Fermat model exhibits an $O(\lambda^3)$ force, while every simple one-site charge retains an $O(\lambda)$ force. Attempting to repair a broken charge by adding a many-body correction fails intrinsically at first order due to a fundamental physical obstruction. This failure does not depend on the nonlinear classification but originates directly from the degeneracy of the exchange energies. A many-body correction $Q_1$ contributes matrix elements $[H_0,Q_1]_{mn}=(E_m-E_n)(Q_1)_{mn}$ in an energy eigenbasis, meaning this contribution strictly vanishes within any degenerate energy block. Consequently, a charge force acting inside such a block cannot be canceled by simply redefining the charge. We demonstrate that the intrinsic conversion sector universally possesses this resonant component whenever it breaks the one-site generator, connecting the failure of exact local repair to the broader thermodynamic protection of thermal memory.

\subsection{The failure of exact first-order repair}
\label{main:sec:resonance}

Extending the cyclic example to the broader intrinsic conversion sector, we find that these first-order interactions couple symmetric and antisymmetric pair states. This sector is parametrized by a symmetric complex tensor $f^{abc}=f^{(abc)}$, defining the specific interaction $(T_f)^{ab}{}_{ij}=\epsilon_{ijm}f^{abm}+\epsilon^{abm}\bar f_{ijm}$ with $\epsilon_{012}=\epsilon^{012}=1$. The tangent $T$ features $f^{000}=f^{111}=f^{222}=1$ with all other entries strictly zero. For an arbitrary tensor $f$, we define $H_{0,L}=\sum_jP_{j,j+1}$ and $V_{f,L}=\sum_j(T_f)_{j,j+1}$. A formal repair $Q_L(q)+\lambda Q_1$ would subsequently require $[H_{0,L},Q_1]+[V_{f,L},Q_L(q)]=0$, where $q_1+q_2$ denotes $q\otimes\I+\I\otimes q$ acting on two adjacent sites.

\begin{theorem}[Resonant obstruction]\label{st:thm:resonance}
For every $L\ge3$, traceless Hermitian $q$, and allowed momentum $k=2\pi m/L$,
\begin{equation}
\inf_{Q_1}\norm{[H_{0,L},Q_1]+[V_{f,L},Q_L(q)]}_{\rm op}
\ge\frac{|\sin k|}{\sqrt{15}}
\norm{[T_f,q_1+q_2]}_\HS.\label{st:eq:resonance}
\end{equation}
The infimum includes arbitrary operators on the whole chain. Thus the first-order repair equation has a solution if and only if $[T_f,q_1+q_2]=0$.
\end{theorem}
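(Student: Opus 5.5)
The plan is to use the resonance mechanism described before the theorem. Write $W_L:=[V_{f,L},Q_L(q)]=\sum_j([T_f,q_1+q_2])_{j,j+1}$; this is a translation-invariant sum of a two-site operator $w:=[T_f,q_1+q_2]$. For any eigenvectors $\psi,\phi$ of $H_{0,L}$ with the same eigenvalue, $\langle\phi|[H_{0,L},Q_1]|\psi\rangle=0$ for every $Q_1$. Hence, for unit vectors in a common eigenspace $\mathcal E$,
\[
\norm{[H_{0,L},Q_1]+W_L}_{\rm op}\ge|\langle\phi|W_L|\psi\rangle|,
\]
uniformly in $Q_1$. The infimum is therefore bounded below by the norm of the compression $\Pi_{\mathcal E}W_L\Pi_{\mathcal E}$ for any eigenspace $\mathcal E$. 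The converse direction of the ``if and only if'' is immediate: if $w=0$ then $W_L=0$ and $Q_1=0$ solves the repair equation. So everything reduces to exhibiting, for each allowed $k$, a degenerate pair $\psi,\phi$ with $|\langle\phi|W_L|\psi\rangle|\ge|\sin k|\,\norm{w}_\HS/\sqrt{15}$.

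\textbf{Structure of $w$.} Since $T_f$ maps $\Sym^2V$ to $\Lambda^2V$ and back, and $q_1+q_2$ preserves both, $w$ is also purely off-diagonal between symmetric and antisymmetric pair states. Its $\Lambda^2\to\Sym^2$ block is minus the adjoint of its $\Sym^2\to\Lambda^2$ block. I will therefore test $W_L$ between a sector where every bond is symmetric and a sector with a single antisymmetric bond.

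\textbf{Choice of test vectors.} For $\psi$ I take a one-magnon state in the fully symmetric ($\Sym^LV$) sector: a background species with one flipped site, superposed with momentum $k$. Every state in $\Sym^LV$ has energy $L$, so $\psi$ can be any such state. For $\phi$ I take a momentum-$k$ state in the $SU(3)$ sector obtained by inserting one antisymmetric pair into a uniform background. In the exchange chain this sector is diagonalized by a one-particle Bethe/plane-wave ansatz, and its energy is $L-2+2\cos(\cdot)$ up to the appropriate shift. A key step is to check that momentum $k$ in the $\Lambda^2$-insertion sector contains an eigenvector degenerate with the energy-$L$ symmetric level. If that fails directly, I would instead use the $k\leftrightarrow\pi-k$ or $k\leftrightarrow-k$ degeneracy: pick a pair of one-excitation states of equal energy, and place $\psi$ in the sector $W_L$ maps from so that $\phi$ lies in the sector $W_L$ maps to.

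\textbf{Computing the matrix element.} I would then compute $\langle\phi|W_L|\psi\rangle$ explicitly. The sum over bonds $j$ produces a lattice Fourier factor. Because the relevant matrix element is antisymmetric under exchanging the two sites of the bond, the two contributions in which the flipped site sits to the left or to the right of the bond combine into $e^{ik}-e^{-ik}$. This is the origin of $|\sin k|$, and it explains why $k=0,\pi$ give no bound.

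\textbf{Main obstacle and the constant $\sqrt{15}$.} The remaining coefficient is a single entry of $w$ between a specific symmetric and a specific antisymmetric pair state, and it depends on the background species and on the flipped species. To convert it into the basis-free $\norm{w}_\HS$, I would maximize over these choices, or equivalently over a common one-site unitary rotation of the background, since $H_{0,L}$ commutes with it. I would then bound the maximum below by the average over $SU(3)$ via Schur orthogonality. The average of $|\langle\phi|W_L|\psi\rangle|^2$ is $\norm{w}_\HS^2$ divided by the dimension count of the relevant $\Sym^2\to\Lambda^2$ intertwiner space. I expect this count to produce the $1/15$ in the statement, plausibly as the dimension of the $\mathbf{15}$ in $\Sym^2V\otimes\overline{\Lambda^2V}$, so that $\max\ge\text{average}$ gives the stated $1/\sqrt{15}$.

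This step is the main obstacle. Two things must be established at once: the degeneracy of $\psi$ and $\phi$ for every $L\ge3$ and every allowed $k$, including small rings where the plane-wave ansatz has boundary subtleties, and the averaging identity with exactly that constant. The other steps are routine.
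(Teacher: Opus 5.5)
\textbf{Genuine gap: the proposed degenerate pair does not exist.} Your skeleton is the paper's: a matrix element between degenerate $H_{0,L}$ eigenstates kills $[H_{0,L},Q_1]$, the bond sum produces $\sin k$, and $SU(3)$ averaging with Schur orthogonality gives the constant. You rightly flag the choice of degenerate pair as the crux, and that is where the proposal fails.

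The energy-$L$ eigenspace of $H_{0,L}$ is exactly $\Sym^LV$. Each $P_{j,j+1}\le\I$, so $H_{0,L}\psi=L\psi$ forces $P_{j,j+1}\psi=\psi$ for every bond, and these swaps generate $S_L$. Hence no state outside $\Sym^LV$ is degenerate with your $\psi$. Inside $\Sym^LV$ the compression vanishes identically: with $\Pi$ the projector onto $\Sym^LV$, $\Pi w_{j,j+1}\Pi=\Pi P_{j,j+1}w_{j,j+1}P_{j,j+1}\Pi=-\Pi w_{j,j+1}\Pi$. Independently, $\Sym^LV$ has momentum zero and $W_L$ conserves momentum, so no $\sin k$ factor can arise. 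Relatedly, a momentum-$k\neq0$ one-magnon state is not in $\Sym^LV$; its energy is $L-2+2\cos k<L$.

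The fallbacks fail too. Matrix elements between momenta $k$ and $-k$ vanish by translation invariance except at $k=0,\pi$, where $\sin k=0$. And $k$, $\pi-k$ are not degenerate for the dispersion $2\cos k$.

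\textbf{The missing idea: flavor degeneracy at fixed momentum.} Take $\ket{k,1}$ and $\ket{k,2}$, the one-magnon states over $\ket{0\cdots0}$ with the flipped site in state $1$ or $2$.
\begin{itemize}
\item Both are exact eigenstates with energy $L-2+2\cos k$ for every $L\ge3$. This is a one-line computation with no Bethe-ansatz or boundary subtleties.
\item They lie in the same $SU(3)$ multiplet and the same momentum sector. So the ``symmetric-bond versus antisymmetric-bond sector'' heuristic is a red herring: two-site $P$-oddness of $w$ does not stop $W_L$ from connecting states within one multiplet.
\item Expanding bond by bond, the on-site terms cancel by $PwP=-w$. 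The two hopping terms give $(e^{ik}-e^{-ik})\bra{a0}w\ket{0b}$.
\item Writing $i[T_f,q_1+q_2]=T_{f_q}$, the off-diagonal element has modulus $2|\sin k|\,|f_q^{011}+\overline{f_q^{022}}|$.
\end{itemize}

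\textbf{Getting the constant.} You need that the force is again \emph{intrinsic}, with $\norm{[T_f,q_1+q_2]}_\HS=2\norm{f_q}$. This holds because the intrinsic subspace is invariant under common one-site conjugation. $P$-oddness alone would allow one-site components with a different averaging constant.

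Schur orthogonality on the ten-dimensional, non-self-dual $\Sym^3\mathbb C^3$, with the multiplicity-three normalization of $f^{011}$ and $f^{022}$, then gives
\begin{itemize}
\item $\int|f_U^{011}|^2dU=\int|f_U^{022}|^2dU=\norm f^2/30$;
\item $\int f_U^{011}f_U^{022}\,dU=0$.
\end{itemize}
Thus the $1/15$ is $2/30$, not the dimension of a $\mathbf{15}$; in fact $\Sym^2V\otimes\overline{\Lambda^2V}\cong\mathbf6\otimes\mathbf3=\mathbf{10}\oplus\mathbf8$ contains none. Choosing a rotation at least as good as the average gives exactly $|\sin k|\norm{[T_f,q_1+q_2]}_\HS/\sqrt{15}$.
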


No restriction of locality, range, or translation invariance is imposed on this attempted correction. For the specific Fermat tensor, every traceless Hermitian $q$ with $\norm q_\HS=1$ reliably satisfies the bound $\inf_{Q_1}\norm{[H_{0,L},Q_1]+[V_{f,L},Q_L(q)]}_{\rm op}\ge\sqrt{3/5}$ for any length $L\ge3$. Thus, no second-order Hamiltonian term, including the compensation $N$, can repair any of these charges through first order. A fixed finite-range bulk repair would inherently periodize to all sufficiently long rings and is likewise unequivocally excluded. For an unperturbed Hamiltonian $H_0=\sum_EEP_E$, the optimal first-order repair leaves exactly the degenerate-block part of a force $A$, mathematically expressed as $\inf_X\norm{[H_0,X]+A}_{\rm op}=\norm{\mathcal P(A)}_{\rm op}$ with the pinching operator $\mathcal P(A)=\sum_EP_EAP_E$. Off-diagonal energy blocks can be systematically canceled by division by their energy differences, whereas the pinching operation is contractive and robustly annihilates every commutator with $H_0$. By taking a uniform reference state and changing one site to either of the other two internal states, momentum superpositions yield two one-magnon flavors with the identical exact exchange energy $L-2+2\cos k$. This degeneracy securely persists after any common $SU(3)$ rotation, and matrix elements between these degenerate states reliably detect the resonant force.

\subsection{Local first-order repair of the invariant hierarchy}

While one-site charges rigidly resist exact repair, the exchange hierarchy also contains conserved densities that are invariant under a common $SU(3)$ rotation. Their physical response differs substantially because this additional invariance enables a bilocal deformation to act locally on each density. Consequently, the presence of broken one-site charges does not strictly determine whether the broader invariant hierarchy can be repaired. Setting $r_f=-iT_fP/2$ for an interval $I$, the bilocal operator $X_I=\sum_{i<j\in I}(r_f)_{ij}$ possesses a local commutator with every $SU(3)$-invariant density supported in $I$.

\begin{proposition}[First-order separation of charge repair]\label{st:prop:scalarrepair}
Let $C_\alpha^{(0)}=\sum_j a_{\alpha,j}$ be mutually commuting translation-invariant finite-range charges of the exchange chain with $SU(3)$-invariant local densities, including $C_2^{(0)}=H_0$. For a density $A_I$ supported in an integer interval $I$, we define $\mathcal D_f A_I=i[X_I,A_I]$ and $C_\alpha^{(1)}=\sum_j\mathcal D_f a_{\alpha,j}$.
The resulting corrections remain within the original density support intervals and satisfy the bulk identities
\begin{equation}
[H_0,C_\alpha^{(1)}]+[V_f,C_\alpha^{(0)}]=0,\qquad
[C_\alpha^{(0)},C_\beta^{(1)}]+[C_\alpha^{(1)},C_\beta^{(0)}]=0.
\label{st:eq:mutualrepair}
\end{equation}
For any fixed finite collection of these charges, the identities hold on all sufficiently long periodic chains. Hence they admit simultaneous local first-order repairs for every intrinsic $T_f$, whereas for the Fermat direction no nontrivial uniform one-site charge admits any periodic first-order repair.
\end{proposition}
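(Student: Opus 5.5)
The key identity to establish first is the local commutator $[X_I,A_I]$ for $SU(3)$-invariant $A_I$, together with how $V_f$ arises as a boundary term of $i[X,H_0]$. I would start from $r_f=-iT_fP/2$ and check that $i[(r_f)_{ij},P_{jk}+\dots]$ behaves like a bilocal (boost-type) generator. The exchange $P_{jk}$ permutes the tensor factors, so conjugating $(r_f)_{ij}$ by any product of transpositions in $I$ relabels its indices. Summing over all pairs $i<j\in I$ makes $X_I$ symmetric under permutations of $I$ up to the orientation of each pair. Because $T_fP$ is antisymmetric or symmetric under the swap $i\leftrightarrow j$ in a fixed way, $r_{ji}=\pm r_{ij}$ needs to be determined. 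The central computation is
\[
i\Bigl[\sum_{i<j}(r_f)_{ij},\,P_{k,k+1}\Bigr].
\]
Terms with both $i,j\ne k,k+1$ commute. Terms with exactly one index in $\{k,k+1\}$ pair up and cancel, since $P_{k,k+1}r_{ik}P_{k,k+1}=r_{i,k+1}$. Only the $(k,k+1)$ term survives, giving $i[r_{k,k+1},P_{k,k+1}]$, which should equal $\pm(T_f)_{k,k+1}$ after using $P^2=\I$ and the swap parity of $T_f$.

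Next I would prove that $i[X_I,A_I]$ is supported in $I$ for every $SU(3)$-invariant $A_I$. By Schur–Weyl duality, such $A_I$ lies in the span of permutation operators of $I$, so it suffices to treat products of $P$'s on $I$. The first step shows $i[X_I,P_{k,k+1}]=\pm (T_f)_{k,k+1}$, and Leibniz then keeps everything inside $I$, which gives the support claim. For the bulk identity, let $J\supset I$ be a larger interval. Since $X_J-X_I$ consists of pairs with at least one site outside $I$, the same pairing argument shows $[X_J-X_I,A_I]=0$: for each outside site $m$, $\sum_{i\in I}r_{im}$ commutes with $S_I$-invariant operators. I expect this to be the main obstacle, because it needs $\sum_{i\in I}(r_f)_{im}$ to commute with all permutations of $I$. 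That holds, but the orientation signs must be tracked, perhaps by writing $r_{ij}$ for $i<j$ uniformly. Granting this, on a long enough open window $W$ we get $\mathcal D_f a_{\alpha,j}=i[X_W,a_{\alpha,j}]$. Hence $C_\alpha^{(1)}=i[X_W,C_\alpha^{(0)}]$ in the bulk, with $V_f=\mp i[X_W,H_0]$ from the first step.

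The identities then follow from Jacobi. We have
\[
[H_0,C^{(1)}_\alpha]+[V_f,C^{(0)}_\alpha]=i[H_0,[X,C_\alpha^{(0)}]]\mp i[[X,H_0],C_\alpha^{(0)}]=\pm i[X,[H_0,C_\alpha^{(0)}]]=0,
\]
with the overall sign fixed once $V_f$'s sign is fixed; the sign of $\mathcal D_f$ can be adjusted to match. In the same way,
\[
[C_\alpha^{(0)},C_\beta^{(1)}]+[C_\alpha^{(1)},C_\beta^{(0)}]=i[X,[C_\alpha^{(0)},C_\beta^{(0)}]]=0.
\]
To make this rigorous on periodic chains, where $X$ is not globally defined, I would check each identity density-by-density. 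Every local term of the commutators involves finitely many overlapping densities, all inside a window of length at most the sum of the ranges. Within that window, the open-chain $X_W$ computation is exact, so the identities hold on all rings longer than about twice the maximal range. The final clause, that for the Fermat direction no nontrivial uniform one-site charge admits a periodic first-order repair, is immediate from Theorem~\ref{st:thm:resonance}. The Fermat $T$ breaks every nontrivial traceless $q$, so $[T,q_1+q_2]\ne0$ and the infimum is strictly positive.
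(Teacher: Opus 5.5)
Your locality step is sound. Deriving it from Schur--Weyl duality is a valid alternative to the paper's route: invariant $A_I$ is a combination of permutations of $I$, and for an external site $m$ the consistently oriented sum $\sum_{i\in I}(r_f)_{im}$ is $S_I$-symmetric. The paper's argument is shorter: it notes that these external terms sum to $\sum_{a,b}c_{ab}t^a_m\sum_{i\in I}t^b_i$, which needs only commutation of $A_I$ with the total generators on $I$.

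The gap is in passing from Jacobi to the ring identities. Your chain ending in $i[X,[H_0,C_\alpha^{(0)}]]=0$ needs two things at once: a single $X$ with $i[X,\cdot\,]=\mathcal D_f$ on every density, \emph{and} exactly commuting charges. No setting provides both. On the ring there is no such $X$. On an open window $W$ the truncated charges do not commute. For example, the truncated current satisfies $J^W=i[H_0^W,\sum_j j\,P_{j,j+1}]$, so it has no diagonal energy blocks and cannot commute with $H_0^W$ without vanishing. Thus $i[X_W,[H_0^W,C_\alpha^{(0),W}]]$ is a nonzero boundary operator, and ``the open-chain $X_W$ computation is exact'' does not give zero.

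What per-term Jacobi actually yields on the ring is that the left side equals $\sum_x\tau^x\mathcal D_fc_\alpha$. Here $\tau$ is translation and $c_\alpha=\sum_k[P_{k,k+1},a_{\alpha,0}]$ is the density of $[H_0,C_\alpha^{(0)}]$. The vanishing of $\sum_x\tau^xc_\alpha$ does not by itself pass through $\mathcal D_f$, because $\mathcal D_f$ is defined density by density rather than on the total.

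The missing idea is the paper's. Vanishing of the periodic sum means the word coefficients of $c_\alpha$ sum to zero on each translation orbit, so $c_\alpha=d-\tau d$ for some finite-range $d$. Because $c_\alpha$ is $SU(3)$-invariant, you may replace $d$ by its $SU(3)$ average. This step is essential: only for invariant $d$ is $\mathcal D_fd$ local and independent of the enclosing interval. Translation covariance then gives $\mathcal D_fc_\alpha=\mathcal D_fd-\tau\mathcal D_fd$, whose periodization vanishes once $L>2R$. Apply the same argument to $[C_\alpha^{(0)},C_\beta^{(0)}]$, together with the per-term derivation rule $\mathcal D_f[A,B]=[\mathcal D_fA,B]+[A,\mathcal D_fB]$, to get the second identity. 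The first identity is its case $\beta=2$.

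Also, the sign is not yours to adjust. $\mathcal D_f$ is fixed by the statement, and the opposite relative sign would leave $2[V_f,C_\alpha^{(0)}]$, which is generally nonzero. You need the one-line check $i[r_f,P]=\tfrac12(T_f-PT_fP)=T_f$, using $PT_fP=-T_f$. Your treatment of the Fermat clause via Theorem~\ref{st:thm:resonance} matches the paper.
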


Terms joining an external site to the interval combine cleanly into the action of a common $SU(3)$ generator on that specific interval. They therefore commute with an invariant density, ensuring that only pairs directly inside its support contribute. The Jacobi identity subsequently preserves the mutual commutation of the repaired charges throughout first order. A finite collection with maximum density range $R$ successfully periodizes for $L>2R$, and the bilocal generator itself need not periodize. This reveals that the resonant perturbation fundamentally commutes with the invariant charges inside exchange energy blocks but need not commute with the one-site $SU(3)$ generators. Because the resulting correction $\mathcal D_fA_I$ generally ceases to be invariant, the locality argument cannot be simply iterated, and second-order repair of the entire hierarchy is not globally guaranteed.

\subsection{Thermodynamic protection of the charge memory matrix}
\label{main:sec:memory}

An exact repair must perfectly cancel the charge force as an operator on the whole finite ring, whereas thermal memory pragmatically asks how much of a small initial charge bias remains after time evolution. These dynamic tests can yield different answers because the thermal estimate allows a nonzero force remainder as long as its mean square per site is small enough. We make this distinction quantitative for the same intrinsic perturbations in the infinite-temperature ensemble, ensuring the thermodynamic limit is taken strictly before weak coupling. We bias the infinite-temperature ensemble infinitesimally toward one uniform charge and evolve under the full perturbed Hamiltonian. For a real orthonormal basis of traceless Hermitian one-site operators obeying $\tr(q_aq_b)=\delta_{ab}$, we define the memory matrix $M_{ab,L}(t)=3\Tr[Q_L(q_a,t)Q_L(q_b)]/(L3^L)$ initialized at $M_L(0)=I_8$. A reduction in this operator norm cannot be artificially hidden by rotating the charge basis, making closeness to $I_8$ a rigorous guarantee of retention in every original direction. 

For a broken one-site charge, our protection mechanism utilizes a correction built from pairs of sites separated by at most a finite range $R$. Increasing $R$ beneficially reduces the variance per site of the uncancelled first-order force proportionally to $R^{-1}$, while the variance of the correction inevitably grows as $R$. The perturbation acting on this correction adds a further force cost containing terms of order $|\lambda|/\sqrt R$ and $\lambda^2\sqrt R$. Choosing an optimal range $R$ of order $|\lambda|^{-1}$ balances these competing terms while keeping the corrected charge close in thermal norm. Because a first-order force remainder survives at every finite range, this construction successfully controls thermal memory without artificially solving the forbidden exact repair equation.

\begin{theorem}[Thermodynamic thermal protection]
\label{main:thm:thermal}
Let $h_\lambda=P+\lambda T_f+\lambda^2N$, where $T_f$ is Hermitian intrinsic and $N$ is a fixed Hermitian two-site operator with scalar partial traces. Taking the thermodynamic limit first, as $\lambda\to0$ from either side,
\begin{equation}
\norm{M_\infty(t_\lambda)-I_8}_{\rm op}\longrightarrow0
\quad\text{if}\quad |\lambda|^{3/2}|t_\lambda|\longrightarrow0.
\label{main:eq:thermalwindow}
\end{equation}
The statement applies to both the raw and compensated Fermat paths. For these paths, uniformly over normalized one-site charges, the memory deficit at $t=s_0/|\lambda|$, with fixed $s_0$, is strictly $O(|\lambda|)$.
\end{theorem}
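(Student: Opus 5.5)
The plan is to control the memory deficit through a variational quasi-conserved charge. We then convert a thermal force bound into a deficit bound using the Cauchy--Schwarz inequality in the infinite-temperature Hilbert--Schmidt inner product $\langle A,B\rangle=\Tr[A^\dagger B]/(L3^L)$. For a normalized one-site $q$ we write $Q=Q_L(q)$, and we look for $\tilde Q=Q+\lambda Q_1^{(R)}$ with $\|\lambda Q_1^{(R)}\|_2=o(\sqrt L)$ and $\|i[H_L,\tilde Q]\|_2\le \sqrt L\,\eta(\lambda)$, all norms per $\sqrt L$.

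With these, $\|Q(t)-Q\|_2\le 2\|\tilde Q-Q\|_2+|t|\,\|[H_L,\tilde Q]\|_2$, since $\tilde Q(t)-\tilde Q=\int_0^t i[H_L,\tilde Q](s)\,ds$ and unitary evolution is isometric in this norm. Then
\[
|M_{ab}(t)-\delta_{ab}|\le 3\|Q_a(t)-Q_a\|_2\|Q_b\|_2/L .
\]
Uniformity over the unit sphere of $q$ gives the operator-norm statement for the $8\times8$ matrix, since $v^T(I-M)v$ is controlled by the same quantity for $q=\sum v_aq_a$.

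The construction of $Q_1^{(R)}$ follows the bilocal idea of Proposition~\ref{st:prop:scalarrepair}. The first-order force is $i[V_f,Q]=\sum_j i[(T_f)_{j,j+1},q_j+q_{j+1}]$. I would take $Q_1^{(R)}=\sum_{i<j,\;j-i\le R} w_{j-i}\,(g_q)_{ij}$, where $g_q$ is a two-site operator built from $[r_f,q_1+q_2]$ and $w_d$ is a linear taper, e.g.\ $w_d=1-d/R$. The commutator $[H_0,Q_1^{(R)}]$ is computed with the exchange algebra: $P_{j,j+1}$ moves the pair endpoints by one site. The result telescopes, as in the proof of the proposition, into the negative of the nearest-neighbour force plus a remainder built from discrete gradients $w_{d+1}-w_d=O(1/R)$ over $R$ separations and a boundary term at $d=R$. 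Because distinct bilocal words are orthogonal at infinite temperature, the per-site variance of this remainder is $O(R\cdot R^{-2})=O(1/R)$, while $\|Q_1^{(R)}\|_2^2/L=O(R)$.

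Next, $\lambda[V_f+\lambda N,\lambda Q_1^{(R)}]$ contributes forces of order $\lambda^2\sqrt R$ (per site, again by orthogonality and scalar partial traces of $N$). The $\lambda^2[N,Q]$ term contributes $O(\lambda^2)$. Altogether
\[
\eta\lesssim |\lambda|R^{-1/2}+\lambda^2R^{1/2}+\lambda^2,\qquad \|\tilde Q-Q\|_2/\sqrt L\lesssim|\lambda|R^{1/2}.
\]
Choosing $R\sim|\lambda|^{-1}$ gives $\eta\sim|\lambda|^{3/2}$ and closeness $O(|\lambda|^{1/2})$. The deficit is then $O(|\lambda|^{1/2}+|\lambda|^{3/2}|t|)$, which tends to $0$ under $|\lambda|^{3/2}t_\lambda\to0$; all bounds are $L$-independent, so the limit $L\to\infty$ is taken first, valid for $L>2R$.

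For the sharper $O(|\lambda|)$ claim at $t=s_0/|\lambda|$ on the Fermat paths, I would instead use the exact second-order expansion, $1-M(t)=\int_0^t(t-s)\,C_{\dot Q}(s)\,ds$ with $C_{\dot Q}$ the force autocorrelation. Its leading $\lambda^2$ part is computed under $H_0$ dynamics. The resonant component of the force yields $\lambda^2t^2\times$(degenerate weight per site), and the non-resonant part is integrable. I expect the degenerate (pinched) part, per site, to vanish in the thermodynamic limit like $1/L$, because one-magnon degeneracy has measure zero in the infinite-temperature weight. This leaves a contribution of order $\lambda^2 t\sim|\lambda|$.

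The main obstacle is the telescoping step: verifying that the tapered bilocal sum cancels the first-order force up to an $O(R^{-1/2})$ remainder for every intrinsic $T_f$ and every $q$. For non-invariant $q$, the external-site terms no longer combine into an $SU(3)$ generator action. I would first check this explicitly by the word-basis orthogonality computation for the Fermat tensor, then use linearity in $f$.
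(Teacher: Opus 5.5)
Your argument for the window claim is the paper's. Your $g_q=i[r_f,q_1+q_2]$ equals the paper's $r=-\tfrac{i}{2}SP$ with $S=i[T_f,q_1+q_2]$, your tapered sum is $X_{R,L}$, and your balance $|\lambda|R^{-1/2}+\lambda^2R^{1/2}+\lambda^2$ at $R\sim|\lambda|^{-1}$ is the bound $E_R(\lambda,t)$.

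The obstacle you flag is not real. $P_{j,j+1}r_{j,k}P_{j,j+1}=r_{j+1,k}$ holds for every two-site $r$, so the external pairs cancel up to weight differences without any $SU(3)$ invariance of $q$. The Proposition needs invariance only to commute $X_I$ with invariant densities. The leftover terms are the three-site operators $i(r_{j,k}-r_{j+1,k})P_{j,j+1}$ with coefficients of order $1/R$, not bilocal words. They are orthogonal because all their one-site partial traces vanish, which uses $\Tr_1r=\Tr_2r=0$. Checking the Fermat tensor and invoking linearity in $f$ would not cover all $f$ anyway.

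Two smaller points. First, $M(t)^{\mathsf T}=M(-t)$, so you need the bilinear bound on $v^{\mathsf T}(M-I_8)w$ that your entrywise Cauchy--Schwarz supplies, not just the quadratic form $v^{\mathsf T}(I-M)v$. Second, the $L$-uniform bound transfers to $M_\infty$ only after you establish the limiting correlations by locality at fixed $\lambda,R,t$.

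The genuine gap is the $O(|\lambda|)$ claim at $t=s_0/|\lambda|$.

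\emph{Why your route fails.} You evaluate the fixed-$(L,t)$ expansion $1-c=\lambda^2A(t)+O_{L,t}(\lambda^4)$ at a coupling-dependent time, where nothing controls the remainder. The route also rests on two unproved expectations:
\begin{itemize}
\item Integrability in time of the non-resonant force autocorrelation is a Green--Kubo-type statement that you do not prove and that the paper does not establish.
\item Your reason for the pinched part to vanish, namely rarity of one-magnon degeneracies, is not valid: $H_0$ has large $SU(3)$-multiplet degeneracies throughout its spectrum.
\end{itemize}
What a leading-order argument actually needs is a near-resonant bound $\nu_S([-\epsilon,\epsilon])=O(\epsilon)$. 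The paper derives this from the same cutoff identity, and even then it controls only $A(t)$, not the full dynamics.

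\emph{The missing observation.} The diagonal deficit is \emph{quadratic} in the deviation you already bounded. Tracial unitarity gives $1-c_{q,L}(t)=\norm{Q_L(t)-Q_L}_0^2/(2L\chi_q)\le E_R^2/2$. At $t=s_0/|\lambda|$ with $R=\lceil|\lambda|^{-1}\rceil$, every term of $E_R$ is $O(|\lambda|^{1/2})$. Hence the deficit is $O(|\lambda|)$ for the full Hamiltonian, uniformly in $q$.

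\emph{Where the Fermat restriction enters.} Your Cauchy--Schwarz estimate is only linear in the deviation, so off-diagonal entries of $M$ get only $O(|\lambda|^{1/2})$. For the Fermat paths, global Weyl symmetry under $X^{\otimes L}$ and $Z^{\otimes L}$, reflection composed with color inversion, and reality of the Hamiltonian together force $M_L=\operatorname{diag}(c_ZI_2,c_XI_2,c_{XZ}I_4)$. Therefore $\norm{M_L-I_8}_{\rm op}=\max_W(1-c_W)\le\tfrac12\sup_qE_R^2=O(|\lambda|)$.
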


The proof constructs an explicit corrected extensive charge using the triangular cutoff $X_{R,L}=\sum_j\sum_{d=1}^R(1-\frac{d}{R+1})r_{j,j+d}$ to define the protected observable $\widetilde Q_L=Q_L(q)+\lambda X_{R,L}$, where $S=i[T_f,q_1+q_2]$ and $r=-iSP/2$. For sufficiently large rings $L\ge2R+3$, the zero partial traces of $r$ make distinct pair supports perfectly orthogonal in the normalized tracial norm $\norm A_0^2=3^{-L}\Tr(A^\dagger A)$. The commutators of $X_{R,L}$ with the perturbing densities possess a variance per site of $O(R)$. Time integration under the full Hamiltonian $H_{\lambda,L}$ yields an explicit bound $E_R(\lambda,t)$ ensuring $\norm{Q_L(q,t)-Q_L(q)}_0/\sqrt{L\chi_q}\le E_R(\lambda,t)$ and $1-c_{q,L}(t)\le E_R(\lambda,t)^2/2$, where $\chi_q=\tr(q^2)/3$ and $c_{q,L}(t)=\Tr[Q_L(q,t)Q_L(q)]/(L3^L\chi_q)$. Taking $R=\lceil|\lambda|^{-1}\rceil$ provides the critical scaling bound $E_R=O(|\lambda|^{1/2})+O(|t|\,|\lambda|^{3/2})$. The exact repair obstruction remains mathematically intact because the necessary range diverges as $\lambda\to0$ and a first-order remainder perpetually survives at every finite range. When $S\ne0$, the same construction bounds the normalized thermodynamic spectral measure of the intrinsic charge force under the unperturbed exchange Liouvillian, yielding \begin{equation}
\nu_{S,\infty}([-\epsilon,\epsilon])\le(13/6+2/\sqrt3)\epsilon,\qquad 0<\epsilon\le1.
\label{main:eq:lowfrequency}
\end{equation} Therefore, finite-ring resonances do not generate a zero-frequency atom of finite thermal weight for this specific force.

\subsection{Simultaneous loss of every one-site memory at fixed strength}

Thermodynamic protection mitigates decay but does not imply perfect, infinite-time retention. We next interrogate whether every one-site charge direction exhibits a nonzero memory loss at a precisely common time while the macroscopic current remains more strongly protected. Resolving the full multi-dimensional matrix is paramount here because a one-dimensional population measurement alone would miss orthogonal one-site directions. For the Fermat paths, the entire memory matrix robustly assumes the exact block-diagonal form \begin{equation}
M_L(t)=\operatorname{diag}(c_ZI_2,c_XI_2,c_{XZ}I_4).
\label{main:eq:memoryblocks}
\end{equation} Here, the operators obey $X\ket a=\ket{a+1\bmod3}$ and $Z\ket a=\omega^a\ket a$ with $\omega=e^{2\pi i/3}$. Here $Z$ denotes the Weyl matrix, distinct from the diagonal control observable used in Fig.~\ref{main:fig:sharp}. Global Weyl symmetry eliminates cross-channel entries, while reflection combined with color inversion ensures the blocks remain real scalars.

For a normalized positive spectral measure, the second and fourth moments dictate the rigorous all-time inequalities \begin{equation}
m_2t^2/2-m_4t^4/24\le1-c(t)\le m_2t^2/2.
\label{main:eq:momentbounds}
\end{equation} The exact moment polynomials corresponding to these bounds produce one distinct common time at which all three operational blocks simultaneously contract.

\begin{proposition}[Complete memory contraction at fixed strength]
\label{main:prop:memoryloss}
For the compensated Fermat path parameterized by $u=\lambda^2$, we define $B(u)=8+98u+309u^2+783u^3$, generating the squared time $t_*^2=6(1+9u)/B(u)$ and the bound $d_*(u)=6u(1+9u)^2/B(u)$. For coupling $0<|\lambda|\le1/5$, uniformly in $L\ge5$ and in the thermodynamic limit,
\begin{equation}
\norm{M_L(t_*)}_{\rm op}\le1-d_*(u),\qquad
\frac{C_J(t_*)}{\chi_J}\ge1-\frac{\Gamma_Jt_*^2}{2}.
\label{main:eq:jointmemory}
\end{equation}
\end{proposition}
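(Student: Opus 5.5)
The plan is to reduce the operator-norm statement to three scalar autocorrelations and then apply the moment inequalities \eqref{main:eq:momentbounds} block by block. By the block-diagonal form \eqref{main:eq:memoryblocks}, $\norm{M_L(t)}_{\rm op}=\max\{|c_Z(t)|,|c_X(t)|,|c_{XZ}(t)|\}$. Each $c_\bullet(t)$ is the normalized infinite-temperature autocorrelation of a uniform Hermitian charge $Q_L(q)$. It is therefore the Fourier transform of a normalized positive spectral measure of the Liouvillian $[H_{\lambda,L},\cdot\,]$, so $|c_\bullet|\le1$ and only an upper bound on $c_\bullet(t_*)$ is needed.

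The lower half of \eqref{main:eq:momentbounds} gives $1-c\ge m_2t^2/2-m_4t^4/24$. Here $m_2=\norm{[H,Q]}_0^2/\norm{Q}_0^2$ and $m_4=\norm{[H,[H,Q]]}_0^2/\norm{Q}_0^2$ are computed per site. Suppose that uniformly over the three blocks $m_2\ge a(u)$ and $m_4\le b(u)$. Then $1-c\ge at^2/2-bt^4/24$ for all blocks at once. Choosing $t^2=6a/b$ maximizes the right side, giving $1-c\ge 3a^2/(2b)$. The stated constants correspond exactly to
\[
a(u)=4u(1+9u),\qquad b(u)=4u\,B(u),
\]
since then $6a/b=t_*^2$ and $3a^2/(2b)=d_*(u)$. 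The target $a$ is precisely $\Gamma_{\rm one}=4\lambda^2+36\lambda^4$ already quoted for the compensated Fermat path, read as the minimum of $m_2$ over normalized traceless $q$. The scalar-partial-trace orthogonality that yields $\Gamma_{\rm one}=\delta^2/3$, together with $\delta^2=12u+108u^2$, gives $m_2\ge a$ for every block.

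The main work, and the step I expect to be the obstacle, is the fourth moment. For each Weyl block representative $q\in\{Z,X,XZ\}$ (normalized), I will expand the double commutator $\sum_{j,k}[h_k,[h_j,Q_L(q)]]$ in the identity/traceless tensor-word basis. Its local terms have support at most four consecutive sites. As in the derivation of $\kappa_J$, grouping the terms by minimal support span and using orthogonality of distinct words in $\norm{\cdot}_0$ makes $m_4$ an exact polynomial in $u$, independent of $L$ for $L\ge5$ (wrap-around overlaps only begin at smaller $L$). I will obtain the three polynomials $m_4^{(Z)},m_4^{(X)},m_4^{(XZ)}$ by exact symbolic computation with $h_\lambda=P+\lambda T+\lambda^2N$. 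Odd powers of $\lambda$ vanish by the reflection/color-inversion symmetry, which lets me check each $m_4^{(\bullet)}\le 4uB(u)$. I expect $4uB(u)$ to be the largest of the three, or a clean polynomial majorant of them valid for $u\le1/25$; this is where the restriction $|\lambda|\le1/5$ should enter. The same coupling restriction guarantees $d_*(u)\in(0,1)$, so $c_\bullet(t_*)\le1-d_*$, and together with $c_\bullet\ge-1$ this bounds $|c_\bullet(t_*)|$ by $1-d_*$.

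Since all moments are $L$-independent for $L\ge5$ and $c_{\bullet,L}(t)\to c_{\bullet,\infty}(t)$ at fixed $t$ by the locality bounds already used for $C_J^\infty$, the bound passes to the thermodynamic limit. The current statement in \eqref{main:eq:jointmemory} is simply the spectral-positivity bound $1-C_J(t)/\chi_J\le\Gamma_Jt^2/2$ from Section~\ref{main:sec:retention}, evaluated at $t=t_*$, and it holds for all $L\ge5$ and in the limit.
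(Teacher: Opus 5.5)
Your reduction to three scalar channels, the lower bound $m_2\ge4u(1+9u)$ from $\Gamma_{\rm one}$, the passage to the thermodynamic limit, and the current half are all fine. The central step fails, however.

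\textbf{The uniform majorant is false.} The bound you hope for, $m_4\le4uB(u)$ uniformly over blocks, does not hold. $4uB(u)$ is exactly the fourth moment of the $XZ$ channel, which is also the channel minimizing $m_2$. The exact values for the other channels are $m_4^{(X)}=4u(8+114u+345u^2+1116u^3)$ and $m_4^{(Z)}=12u(8+26u+65u^2+75u^3)$. Both are strictly larger than $4uB(u)$ for every $u>0$, and the $Z$ one is three times larger at leading order. Pairing $\min m_2$ with $\max m_4$ therefore cannot produce $d_*$. At leading order it gives $3a^2/(2b)\approx u/4$, at a different optimizing time, instead of $d_*\approx3u/4$ at $t_*$.

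\textbf{What is needed instead.} Keep each channel's pair $(m_2,m_4)$ together and compare the full lower bounds from \eqref{main:eq:momentbounds} at the fixed common time $t_*$, which is the $XZ$ optimizer. There, the larger second moments of the $X$ and $Z$ channels outweigh their larger fourth moments. Explicitly, $\bigl(m_2t_*^2/2-m_4t_*^4/24\bigr)-d_*$ equals a positive factor times
\begin{itemize}
\item $32+408u+1197u^2+1701u^3$ in the $X$ channel, and
\item $16+172u+354u^2+1008u^3-4374u^4$ in the $Z$ channel.
\end{itemize}
The negative quartic coefficient in the $Z$ polynomial is where $u\le1/25$ is actually used, via $4374/25^4<16$. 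So you need the exact fourth-moment polynomials of all three channels, not a single majorant.

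\textbf{A second, smaller gap in the last step.} From $-1\le c_\bullet(t_*)\le1-d_*$ you cannot conclude $|c_\bullet(t_*)|\le1-d_*$. A correlation near $-1$ would violate the operator-norm claim. You need $c_\bullet(t_*)\ge0$, which follows from the upper half of \eqref{main:eq:momentbounds}. For $u\le1/25$ one has $t_*^2\le3/4$ and $m_2\le12u(1+u)\le312/625$. Hence $1-c_\bullet(t_*)\le m_2t_*^2/2<1$ in every channel, so $c_\bullet(t_*)>0$ and $|c_\bullet(t_*)|=c_\bullet(t_*)\le1-d_*$.
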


Figure~\ref{main:fig:time} explicitly displays these two dynamic bounds. For example, at $\lambda=0.1$ and $t_*\simeq0.851895$, all uniform one-site charge directions conclusively lose at least approximately $0.791\%$ of their thermal memory, whereas the energy current loss is tightly constrained to at most approximately $0.0171\%$. These constitute rigorous mathematical bounds rather than numerically sampled time-evolution curves, and at any fixed coupling, the guaranteed loss securely persists as the volume approaches infinity. Consequently, at a time tending to $\sqrt3/2$, every one-site direction universally loses at least $\tfrac34\lambda^2+O(\lambda^4)$ of its memory, while macroscopic current loss remains bounded at $O(\lambda^6)$. This represents a controlled comparison within a single ensemble, establishing a definitive hierarchy between macroscopic current retention and uniform extensive-charge memory loss.

\begin{figure}[t]
\centering
\includegraphics[width=0.66\linewidth]{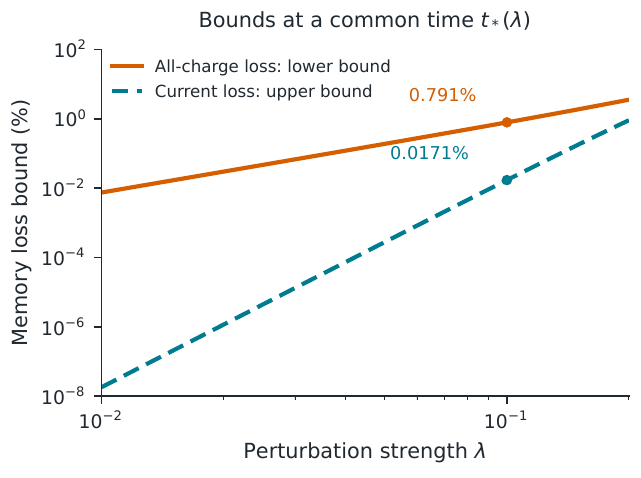}
\caption{Bounds on complete charge-memory loss and current loss for the
compensated Fermat model. At each $0.01\le\lambda\le0.2$, both curves
refer to the common time $t_*(\lambda)$ defined previously. The orange curve is the lower bound
$d_*(\lambda^2)$ on $1-\norm{M(t_*)}_{\rm op}$; the dashed teal
curve is the upper bound $\Gamma_Jt_*^2/2$ on current loss.
The marked values use $\lambda=0.1$, with losses shown as percentages.
These bounds hold uniformly for $L\ge5$ and in the thermodynamic
limit; they are not measured decay curves.
AI-assisted plotting code and source data are available in Ref.~\cite{code}; see
\hyperref[main:sec:computational]{Computational methods and AI assistance}.}
\label{main:fig:time}
\end{figure}

\FloatBarrier
\section{Nonlinear Proof Strategy and Parameter Uniformity}
\label{main:sec:proof}

Two complex features of the underlying physics necessitate a deeply nonlinear mathematical proof. The cyclic conversion process paradoxically passes the linear integrability test despite globally breaking every one-site charge, and an optimal compensation bound must comprehensively cover all nearby interactions, including elusive paths that approach singular directions as the coupling decreases. We elucidate how leveraging the same nonlinear constraints efficiently resolves both outstanding issues.

\subsection{Nonlinear selection and the two intrinsic types}

After meticulously removing scalar and exchange-scale transformations, the linear condition broadly allows arbitrary couplings between symmetric and antisymmetric pair states. The cyclic conversion embodies one such specific coupling. The proof intentionally initiates from this expansive full space without preemptively restricting the allowed vertex structures. Higher orders critically test whether these varied physical processes can simultaneously coexist in an exact integrable interaction. Hermiticity naturally provides stringent positive quadratic forms, and the ensuing cubic constraints reduce the possibilities to exactly two nonzero intrinsic mechanisms up to a common basis change. One mechanism flawlessly preserves two independent species populations, while the alternative allows neutral-pair conversion while concurrently preserving a single equally spaced charge. Because the cyclic conversion interaction breaks every possible variation of such charges, it admits no exact analytic Yang--Baxter completion through exchange.

A classification limited to first derivatives does not automatically follow a dynamic charge along a prescribed analytic family. For a nonzero intrinsic leading direction, exact spectral families densely fill the local solution set of the full governing equations, systematically controlling higher-order changes without forcing an artificial mathematical ansatz. Directions led by one-site terms require a fundamentally separate argument where the analytic reduction scrupulously respects the candidate symmetry. Variables that threaten to break this symmetry are subsequently eliminated utilizing strict invertibility or positivity arguments, firmly establishing robust charge continuation without requiring the classification of every neutral higher coefficient.

\subsection{Ensuring uniform estimates at singular directions}

Uniformity across the parameter space is paramount because a dynamic compensation scheme could inadvertently approach a singular direction as the primary coupling strength strictly decreases. The proof initially eliminates directions whose physical displacement is directly controlled by the residual. It then accurately bounds the symmetry defect within the remaining, highly reduced family manifold. Letting $\sigma$ denote its tangent amplitude and $\rho$ its residual norm, the foundational elementary estimate reads $\delta\le C_1\sigma$. Operating near a nonzero intrinsic direction or a direction led by an exchange-dependent one-site term, the full nonlinear equations yield the refined relation $\delta\le C\rho/\sigma^2$. Conversely, near a pure one-site difference, spectral positivity instead enforces $\delta^2\le C\rho/\sigma$ within the critical small-residual regime. In every evaluated case, the elementary estimate seamlessly closes the exact same overarching cubic bound, while models with large residuals already satisfy the limit directly. These comprehensive estimates govern actual analytic functions across the phase space. The topological compactness of the unit direction sphere subsequently supplies a single, universal bound that completely covers the entire interaction neighborhood.

\section{Discussion and Outlook}
\label{main:sec:discussion}

The cyclic conversion process provides a concrete physical lens through which to view our primary theoretical results. Because this mechanism changes all three species populations, it explicitly breaks every nontrivial one-site charge. Exact analytic Yang-Baxter integrability strictly excludes this direction, mathematically demanding that at least one weighted population must remain conserved in a suitable common basis. Approximate current conservation can nevertheless be robustly optimized without ever restoring that broken charge. By introducing a quadratic compensation, the specific model successfully reaches the cubic residual bound and dramatically changes the guaranteed current-memory window from inverse-square to inverse-cube scaling. The resulting neighborhood-wide inequality rigorously establishes that no nearest-neighbor compensation can improve the residual power while simultaneously retaining complete first-order local symmetry breaking. This fundamental limitation adds a necessary physical restriction to the constructive tools typically employed in weak integrability breaking~\cite{surace,schouten}.

Beyond local mathematical constraints, our analysis cleanly separates microscopic repair obstructions from macroscopic thermal memory. The exact repair obstruction questions whether the charge force can vanish identically as an operator, particularly inside each degenerate exchange eigenspace. Conversely, the thermal theorem investigates whether an infinitesimal uniform charge bias meaningfully persists after averaging over the infinite-temperature ensemble. For the intrinsic perturbations and second-order terms specified previously, systematically increasing the correction range successfully controls that thermodynamic average even though exact repair of a broken charge fundamentally fails on every finite ring. Taking the thermodynamic limit first, the full memory matrix approaches its initial value for the timescale $t=o(|\lambda|^{-3/2})$. This thermodynamic protection follows directly from explicit force-variance bounds and does not require any ad hoc assumption that the resonant states are rare. The same theoretical construction rigorously bounds the intrinsic force's low-frequency mass under $[H_0,\,\cdot\,]$ and definitively excludes an isolated atom at zero frequency.

To visualize these macroscopic consequences, the exact moment comparison in Fig.~\ref{main:fig:time} resolves all eight one-site directions within the exact same ensemble as the current for the compensated model. At a common time of order one, the collective memory loss of these charges is at least $\tfrac34\lambda^2+O(\lambda^4)$, while the corresponding current loss is tightly constrained to at most $O(\lambda^6)$. Together with the current half-retention and energy-spreading bounds, this stark dynamic contrast yields a direct finite-time physical consequence of the underlying algebraic conservation constraints.

While these results establish a robust framework, the foundational hypotheses remain essential for defining the strict boundaries of the theory. Stability is definitively uniform for nearby Hermitian two-site densities, whereas thermodynamic thermal protection additionally requires an intrinsic tangent combined with a scalar-marginal second-order term. Exact charge continuation inherently assumes a jointly analytic regular additive Yang-Baxter realization. Although our classification fully exhausts first derivatives and arcs with nonzero intrinsic leading terms, it leaves the neutral higher coefficients of pure one-site leading arcs open for future exploration. Uniform fields and alternative spectral charts are formally distinguished in Appendixes~\ref{fs:app:uniformfield}, \ref{fs:sec:33vertex}, and \ref{fs:app:elliptic}. Because the fixed-strength memory-loss bound naturally vanishes at weak coupling, these results do not currently identify an exact asymptotic relaxation rate or an infinite-time Drude weight. The force spectrum under $H_0$ similarly supplies no corresponding long-time statement for the compensated current force. Consequently, determining the fixed fractional loss of complete charge memory before the current guarantee ends remains a compelling open dynamical question, featuring a possible power-law scale restricted to $|\lambda|^{-p}$ with $3/2\le p<3$. Identifying effective uniform constants and extending these optimal constraints to larger local dimensions constitute the next critical steps for understanding thermal protection mechanisms when exact finite-size repair tests fail.

\begin{acknowledgments}
This work was supported by the Science and Technology Development Project
of Jilin Province (Grant No.\ 20250102032JC).
GPT Astra (OpenAI; model identifier gpt-6-astra) assisted with manuscript
organization, drafting, and LaTeX preparation. Its outputs were revised
under author direction; research and figure assistance is described in
\hyperref[main:sec:computational]{Computational methods and AI assistance}.
\end{acknowledgments}

\section*{Author contributions}
\textbf{Hanbing Liang:} Conceptualization, Project administration, Validation,
Writing -- original draft.
\textbf{Fujun Liu:} Validation, Writing -- review \& editing.

\section*{Competing interests}
The authors declare that they have no competing interests.

\section*{Data and code availability}
All definitions, tensor identities, analytic constructions, and proofs
supporting the results are provided in this article and its appendixes.
Auxiliary verification code, figure source data, plotting scripts, and
execution instructions are publicly available in version 0.3.0 of
\emph{Qutrit Forced Symmetry}~\cite{code}, which accompanies this article.
The response curves
evaluate analytic expressions and do not use fitted or simulated data.

\section*{Computational methods and AI assistance}
\label{main:sec:computational}
The classification and analytic continuation proofs in
Appendixes~\ref{fs:app:linearproof}--\ref{fs:app:continuation} supply
the input to the uniform error estimate in Appendix~\ref{main:app:stability}.
The response bounds use the periodic operator-word identity in
Appendix~\ref{main:app:norm} and the finite contractions in
Appendixes~\ref{main:app:fermat} and \ref{main:app:controls}.
Separate arbitrary-length proofs establish the resonance obstruction
(Appendix~\ref{main:app:resonance}) and the thermal support estimates
(Appendix~\ref{main:app:thermal}). Discrete symmetry, the displayed
coefficient Gram matrices, and spectral positivity give complete-memory
contraction in Appendix~\ref{main:app:memory}. All finite coefficients
follow from the explicit tensor and coefficient-extraction rules.

Symbolic calculations assisted the development of these formulas.
The repository programs independently check the displayed tensor
identities, complete spectral Yang--Baxter equations, density maps,
removable limits, residual polynomials, and Gram spectra. Further finite
checks test intrinsic charge-action conventions and periodic magnon
closing-bond factors. The programs are cross-checks of the written
derivations; their output is not an additional premise of a theorem.
Scientific plots use NumPy and Matplotlib to evaluate the exact
expressions. Figure~\ref{main:fig:time} compares rigorous memory-loss
bounds, with no dynamics fitting or finite-size extrapolation.

GPT Astra (OpenAI; model identifier gpt-6-astra) assisted with literature
searches, symbolic calculations and verification code, proof development,
and scientific plotting code. The same model was used in separate sessions for additional
automated critical checks. The authors directed the research questions,
scope, and revision tasks. Outputs were assessed through explicit
derivations, exact algebraic checks, and comparison with primary sources;
the arguments and the scope of executable checks are documented in the
article and repository documentation. Responsibility for scientific
judgments, mathematical claims, and the accuracy of the final manuscript
rests with the authors.
\FloatBarrier
\appendix
\section{Exact charge theorem and complete first-order classification}
\label{fs:app:linear}
\label{fs:sec:theorem}

Let \(\mathcal J_{\mathrm H}(P)\) denote the first derivatives of the
jointly analytic Hermitian Yang--Baxter families specified in
Sec.~\ref{main:sec:setting}, after removing the scalar and scale directions
\(\I,P\). This is a set of velocities of actual analytic families.
A scalar spectral factor and an analytic spectral rescaling restore
\(a\I+bP\); both operations preserve every total on-site charge.
No further entry or symmetry restriction is imposed on the density.

\subsection{Charge theorem and tensor coordinates}

Appendix~\ref{fs:app:continuation} proves Theorem~\ref{fs:thm:main} for the given density
arc, including every scalar and scale term. The unitary acts identically
on all sites; it may depend on the model parameter. The assertion is local
in \(\tau\) and does not prescribe a charge common to different arcs.

\begin{corollary}[First-order charges and fixed-charge realizations]
\label{fs:cor:firstcharge}
For every \(K\in\mathcal J_{\mathrm H}(P)\), there is a nonzero
traceless Hermitian on-site operator \(q\) such that
\[
[K,q\otimes\I+\I\otimes q]=0.
\]
For every \(K\in\mathcal J_{\mathrm H}(P)\), one may choose a realizing
jointly analytic regular additive Yang--Baxter family with Hermitian
density for which a nonzero traceless Hermitian \(q\) is fixed:
\[
[\check R(\tau,u),q\otimes\I+\I\otimes q]=0
\quad\text{for all }(\tau,u)\text{ near }(0,0).
\]
\end{corollary}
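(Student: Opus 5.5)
The first claim follows by differentiating Theorem~\ref{fs:thm:main} at $\tau=0$. Take $K\in\mathcal J_{\mathrm H}(P)$ and a realizing arc $h(\tau)=P+\tau K'+O(\tau^2)$, where $K'$ differs from $K$ only by a combination of $\I$ and $P$. Theorem~\ref{fs:thm:main} supplies $q(\tau)=U(\tau)q_0U(\tau)^\dagger$ with $U(0)=\I$ and $q_0\neq0$ traceless Hermitian. Write $U(\tau)=\I+\tau A+O(\tau^2)$ with $A$ anti-Hermitian, so that $q(\tau)=q_0+\tau[A,q_0]+O(\tau^2)$. At order $\tau^0$ the commutation identity is automatic, since $P$ commutes with every $q\otimes\I+\I\otimes q$. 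At order $\tau^1$ it reads
\[
[K',q_0\otimes\I+\I\otimes q_0]+[P,[A,q_0]\otimes\I+\I\otimes[A,q_0]]=0 .
\]
The second term vanishes for the same reason, and $\I,P$ commute with total charges. Hence $[K,q_0\otimes\I+\I\otimes q_0]=0$, and $q=q_0$ proves the first statement. The point is that no simple-spectrum or nonvanishing-derivative hypothesis is needed, because the constant spectrum of $q(\tau)$ guarantees $q_0\neq0$.

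For the second claim I would conjugate away the rotation of the charge. Define $\check R'(\tau,u)=(U(\tau)^\dagger\otimes U(\tau)^\dagger)\,\check R(\tau,u)\,(U(\tau)\otimes U(\tau))$.
\begin{itemize}
\item The spectral Yang--Baxter equation survives, since a common one-site conjugation acts as $U^{\otimes3}$ on three sites.
\item Regularity $\check R'(\tau,0)=\I$ and joint analyticity survive, because $U$ is real-analytic in $\tau$ and independent of $u$.
\item The new density is $h'(\tau)=U^{\dagger\otimes2}h(\tau)U^{\otimes2}$. It is Hermitian for real $\tau$ because $U$ is unitary.
\item Its velocity is $K'+[P,\,\cdot\,]$-type terms $[\,h(0),\cdot\,]$ built from $A\otimes\I+\I\otimes A$. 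These commutators vanish because $P$ commutes with the total generator. So $h'(\tau)=P+\tau K'+O(\tau^2)$ and the family still realizes $K$.
\end{itemize}
Conjugating the identity \eqref{fs:eq:exactcharge} shows that $h'(\tau)$ commutes with the fixed charge $q_0\otimes\I+\I\otimes q_0$.

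The remaining and main step is to lift the conservation law from the density to $\check R'$ itself. Only $[h',q_0\otimes\I+\I\otimes q_0]=0$ is given, which is the $u$-derivative at $u=0$, while the claim requires commutation for all $(\tau,u)$. For fixed $\tau$ I would let the one-parameter group $e^{i\theta Q}$, with $Q=q_0\otimes\I+\I\otimes q_0$, act by conjugation on the solution $\check R'(\tau,\cdot)$. Each conjugate is again a regular analytic solution with the same density $h'(\tau)$. I would then invoke uniqueness of the regular additive solution with a prescribed density. The natural route is the standard recursion obtained by differentiating the Yang--Baxter equation in $v$ at $v=0$ (the Sutherland equation). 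Its higher $u$-coefficients are determined by $h$ modulo the usual scalar-factor and reparametrization ambiguities.

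If that uniqueness holds only up to a scalar factor $f(\tau,u)$ and a rescaling of $u$, I would argue as follows. The group orbit is continuous and fixes $h'$, and both ambiguities are pinned down by the density. This forces $\theta$-independence, which gives $[\check R'(\tau,u),Q]=0$.

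Should uniqueness fail in general, my fallback is to average. I would replace $\check R'$ by its average over the closure of $\{e^{i\theta Q}\}$, using the compact torus generated by $q_0$. However, averaging does not preserve the cubic Yang--Baxter equation. So the fallback would instead need the paper's continuation construction in Appendix~\ref{fs:app:continuation}, which builds the exact spectral families and can be chosen to respect the selected charge.
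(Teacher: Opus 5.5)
Your first part is correct but takes a different route from the paper. You differentiate the rotating charge of Theorem~\ref{fs:thm:main} at $\tau=0$; the paper uses the same computation later, for the drift bound. The paper's proof of this corollary instead reads the charge off the first-order classification, Theorem~\ref{fs:thm:classification}. For $f\ne0$, every element of the abelian stabilizer of $(f,\bar f)$ also commutes with $A$ and $B$. For $f=0$, the commuting Hermitian pair $A,iB$ has a common Cartan torus. Your version needs the full nonlinear continuation theorem; this is not circular, since its proof does not use the corollary. The paper's version needs only the necessary first-order conditions, and it also gives the exact number of charges.

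For the second part, the paper simply invokes the explicit realizations of Appendix~\ref{fs:app:lifts}: the triangle arc, the Izergin--Korepin secant arc, then the $B$-twist and $A$-telescope. These commute with a fixed torus or $U(1)$ for all $(\tau,u)$ by construction. Your conjugation by $U(\tau)^{\otimes2}$ followed by a uniqueness argument is a different and stronger route. Uniqueness does hold, so your fallbacks are unnecessary. The coefficient of $u^n$ in the $v$-derivative of the Yang--Baxter equation at $v=0$ expresses $(n+1)\Delta R_{n+1}$ through lower coefficients, and $\ker\Delta=\C\I$. Hence two regular additive solutions with the same density differ only by a scalar factor $f(u)$ with $f(0)=1$ and $f'(0)=0$; no reparametrization of $u$ survives once the density is fixed.

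The gap is your last step. The scalar factor is \emph{not} pinned down by the density: every such $f$ preserves regularity, the Yang--Baxter equation, and the density. Continuity of $\theta\mapsto f_\theta$ with $f_0=1$ also does not force $f_\theta\equiv1$. You need an additional invariant of conjugation. Taking traces in $e^{i\theta Q}\check R'e^{-i\theta Q}=f_\theta\check R'$ gives $f_\theta\tr\check R'=\tr\check R'$. Since $\tr\check R'(\tau,0)=9$, this forces $f_\theta=1$ near $(0,0)$, and differentiating in $\theta$ then yields $[\check R'(\tau,u),Q]=0$.

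With this repair, your argument shows that every realizing family, not just a chosen one, carries the fixed charge after the analytic basis change. The averaging fallback fails, as you say. The fixed-charge families you would otherwise fall back on are in Appendix~\ref{fs:app:lifts}, not in the continuation appendix.
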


The full linearized solution space of the Reshetikhin condition stated in
Sec.~\ref{main:sec:setting}, after the two normalizations, is
\begin{equation}
\mathcal T=\{K:PKP=-K\}
\cong\slthree\oplus\slthree\oplus S^3V\oplus S^3V^*.
\label{fs:eq:chart}
\end{equation}
It has complex dimension \(36\), and its Hermitian real form has
real dimension \(36\). Subsection~\ref{fs:app:linearproof} proves this statement
directly in the full 81-dimensional two-site matrix space.

Write the coordinates as \((A,B;f,g)\), where \(A,B\) are traceless
one-site matrices and \(f,g\) are completely symmetric cubic tensors.
With \(\varepsilon_{012}=\varepsilon^{012}=1\), set
\begin{align}
K&=\Tel(A)+\Hel(B)+K_{f,g},\label{fs:eq:decomposition}\\
\Tel(A)&=A\otimes\I-\I\otimes A,\nonumber\\
\Hel(B)&=P(B\otimes\I)-(B\otimes\I)P,\nonumber\\
(K_{f,g})^{ab}{}_{ij}
&=f^{abc}\varepsilon_{cij}+\varepsilon^{abc}g_{cij}.
\label{fs:eq:tensor}
\end{align}
Repeated indices are summed from \(0\) to \(2\). We use the polynomial
convention \(f(x)=f^{ijk}x_ix_jx_k\), so that the coefficient of
\(x_0x_1x_2\) is \(6f^{012}\). Hermiticity is precisely
\begin{equation}
A^\dagger=A,\qquad B^\dagger=-B,\qquad g=\bar f.
\label{fs:eq:realform}
\end{equation}
The bar means componentwise conjugation in an orthonormal basis.
The tensor formula is equivariant under common \(SU(3)\) basis changes;
the scalar phase of a \(U(3)\) basis change acts trivially on the density.

The Lie algebra acts on each upper index of \(f\), and with the opposite
dual action on each lower index of \(g\). For example,
\[
(A\cdot f)^{abc}
=A^a{}_r f^{rbc}+A^b{}_r f^{arc}+A^c{}_r f^{abr}.
\]
Define the simultaneous stabilizer
\[
\Stab(f,g)=
\{Z\in\slthree:Z\cdot f=0,\ Z\cdot g=0\}.
\]
The cubic covariant \(E_{70}=(E_+,E_-)\), defined explicitly in
Appendix~\ref{fs:app:covariant}, has 35 components in each of two dual
irreducible representations. In the Hermitian real form the two equations
are complex conjugates.

\subsection{Complete criterion and intrinsic normal forms}

\begin{theorem}[Complete Hermitian first-order classification]
\label{fs:thm:classification}
In the coordinates \eqref{fs:eq:chart}--\eqref{fs:eq:realform},
\begin{equation}
\mathcal J_{\mathrm H}(P)=
\left\{(A,B;f,\bar f):
\begin{array}{l}
E_{70}(f,\bar f)=0,\\
A,B\in\Stab(f,\bar f),\quad[A,B]=0
\end{array}\right\}.
\label{fs:eq:classification}
\end{equation}
Every element of the right-hand side has a jointly analytic realization
by a regular additive Yang--Baxter matrix with Hermitian density.

For \(f\ne0\), a common on-site unitary transformation puts \(f\) into
exactly one of the following two types:
\begin{align}
\text{triangle:}\qquad
&f=\alpha x_0x_1x_2,\qquad \alpha\in\C^\times;
\label{fs:eq:triangle}\\
\text{secant:}\qquad
&f=3c\,x_2(2x_0x_1+x_2^2),\qquad c\in\R^\times.
\label{fs:eq:secant}
\end{align}
Their connected compact stabilizers are respectively the diagonal
Cartan torus \(U(1)^2\) and
\begin{equation}
\{\operatorname{diag}(e^{i\theta},e^{-i\theta},1):
\theta\in\R\}\cong U(1).
\label{fs:eq:arithmetic}
\end{equation}
The spaces of traceless Hermitian on-site generators commuting with
\(K\) have dimensions two and one, respectively.
The two nonzero types are disjoint. Their realizations can be chosen
to preserve these groups. For \(f=0\), the only condition is
\([A,B]=0\), and the realization preserves a common diagonal torus
of \(A\) and \(B\).
\end{theorem}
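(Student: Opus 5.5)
The proof has two directions: \emph{necessity}, that every first derivative satisfies the conditions in \eqref{fs:eq:classification}, and \emph{sufficiency}, that every such tuple is realized by an analytic Yang--Baxter family. The normal-form classification of \(f\) is then a separate orbit computation.

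For necessity, I expand the Reshetikhin condition \(F(h(\tau))=\Delta X(\tau)\) along a Hermitian arc \(h(\tau)=P+\tau K+\tau^2K_2+\tau^3K_3+\dots\). I assume \(K\in\mathcal T\) has already been removed of its \(\I,P\) components and written as \(K=\Tel(A)+\Hel(B)+K_{f,\bar f}\) via \eqref{fs:eq:chart}--\eqref{fs:eq:realform}.
\begin{itemize}
\item At order \(\tau^n\) the equation reads \(DF_P(K_n)+\Phi_n(K,\dots,K_{n-1})=\Delta X_n\). The freely chosen \(K_n\) only contributes through the image of \(DF_P\).
\item The first step is therefore to compute \(\operatorname{im}DF_P+\{\Delta X\}\) in the 729-dimensional three-site space. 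I would then fix an \(SU(3)\)-equivariant complement \(W\) and its projection \(\pi_W\).
\item At second order, \(\pi_W\Phi_2(K,K)=0\) is a quadratic \(SU(3)\)-covariant in \((A,B,f,\bar f)\). I decompose it into isotypic pieces. The piece pairing \(A\) with \(B\) should give \([A,B]=0\), and the pieces pairing \(A\) or \(B\) with \(f,\bar f\) should give \(A\cdot f\) and \(B\cdot f\).
\item For the Hermitian real form these last pieces should combine into a positive form. This is where the conditions \(A,B\in\Stab(f,\bar f)\) would come from, rather than merely from their vanishing on some subspace.
\item After solving the second-order equation for \(K_2\), up to the kernel \(\mathcal T\), the third-order projection \(\pi_W\Phi_3\) should contain the covariant \(E_{70}(f,\bar f)\). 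I must check that this component is independent of the unconstrained part of \(K_2\). I plan to do that by showing the cross term \(\pi_W D^2F_P(K,K_2^{\mathcal T})\) vanishes on the relevant isotypic component, using the stabilizer conditions already obtained.
\end{itemize}

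For sufficiency, I build explicit families.
\begin{itemize}
\item When \(f=0\) and \([A,B]=0\), I conjugate \(\check R_0(u)=\I+uP\) by one-site exponentials in a common diagonal torus of \(A\) and \(B\). An additive twist by \(e^{u\tau A}\) produces \(\Tel(A)\), and a \(u\)-independent torus twist of the form \((e^{\tau B}\otimes\I)\check R_0(e^{-\tau B}\otimes\I)\)-type yields \(\Hel(B)\). The torus invariance is manifest in this construction.
\item For the triangle type I use the \(\beta\)-deformed 15-vertex \(\check R\)-matrix of Leigh--Strassler type \cite{mansson}, which is Cartan-invariant.
\item For the secant type I use a 19-vertex matrix of Izergin--Korepin type with \(U(1)\) charge \(\operatorname{diag}(1,-1,0)\).
\item The allowed \(A,B\) in the stabilizer are then added by the same twists, since they commute with the preserved group. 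The Yang--Baxter equation for each family can be verified directly, and the density's derivative reproduces \(K_{f,\bar f}\) after normalization.
\end{itemize}

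For the normal forms, I analyze \(E_{70}(f,\bar f)=0\) for a complex ternary cubic modulo \(SU(3)\), not \(SL_3\), because \(\bar f\) enters. I would first use \(SL_3\) invariants, namely the Aronhold invariants and the Hessian, to show that \(E_{70}=0\) forces the zero locus of \(f\) to be singular. I expect it to be either a triangle of lines or a conic together with a tangent line, which is the secant form \(x_2(2x_0x_1+x_2^2)\).

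I would then reduce the \(SL_3\) representative to the unitary normal form. The Hermitian equation \(E_+=\overline{E_-}\) should force the \(SL_3\) element to be unitary up to the torus. This fixes the triangle to \(\alpha x_0x_1x_2\) with \(\alpha\in\C^\times\), and fixes the secant to a real multiple \(3c\,x_2(2x_0x_1+x_2^2)\). I would absorb the phase of \(c\) with a diagonal unitary.

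Disjointness follows because the stabilizer dimensions, two versus one, are unitary invariants. The stated charge dimensions are the Hermitian parts of these stabilizers.

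I expect the main obstacle to be this last unitary reduction, together with confirming that \(E_{70}\) vanishes on no other orbit, such as a conic plus a secant line or a cuspidal cubic. My first attempt would be a moment-map argument: pick the representative in each \(SL_3\)-orbit minimizing \(\|f\|\), then show that \(E_{70}=0\) is equivalent to the Kempf--Ness condition plus the singularity condition. A fallback is direct evaluation of \(E_{70}\) on Weierstrass-type representatives of each orbit of singular cubics.
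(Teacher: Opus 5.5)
\paragraph{Gap 1: where $[A,B]=0$ comes from.} Your necessity argument puts the commutator condition at the wrong order, and as written nothing in it produces that condition.

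The second-order equation imposes nothing on pure dressing data. Take $K_p=\Tel(A)+\Hel(B)$ with $C=[A,B]$ arbitrary. The even correction $N_p=\tfrac12\Tel(B)^2P-\tfrac12(C\otimes\I+\I\otimes C)$ satisfies $DF_P(N_p)+\Phi_2(K_p,K_p)=-\Delta(K_p^2P)$, where $\Phi_2(K,K)=[\tau^2]F(P+\tau K)$. So the $A$--$B$ piece you expect to give $[A,B]=0$ vanishes under $\pi_W$.

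The only second-order conditions are $A\cdot f=B\cdot f=0$ and their duals. They follow directly from functionals vanishing on $\operatorname{im}DF_P+\operatorname{im}\Delta$; for example $Y\mapsto Y_{000,102}$ gives $2((A+B)\cdot f)^{000}$. One then uses that the orbit of the $000$ component spans $(S^3V)^*$. No positivity is needed at this step.

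The commutator first appears at third order, and only in the weaker form $[B,[A,B]]=0$. The ordered product $\mu_3$ annihilates $\operatorname{im}\Delta$ and gives $\mu_3F(h)=[\mu_2(h^2),\mu_2(h)]$ with $\mu_2(U)=\Tr_2(UP)$. Once the stabilizer relations hold, its $\tau^3$ coefficient is $-18[B,[A,B]]$, independently of the free odd part of $K_2$.

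Over $\C$ this does not imply $[A,B]=0$; take $A=E_{01}$, $B=E_{12}$. The argument closes only on the Hermitian real form. There $C=[A,B]$ is Hermitian, and $0=\tr(A[B,C])=\tr(C^2)=\norm{C}_\HS^2$. The positivity you assigned to the stabilizer step is needed here instead.

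\paragraph{Gap 2: the normal-form strategy fails.}
\begin{enumerate}
\item \emph{The secant solution is not a Kempf--Ness point.} Its contraction $M^a{}_i=f^{amn}\overline{f^{imn}}$ is $c^2\operatorname{diag}(2,2,11)$. The traceless part of $M$ is the $SU(3)$ moment map, and here it is nonzero. In fact the $SL_3$-orbit of $6cx_0x_1x_2+3cx_2^3$ is not closed: $\operatorname{diag}(t,t,t^{-2})$ sends it to $6cx_0x_1x_2+3ct^{-6}x_2^3$, which degenerates to the triangle. So this orbit contains no norm minimizer, and the criterion ``$E_{70}=0$ iff Kempf--Ness plus singular'' would discard one of the two answers.
\item \emph{The geometry is misidentified.} The line $x_2=0$ meets $2x_0x_1+x_2^2=0$ at the two distinct points $[1:0:0]$ and $[0:1:0]$. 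So the secant form is a conic plus a \emph{secant} line, which is exactly the orbit you list for exclusion. Your intended target, conic plus tangent line, is not a solution.
\item \emph{$SL_3$ tools cannot decide the equation.} $E_{70}(f,\bar f)$ is covariant only under $SU(3)$. Neither $SL_3$-invariants of $f$ nor evaluation on one representative per $SL_3$-orbit can decide it. Inside a single $SL_3$-orbit the solutions are cut out by further conditions on the unitary frame. For instance, in the slice $6ux_0x_1x_2+vx_2^3$ with $v\ne0$ they require $u\in\R$ and $|v|=3|u|$.
\end{enumerate}

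The paper avoids orbit theory altogether:
\begin{enumerate}
\item It moves a maximum of $|f|$ on the unit sphere to $e_0$, giving $f^{000}=1$ and $f^{001}=f^{002}=0$.
\item The transverse quadratic part then has Takagi values $0\le s\le r\le 1/2$.
\item It solves explicit components of $E_+$ in this chart, which covers every nonzero tensor.
\end{enumerate}

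\paragraph{Minor gap in sufficiency.} The $\beta$-deformation is a pure phase twist, so its tangent has $\alpha\in i\R$ only. The phase of $\alpha$ is an $SU(3)$-invariant up to sign, so $\operatorname{Re}\alpha\ne0$ cannot be reached by a basis change. It requires the anisotropic fifteen-vertex family, with entries $\pm\kappa/3$ in the blocks and $\sqrt{1+\kappa^2}$ on equal-state pairs.
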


Discrete basis transformations can identify different parameter values
within either normal-form type.
The names refer to the factorization of the cubic \(f\): three
distinct linear factors form a triangle, while the secant form
is a line meeting a nondegenerate conic at two distinct points.
For \(f\ne0\), the stabilizers in the theorem are abelian, so their
one-site dressing data commute automatically. The commutator condition
is essential on the pure dressing locus \(f=0\).

For completeness, the last necessary conditions obtained from the
third-order projections, after the second-order stabilizer relations,
are
\begin{equation}
E_{70}(f,\bar f)=0,\qquad [B,[A,B]]=0.
\label{fs:eq:necessarymain}
\end{equation}
On the Hermitian real form, $C=[A,B]=C^\dagger$, so the second condition
gives the positive trace identity
\begin{equation}
0=\tr(A[B,C])=\tr(C^2)=\norm C_\HS^2.
\label{fs:eq:positivitymain}
\end{equation}
Thus $[A,B]=0$. The proof of these projections keeps all higher model
coefficients free.

\begin{proof}[Proof of Theorem~\ref{fs:thm:classification}]
Subsection~\ref{fs:app:linearproof} gives the unrestricted linearized space.
Lemma~\ref{fs:lem:necessary} in Appendix~\ref{fs:app:obstructions} gives
the necessary nonlinear conditions, and Appendix~\ref{fs:app:intrinsic}
classifies their Hermitian intrinsic solutions. The trace identity
\eqref{fs:eq:positivitymain} completes the necessary commutator condition.
Appendixes~\ref{fs:app:lifts} and \ref{fs:app:ikmatrix} realize every
point of \eqref{fs:eq:classification}, and the distinct spectra
\eqref{fs:eq:fingerprint} separate the two nonzero types.
\end{proof}

\begin{proof}[Proof of Corollary~\ref{fs:cor:firstcharge}]
The decomposition \eqref{fs:eq:decomposition} is equivariant and direct.
For \(f\ne0\), the intrinsic stabilizer is one of the abelian algebras
in Theorem~\ref{fs:thm:classification}, and therefore commutes with \(A\) and \(B\).
Conversely a symmetry of \(K\) must fix its intrinsic component,
so the number of generators is exactly the one stated.
For \(f=0\), the commuting Hermitian matrices \(A\) and \(iB\) are
simultaneously diagonalizable and hence commute with a Cartan torus.
The realization assertion follows from the group-preserving constructions
in Appendix~\ref{fs:app:lifts}.
\end{proof}

\subsection{Dimensions, the contraction diagnostic, and one-site terms}

The nonlinear restriction can be quantified by the generic real
dimensions of the three loci in Theorem~\ref{fs:thm:classification}:
\begin{equation}
\begin{aligned}
\dim_{\R}\mathcal J_{\rm triangle}&=(8-2)+2+4=12,\\
\dim_{\R}\mathcal J_{\rm secant}&=(8-1)+1+2=10,\\
\dim_{\R}\mathcal J_{f=0}&=(8-2)+4=10.
\end{aligned}
\label{fs:eq:stratumdimensions}
\end{equation}
For the triangle, these terms count the common \(SU(3)\) orbit,
the complex coefficient \(\alpha\), and the two Cartan-valued dressing
matrices. The secant has a seven-dimensional orbit, one real intrinsic
amplitude, and two real dressing parameters. For \(f=0\), a generic
commuting pair \((A,iB)\) has a six-dimensional common orbit and four
diagonal parameters. Discrete normal-form identifications do not change
these dimensions. Thus the largest realizable stratum has dimension 12
inside the 36-dimensional Hermitian linearized space, before quotienting
by common basis changes.

There is also a basis-independent distinction between the two types.
The positive contraction
\begin{equation}
M^a{}_i=f^{amn}\overline{f^{imn}},\qquad m=\tr M>0
\label{fs:eq:contractionM}
\end{equation}
has normalized spectrum
\begin{equation}
\spec(M/m)=
\begin{cases}
(1/3,1/3,1/3),&\text{triangle},\\
(2/15,2/15,11/15),&\text{secant}.
\end{cases}
\label{fs:eq:fingerprint}
\end{equation}
For an allowed first derivative, this contraction spectrum identifies
the intrinsic symmetry type without constructing its normalizing basis.
The full membership criterion remains Eq.~\eqref{fs:eq:classification},
which also constrains the one-site terms.
For example, the Fermat tensor \(f_{\rm F}=x_0^3+x_1^3+x_2^3\)
has \(M=\I\), and hence the uniform normalized spectrum, yet
\(E_+(f_{\rm F},f_{\rm F})^{0000}{}_0=-2\). The obstruction
described in Sec.~\ref{main:sec:setting} therefore excludes this tensor
despite its contraction spectrum.

The one-site terms in \eqref{fs:eq:decomposition} have distinct roles.
For a periodic chain, \(\sum_j\Tel(A)_{j,j+1}=0\);
on an open chain it is \(A_1-A_L\).
It must nevertheless be retained in the local Yang--Baxter problem,
where different density representatives can have different local
compatibility conditions.
The term \(\Hel(B)\) is the derivative of a unitary spatial twist.
The theorem says that both contributions must use the symmetry
algebra selected by the intrinsic interaction. On the pure dressing
locus, Hermiticity forces their generators to commute.

\subsection{The cyclic conversion direction has no on-site charge}
\label{fs:app:fermat}

The cyclic conversion \(T=K_{\rm F}\) defined in
Sec.~\ref{main:sec:setting} has \(A=B=0\) and
\(f_{\rm F}=x_0^3+x_1^3+x_2^3\). The tensor decomposition is direct
and equivariant. Thus a traceless on-site generator \(Z\) commuting
with \(K_{\rm F}\) must satisfy \(Z\cdot f_{\rm F}=0\).
The coefficients of \(x_a^3\) and \(x_a^2x_b\) in this equation
force every entry of \(Z\in\slthree\) to vanish. This excludes a
nontrivial charge in an arbitrary basis, not only a diagonal charge in
the basis used to define \(T\) in Sec.~\ref{main:sec:setting}.

Substitution in the full cubic obstruction \eqref{fs:eq:covariant} gives
\begin{equation}
E_+(f_{\rm F},f_{\rm F})^{0000}{}_0=-2.
\label{fs:eq:fermatobstruction}
\end{equation}
Lemma~\ref{fs:lem:necessary} requires every such component to vanish for
an actual arc, independently of its higher derivatives. Hence
\(K_{\rm F}\) passes the linearized condition but has no analytic
Hermitian Yang--Baxter realization.

\subsection{The boundary with 33-vertex models}

\label{fs:sec:33vertex}

The degenerate charge \(q_{33}=\operatorname{diag}(0,1,1)\) has
two-site sectors of dimensions \(1,4,4\), allowing
\(1^2+4^2+4^2=33\) entries. This is the symmetry assumed in
Ref.~\cite{cfrv33}. Its Section A.4 notes that the generic listed
Hamiltonians fail the Reshetikhin condition stated in
Sec.~\ref{main:sec:setting}; additional parameter constraints would be
required for a difference-form spectral matrix.

Even for an exceptional specialization satisfying our hypotheses, the
degenerate charge has a sharp first-order consequence. The decomposition
\eqref{fs:eq:decomposition} is equivariant under common basis changes, so
a charge of \(K\) also stabilizes its intrinsic tensor. For the nonzero
single-charge type, every nontrivial Hermitian stabilizer has three
distinct, equally spaced eigenvalues. Hence a realizable \(K\)
preserving \(q_{33}\) has either Cartan-type intrinsic part or no
intrinsic part. In both cases its allowed one-site terms commute, and
\(K\) preserves a Cartan torus. The 33-entry support therefore supplies
no third first-order type at \(P\). This conclusion concerns the first
derivative; it does not classify the higher derivatives or exclude
33-vertex models at finite distance from \(P\).

\subsection{Proof of the full linearized space}
\label{fs:app:linearproof}

Write \(s=P_{12}\), \(t=P_{23}\), \(J=DF_P\), and
\(S=\{P,K\}\). Expanding
\[
F(h)=\{h_{12}^2,h_{23}\}-\{h_{23}^2,h_{12}\}
       +2(h_{23}h_{12}h_{23}-h_{12}h_{23}h_{12})
\]
gives
\begin{equation}
J(K)=2\Delta K+\mathcal L(S),\qquad
\mathcal L(S)=\{S_{12},t\}-\{S_{23},s\}.
\label{fs:eq:linearL}
\end{equation}
The derivative of the last two cubic terms cancels. Indeed, for
\(a=K_{12}\), \(b=K_{23}\), the identities
\(bst=sta\), \(tat=sbs\), and \(tsb=ats\) cancel its six terms in pairs.

Decompose \(K\) according to \(PKP=\pm K\). The odd part has \(S=0\)
and automatically solves the linearized equation. On the even part,
\(S=2PK\) is an invertible linear transformation. It remains to show
\begin{equation}
PSP=S,\qquad\mathcal L(S)\in\operatorname{im}\Delta
\quad\Longrightarrow\quad S\in\Span\{\I,P\}.
\label{fs:eq:evenkernel}
\end{equation}

The two eigenspaces of \(P\) are \(S^2V\) and \(\Lambda^2V\), of
dimensions six and three. Accordingly, the even operator space is
\[
\End(S^2V)\oplus\End(\Lambda^2V)
\cong(\mathbf1\oplus\mathbf8\oplus\mathbf{27})
 \oplus(\mathbf1\oplus\mathbf8).
\]
Here \(\mathbf8=\slthree\).
Contraction of one upper and one lower index in
\(\End(S^2V)\) gives the scalar and adjoint summands; its traceless
kernel is the irreducible highest-weight \((2,2)\) module of
dimension 27. The identification
\(\Lambda^2V\cong V^*\) gives the second scalar and adjoint summands.
The map \(\mathcal L\bmod\operatorname{im}\Delta\) is equivariant,
so its kernel is a submodule.

The scalar summands are spanned by \(\I,P\), and
\(\mathcal L(\I)=2\Delta P\), \(\mathcal L(P)=0\).
For the other summands put \(X=E_{02}\). Highest-weight vectors
in the two adjoint copies are
\[
S_1=X\otimes\I+\I\otimes X,\qquad S_2=PS_1,
\]
and a highest-weight vector of \(\mathbf{27}\) is \(S_{27}=X\otimes X\).
Index rows and columns by three-letter words. Direct use of the
permutation action gives
\[
\begin{array}{c|cc}
 &\mathcal L(S_1)&\mathcal L(S_2)\\ \hline
(010,201)&2&0\\
(100,201)&0&1
\end{array}
\qquad
\mathcal L(S_{27})_{001,212}=1.
\]
All three functionals vanish on \(\Delta Y\), since both the first
and last site indices change. Thus the map is injective on the
two-dimensional highest-weight multiplicity space of the adjoint
summands and nonzero on \(\mathbf{27}\). Complete reducibility
proves \eqref{fs:eq:evenkernel}.

The odd operator space is
\[
\operatorname{Hom}(\Lambda^2V,S^2V)
\oplus\operatorname{Hom}(S^2V,\Lambda^2V),
\]
of dimension \(2\cdot6\cdot3=36\).
Using
\(S^2V\otimes V\cong S^3V\oplus\slthree\)
and its dual gives \eqref{fs:eq:chart} and the tensor coordinates
\eqref{fs:eq:decomposition}--\eqref{fs:eq:tensor}.
Including \(\I,P\), the full kernel has dimension 38 and the
linear map has rank \(81-38=43\).
A Hermitian odd operator is specified by an arbitrary complex
\(6\times3\) block and its adjoint, giving 36 real dimensions.
The two one-site coordinates can be read off directly:
\[
A=\tfrac13\Tr_2K,\qquad B=-\tfrac13\Tr_2(KP).
\]
The intrinsic tensor has zero partial traces, both before and
after multiplication by \(P\), which verifies these formulas
from \eqref{fs:eq:tensor}. They also give \(A^\dagger=A\) and
\(B^\dagger=-B\) when \(K^\dagger=K\) and \(PKP=-K\).
This also proves that no first-order direction has been lost
by a support restriction.

\subsection{Uniform fields and the local spectral-derivative class}
\label{fs:app:uniformfield}

For a nonzero traceless Hermitian \(q\), the uniform-field density
derivative is
\[
K_{\rm field}=q\otimes\I+\I\otimes q.
\]
Its sum on a periodic chain is \(2Q_L\). The common on-site
symmetry of the permutation chain makes \(Q_L\) commute with its
transfer-matrix hierarchy. Adding this field therefore preserves that
commuting hierarchy.

However, \(PK_{\rm field}P=K_{\rm field}\), and
\(K_{\rm field}\notin\Span\{\I,P\}\): its second partial trace
is \(3q\), whereas both \(\Tr_2\I\) and \(\Tr_2P\) are scalar.
The full linearized-space proof above consequently excludes this
direction from the regular additive spectral-derivative problem of
Sec.~\ref{main:sec:setting}. The charge theorem concerns that specified
class of local realizations.

\section{Explicit low-order necessary conditions}
\label{fs:app:obstructions}

\subsection{The intrinsic cubic covariant}
\label{fs:app:covariant}

For \(S\in S^4V\otimes V^*\), let
\((\tr S)^{abc}=S^{abcr}{}_r\), and define
\begin{equation}
(\Pi_{4,1}S)^{abcd}{}_e
=S^{abcd}{}_e-\frac16\left(
\delta^a_e(\tr S)^{bcd}+\delta^b_e(\tr S)^{acd}
+\delta^c_e(\tr S)^{abd}+\delta^d_e(\tr S)^{abc}\right).
\label{fs:eq:traceprojection}
\end{equation}
Taking the trace of the four correction terms gives six copies of
\(\tr S\), so the image is traceless. It is the 35-dimensional
irreducible module of highest weight \((4,1)\).
Complete symmetrization below is normalized as the average over
the 24 permutations. Set
\begin{equation}
\begin{split}
E_+(f,g)^{abcd}{}_e
=\Pi_{4,1}\Sym_{abcd}\bigl(
&2f^{amn}f^{bcd}g_{mne}
-6f^{abm}f^{cdn}g_{mne}\\
&-3\varepsilon^{aij}\varepsilon^{bmn}
 f^{cdr}g_{imr}g_{jne}\bigr).
\end{split}
\label{fs:eq:covariant}
\end{equation}
Interchanging \(f,g\) and upper and lower indices defines the
dual covariant \(E_-\). The equation \(E_{70}=0\) means
\(E_+=E_-=0\). If \(g=\bar f\), the two equations are conjugate.

For the component calculations below, the tensor formula has a useful
polynomial form. Put
\[
\phi(x)=f^{abc}x_ax_bx_c,\quad
G_e=(g_{ije})_{i,j=0}^2,\quad
\Omega(x)_{ij}=\varepsilon^{aij}x_a .
\]
Define three quartic polynomials
\begin{equation}
\begin{split}
\mathcal P_e(x)={}&\tfrac13\phi\,\tr(G_e\nabla^2\phi)
-\tfrac23(\nabla\phi)^{\mathsf T}G_e\nabla\phi\\
&+\sum_r(\partial_r\phi)\tr(\Omega G_r\Omega G_e),\\
\mathcal E_e(x)={}&\mathcal P_e(x)
-\frac{x_e}{6}\sum_r\partial_r\mathcal P_r(x).
\end{split}
\label{fs:eq:polynomialcovariant}
\end{equation}
Then
\(\mathcal E_e(x)=E_+^{abcd}{}_e x_ax_bx_cx_d\).
Indeed, \(\partial_m\partial_n\phi=6f^{amn}x_a\)
and \(\partial_m\phi=3f^{abm}x_ax_b\) give the first two
contractions in \eqref{fs:eq:covariant}. For the last one,
\[
\sum_{i,j,m,n}\Omega_{ij}\Omega_{mn}g_{imr}g_{jne}
=\tr(\Omega^{\mathsf T}G_r\Omega G_e)
=-\tr(\Omega G_r\Omega G_e).
\]
Finally, divergence of the symmetrized raw quartic is four times its
tensor trace. Its trace subtraction is consequently
\(-x_e\operatorname{div}\mathcal P/6\), as displayed.
Euler's identity for the cubic \(\operatorname{div}\mathcal P\)
also gives \(\operatorname{div}\mathcal E=0\).
Thus a component with upper-index multiplicities
\((n_0,n_1,n_2)\) is obtained by taking the coefficient of
\(x_0^{n_0}x_1^{n_1}x_2^{n_2}\) in \(\mathcal E_e\)
and dividing by \(4!/(n_0!n_1!n_2!)\).
This replaces symmetrization over four indices by three explicit
polynomial expressions.

\subsection{Expansion with free higher derivatives}

Define
\[
\mathscr Q(K)=[\tau^2]F(P+\tau K),\qquad
\mathscr B(K,N)=\mathscr Q(K+N)-\mathscr Q(K)-\mathscr Q(N).
\]
For
\(h=P+\tau K+\tau^2H_2+\tau^3H_3+\cdots\), the next two equations are
\begin{equation}
\begin{aligned}
J(H_2)+\mathscr Q(K)&\in\operatorname{im}\Delta,\\
J(H_3)+\mathscr B(K,H_2)+F(K)&\in\operatorname{im}\Delta.
\end{aligned}
\label{fs:eq:jet23}
\end{equation}
If \(s=P_{12}\), \(t=P_{23}\), \(L=K_{12}\), \(R=K_{23}\), the
quadratic expression is explicitly
\begin{equation}
\mathscr Q(K)
=[s+t,[L,R]]+[L+R,[s,R]+[L,t]].
\label{fs:eq:Qexplicit}
\end{equation}
Together with \eqref{fs:eq:linearL}, this formula fixes all
normalizations in the projections that follow.
Rows and columns of three-site matrices are indexed by three-letter words.

\begin{lemma}[Necessary relations for a Hermitian arc]
\label{fs:lem:necessary}
Let \(h=P+\tau K+O(\tau^2)\) be a formal Hermitian density arc
satisfying the Reshetikhin condition stated in Sec.~\ref{main:sec:setting},
and remove the scalar and scale parts of \(K\). In the full coordinates
\eqref{fs:eq:decomposition},
\[
A\cdot f=A\cdot\bar f=B\cdot f=B\cdot\bar f=0,
\qquad [A,B]=0,\qquad E_{70}(f,\bar f)=0.
\]
\end{lemma}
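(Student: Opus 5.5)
The plan is to extract the relations from the second- and third-order equations \eqref{fs:eq:jet23}. The key feature is that the free higher coefficients \(H_2,H_3\) enter only through \(J\) and through \(\mathscr B(K,H_2)\). So the first step is to build equivariant linear functionals that vanish on \(\operatorname{im}J+\operatorname{im}\Delta\). By the proof of the full linearized space, the cokernel of \(J\) modulo \(\operatorname{im}\Delta\) is the complement of a rank-43 image plus boundary differences. I will decompose the three-site operator space into \(\slthree\)-isotypic components and identify which highest-weight multiplicity spaces are not hit by \(J\) modulo \(\Delta\). Pairing \(\mathscr Q(K)\) against these cokernel functionals gives quadratic necessary conditions in \((A,B;f,g)\). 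Equivariance restricts them to a few covariant types: \(A\cdot f\), \(B\cdot f\) and their duals in the \(S^3V\)-type components, and \([A,B]\) in an adjoint component. I will then evaluate \(\mathscr Q\) via \eqref{fs:eq:Qexplicit} on highest-weight representatives, reading off single three-letter-word matrix entries as in the linear proof. This should show that the cokernel projection equals, up to nonzero constants, \(A\cdot f\), \(A\cdot g\), \(B\cdot f\) and \(B\cdot g\).

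Before the third order, I must account for \(H_2\). Since \(\mathscr Q(K)\) must lie in \(\operatorname{im}J+\operatorname{im}\Delta\), \(H_2\) is determined modulo the linear kernel \(\mathcal T\oplus\Span\{\I,P\}\) by some particular solution \(H_2^{p}(K)\), quadratic in \(K\). I will write \(H_2=H_2^p(K)+K'\) with \(K'\in\mathcal T\) arbitrary. At third order, \(\mathscr B(K,K')\) is exactly the linearization of the second-order map at \(K\) in the direction \(K'\). I will then choose cokernel functionals in the \((4,1)\) and adjoint components that annihilate \(\mathscr B(K,K')\) for all \(K'\), or at least once the second-order stabilizer relations hold. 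The \((4,1)\) module appears only in \(S^3V\otimes S^3V^*\otimes\ldots\) through the \(f f g\) contraction. I expect the \(K'\)-dependence there to reduce to terms proportional to the already-vanishing stabilizer expressions, to scalar/scale terms (which preserve everything), or to pieces killed by \(\Delta\). Evaluating \(F(K)+\mathscr B(K,H_2^p)\) against the \((4,1)\) functional then yields \(E_+\) and, dually, \(E_-\), in the polynomial form \eqref{fs:eq:polynomialcovariant}.

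The adjoint functional at third order gives \([B,[A,B]]=0\), possibly together with terms that vanish by the stabilizer relations. The Hermitian trace identity \eqref{fs:eq:positivitymain} then upgrades this to \([A,B]=0\). Hermiticity \(g=\bar f\) lets me state everything with \(\bar f\).

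The main obstacle is the third-order step. I need to show that the free \(H_2\) and \(H_3\) cannot cancel the \((4,1)\) and adjoint projections, so that \(\mathscr B(K,\cdot)\) restricted to the relevant kernel is invisible to the chosen functionals. I would first try a weight argument: \(\mathscr B(K,K')\) is bilinear with one factor in \(\mathcal T\). Its \((4,1)\)-component could therefore only come from \(f\otimes f'\)- or \(f\otimes g'\)-type products, which land in \(S^3\otimes S^3\) or \(S^3\otimes S^{3*}\) patterns. I would then check directly at a highest-weight word that these coefficients vanish. If that fails, I would fall back on treating \(K'\) as a genuine unknown and eliminating it, so that the obstruction becomes the \(K'\)-independent part of the cokernel.
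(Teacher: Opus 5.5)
Your outline has the same architecture as the paper's proof: second-order functionals that kill $\operatorname{im}J+\operatorname{im}\Delta$ give the stabilizer relations; third-order functionals that are blind to the free part of $H_2$ give $E_{70}=0$ and $[B,[A,B]]=0$; and the trace identity then gives $[A,B]=0$. The genuine gap is the step you flag as the main obstacle, which is left open, and the weight argument you propose for it examines the wrong channels.

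The free part of $H_2$ is an arbitrary exchange-odd $W=\Tel(A')+\Hel(B')+K_{f',g'}$. So $\mathscr B(K,W)$ contains the dressing--intrinsic products $A'\otimes f$, $B'\otimes f$, $A\otimes f'$ and $B\otimes f'$, and $\slthree\otimes S^3V\supset(4,1)$. These terms involve the free data $A',B',f'$, so they are not multiples of the stabilizer expressions of $K$. A $(4,1)$ functional that sees them can be satisfied by adjusting the free part of $H_2$, so it gives no necessary condition on $K$. The channels you list are in fact harmless. $S^3V\otimes S^3V^*$ contains no $(4,1)$. The $f\otimes f'$ term cannot contribute, because $\mathscr B$ is symmetric while the $(4,1)$ summand of $S^3V\otimes S^3V$ lies in $\Lambda^2S^3V$. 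The same problem arises for the adjoint part, where products such as $A\otimes A'$ or $f\otimes g'$ have adjoint components. Your fallback of eliminating $K'$ only places the cubic term in the $K$-dependent space $\operatorname{im}J+\operatorname{im}\Delta+\mathscr B(K,\mathcal T)$. That becomes a polynomial condition such as $E_{70}=0$ only if you find annihilating functionals independent of $K$, and finding them is exactly the missing idea.

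The paper supplies two such functionals.

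For the $(4,1)$ part it takes the single entry $\ell Y=Y_{000,122}$. This kills $\operatorname{im}J$ and $\operatorname{im}\Delta$, and it satisfies $\ell\mathscr Q(U)=0$ for every odd $U$, obtained by pairing the terms of \eqref{fs:eq:Qexplicit} via $U^{ab}{}_{ij}=-U^{ba}{}_{ji}$; see \eqref{fs:eq:oddQ}. Polarization then gives $\ell\mathscr B(K,W)=0$ for all odd $W$ at once. With the explicit $N_0$ of \eqref{fs:eq:Nzero} one obtains $\ell\bigl(F(T)+\mathscr B(T,N_0)\bigr)=-6E_+^{0000}{}_2$.

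For the adjoint part it uses the multiplication map. $\mu_3$ kills $\operatorname{im}\Delta$ and obeys the exact identity \eqref{fs:eq:multiplication}, so the third-order coefficient is $[\mu_2(K^2+\{P,H_2\}),\mu_2K]$. Odd $W$ enters this only through $\mu_2\{P,W\}=0$. The contractions \eqref{fs:eq:multiplicationcontractions} then give $-18[B,[A,B]]$.

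With these two functionals your plan goes through. The dressing in the $(4,1)$ step can be handled either by your direct expansion (compare the syzygy \eqref{fs:eq:initialsyzygy}) or, as in the paper, by the exact telescope and helix identities \eqref{fs:eq:exacttel}--\eqref{fs:eq:exacthelix}.

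One smaller point concerns second order. Equivariance alone does not reduce the $S^3V$-valued projections to $A\cdot f$ and $B\cdot f$, because $\Lambda^2\slthree\supset(3,0)$ allows an $A\wedge B$ term. It is the explicit evaluation \eqref{fs:eq:twoactionrows} that shows this term is absent.
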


We prove the lemma in the following four steps.

\subsection{Second order forces the common stabilizer}

Consider the functionals
\[
\lambda_0Y=Y_{000,102},\qquad
\lambda_1Y=-Y_{000,021}+2Y_{000,102}
              -Y_{001,112}-Y_{002,212}.
\]
Both vanish on \(\Delta X\) and on \(J(X)\) for every two-site \(X\).
Using \eqref{fs:eq:Qexplicit} and \eqref{fs:eq:tensor} gives
\begin{equation}
\lambda_0\mathscr Q(K)=2((A+B)\cdot f)^{000},\qquad
\lambda_1\mathscr Q(K)=6(B\cdot f)^{000}.
\label{fs:eq:twoactionrows}
\end{equation}
For example, the right-hand sides are respectively
\(6\sum_j(A^0{}_j+B^0{}_j)f^{00j}\) and
\(18\sum_j B^0{}_jf^{00j}\).
These are entrywise identities for arbitrary traceless \(A,B\) and
symmetric \(f,g\). They follow by matrix multiplication with
\[
J(X)=2\Delta X+
\{\{P,X\}_{12},P_{23}\}-\{\{P,X\}_{23},P_{12}\}.
\]
The Kronecker deltas in \(P\) eliminate the corresponding intermediate
sums in these products.

The common \(SL_3\) orbit of the \(000\) component spans the dual
of \(S^3V\). Thus the second-order equation in \eqref{fs:eq:jet23}
implies \(A\cdot f=B\cdot f=0\).
Ordinary transposition, followed by interchange of upper and
lower indices, gives \(A\cdot g=B\cdot g=0\).

\subsection{An explicit quadratic correction}

Write \(T=K_{f,g}\), and introduce the contractions
\begin{align}
M^a{}_i&=f^{amn}g_{imn},&
m&=\tr M,&
S&=M\otimes\I+\I\otimes M,\nonumber\\
\mathsf U^{ab}{}_{ij}&=f^{abm}g_{ijm},&&&
\label{fs:eq:normaldefinitions}\\
\mathsf F^{ab}{}_{ij}
&=\varepsilon_{imn}\varepsilon_{jpq}f^{amp}f^{bnq},&
\mathsf G^{ab}{}_{ij}
&=\varepsilon^{amn}\varepsilon^{bpq}g_{imp}g_{jnq}.&&\nonumber
\end{align}
Then
\begin{equation}
N_0=-\mathsf F-\mathsf G-5\mathsf U+\tfrac12S+PS,
\qquad PN_0P=N_0,
\label{fs:eq:Nzero}
\end{equation}
satisfies the identity
\begin{equation}
J(N_0)+\mathscr Q(T)
=\Delta(-T^2P+3mP).
\label{fs:eq:intrinsicnormal}
\end{equation}
The basic contractions used to verify it are
\begin{equation}
T^2P=2\mathsf U-(m\I-S)(\I-P),\qquad
\Tr_1N_0=\Tr_2N_0=\tfrac12(m\I-3M).
\label{fs:eq:normalcontractions}
\end{equation}
They follow from
\(\varepsilon_{ijr}\varepsilon^{klr}
=\delta_i^k\delta_j^l-\delta_i^l\delta_j^k\)
and the complete symmetry of \(f,g\).
To display the cancellation by tensor type, split
\(T=T_++T_-\), where
\((T_+)^{ab}{}_{ij}=\varepsilon_{ijm}f^{abm}\) and
\((T_-)^{ab}{}_{ij}=\varepsilon^{abm}g_{ijm}\).
They obey
\(PT_+=T_+\), \(T_+P=-T_+\),
\(PT_-=-T_-\), \(T_-P=T_-\), and \(T_\pm^2=0\).
Grouping \eqref{fs:eq:Qexplicit} into its \(ff\), \(gg\), and
\(fg\) contractions gives
\begin{equation}
\begin{aligned}
\mathscr Q(T_+)&=J(\mathsf F),\qquad
\mathscr Q(T_-)=J(\mathsf G),\\
\mathscr B(T_+,T_-)
&=J(5\mathsf U-\tfrac12S-PS)\\
&\quad+\Delta\big[-2\mathsf U+(m\I-S)(\I-P)+3mP\big].
\end{aligned}
\label{fs:eq:quadraticsectors}
\end{equation}
The pure contractions use complete symmetry of \(f\) or \(g\)
and the three-dimensional identity
\[
\delta_i^a\varepsilon_{jkm}-\delta_j^a\varepsilon_{ikm}
+\delta_k^a\varepsilon_{ijm}-\delta_m^a\varepsilon_{ijk}=0.
\]
For the mixed contraction, contracting an upper and a lower epsilon
leaves \(\mathsf U\), its one-site contraction \(M\), and
\(m=\tr M\); placing the surviving permutation on the two-site
factor gives the displayed \(PS\) and \((\I-P)\) terms.
Thus \eqref{fs:eq:quadraticsectors} uses only the two epsilon rules
and the four permutation identities above. Adding its three lines
and using \eqref{fs:eq:normalcontractions} proves
\eqref{fs:eq:intrinsicnormal}.

For the pure dressing part, set
\(K_p=\Tel(A)+\Hel(B)\), \(C=[A,B]\), and \(T_B=\Tel(B)\).
Without requiring \(C=0\), one has
\begin{equation}
N_p=\tfrac12T_B^2P-\tfrac12(C\otimes\I+\I\otimes C),\qquad
J(N_p)+\mathscr Q(K_p)=-\Delta(K_p^2P).
\label{fs:eq:purenormal}
\end{equation}
This identity follows by moving one-site factors through \(P\)
with \(PX_1=X_2P\).

Under the stabilizer conditions already proved, a full particular
second-order correction is
\begin{equation}
N_*=N_0+N_p+[B_2,T],\qquad PN_*P=N_*.
\label{fs:eq:fullnormal}
\end{equation}
Two exact density identities establish its cross terms.
If \([k,X_1+X_2]=0\), then
\begin{equation}
F(k+\Tel X)-F(k)
=-\Delta[k-X_2,[k,X_2]].
\label{fs:eq:exacttel}
\end{equation}
To check the identity, put \(a=k_{12}\), \(b=k_{23}\),
\(y=X_2\). The inner commutator changes by \([a+b,y]\).
Jacobi identities cancel the mixed terms; the remainder is
\([a-y,[a,y]]+[b+y,[b,y]]\), which is the stated boundary difference.

If \([k,B_1+B_2]=0\), let \(D=e^{\tau B}\). Then
\begin{equation}
k^D=(\I\otimes D)k(\I\otimes D^{-1}),\qquad
F(k^D)=G F(k)G^{-1},\quad
G=D^{-2}\otimes D^{-1}\otimes\I.
\label{fs:eq:exacthelix}
\end{equation}
Indeed \(G\) conjugates both adjacent copies of \(k\) into the
corresponding copies of \(k^D\).
If a witness has the same symmetry, its boundary difference is
also conjugated to a boundary difference.
Apply these identities to \(P+\tau T+\tau^2N_0\).
The \(A\)--\(T\) cross term needs no correction, whereas the
\(B\)--\(T\) cross term contributes \([B_2,T]\).
Combining this fact with \eqref{fs:eq:intrinsicnormal} and
\eqref{fs:eq:purenormal} proves \eqref{fs:eq:fullnormal}.

The linear kernel from Appendix~\ref{fs:app:linear} now gives the
complete freedom in the actual second coefficient:
\begin{equation}
H_2=N_*+W+c_0\I+d_0P,\qquad PWP=-W.
\label{fs:eq:freeHtwo}
\end{equation}

\subsection{Third-order positivity forces commuting dressing data}

Let \(\mu_r\) send
\(X_1\otimes\cdots\otimes X_r\) to the ordered product
\(X_1\cdots X_r\). Expansion of the six terms of the double
commutator gives
\begin{equation}
\mu_3\Delta X=0,\qquad
\mu_3F(h)=[\mu_2(h^2),\mu_2(h)],\qquad
\mu_2(U)=\Tr_2(UP).
\label{fs:eq:multiplication}
\end{equation}
For instance, writing \(h=\sum_\alpha a_\alpha\otimes b_\alpha\),
the images of \(h_{23}h_{12}^2\) and \(h_{23}^2h_{12}\)
coincide after relabeling the three summation indices.
The images of \(h_{23}h_{12}h_{23}\) and
\(h_{12}h_{23}h_{12}\) likewise coincide. The two remaining terms
are the two ordered products on the right of \eqref{fs:eq:multiplication}.

The zero partial traces of \(T\), the stabilizer conditions,
and \eqref{fs:eq:normalcontractions} give
\begin{gather}
\mu_2K=-3B,\qquad
\mu_2(T^2)=3M-m\I,\qquad
\mu_2\{P,N_0\}=m\I-3M,\nonumber\\
\mu_2(K_p^2)=\mu_2\{P,N_p\}=-3C,\nonumber\\
\mu_2\{\Tel A,T\}
=\mu_2\{\Hel B,T\}
=\mu_2\{P,[B_2,T]\}=0.
\label{fs:eq:multiplicationcontractions}
\end{gather}
For odd \(W\), its two partial traces sum to zero and hence
\(\mu_2\{P,W\}=0\).
Consequently, up to a scalar matrix,
\(\mu_2(K^2+\{P,H_2\})=-6C\).
Since \(\mu_2P=3\I\), the third-order coefficient of
\eqref{fs:eq:multiplication} yields
\begin{equation}
0=[-6C,-3B]=-18[B,[A,B]].
\label{fs:eq:phithree}
\end{equation}
On the Hermitian real form, \(C^\dagger=C\).
Taking the trace pairing with \(A\) gives
\(0=\tr(A[B,C])=\tr(C^2)\), and positivity forces \(C=0\).

\subsection{The intrinsic third-order projection}

Let \(\ell Y=Y_{000,122}\). This functional vanishes on every
\(J(X)\) and \(\Delta X\). Moreover,
\begin{equation}
\ell\mathscr Q(U)=0\qquad\text{whenever }PUP=-U.
\label{fs:eq:oddQ}
\end{equation}
This is obtained from \eqref{fs:eq:Qexplicit} by pairing terms with
\(U^{ab}{}_{ij}=-U^{ba}{}_{ji}\).
Polarization gives \(\ell\mathscr B(K,W)=0\) for the free odd
term in \eqref{fs:eq:freeHtwo}. The free scalar and \(P\) terms
also vanish under this projection.

For the intrinsic \(T\), put
\[
\mathsf a=f^{000}f^{0mn}g_{mn2},\qquad
\mathsf b=f^{00m}f^{00n}g_{mn2},\qquad
\mathsf c=\varepsilon^{0ij}\varepsilon^{0mn}
 f^{00r}g_{imr}g_{jn2}.
\]
The entries in the next table admit short scalar contraction rules.
For an odd two-site \(K\), set \(N^+=(N+PNP)/2\). Expansion of
\eqref{fs:eq:Qexplicit} gives
\begin{equation}
\begin{split}
\tfrac12\ell\mathscr B(K,N)=\sum_{m=0}^2\big[
&K^{00}{}_{1m}(N^+)^{m0}{}_{22}
+(N^+)^{00}{}_{1m}K^{m0}{}_{22}\\
&+K^{00}{}_{m2}(N^+)^{m0}{}_{12}
-(N^+)^{00}{}_{m2}
 (2K^{m0}{}_{12}+K^{0m}{}_{12})\big].
\end{split}
\label{fs:eq:scalarbilinear}
\end{equation}
To derive this formula, replace one factor in each cubic product in
\(F\) by \(P\); its Kronecker deltas remove the intermediate pair
of indices. Pair the remaining terms using
\(K^{ab}{}_{ij}=-K^{ba}{}_{ji}\). The odd part of \(N\)
drops out by \eqref{fs:eq:oddQ}, leaving the four products above.
The cubic entry itself, with \((K^2)\) denoting an ordinary two-site
product, is
\begin{equation}
\begin{split}
\ell F(K)={}&\sum_m\big[
(K^2)^{00}{}_{1m}K^{m0}{}_{22}
-K^{00}{}_{1m}(K^2)^{m0}{}_{22}\\
&\hspace{17mm}+K^{00}{}_{m2}(K^2)^{0m}{}_{12}
-(K^2)^{00}{}_{m2}K^{0m}{}_{12}\big]\\
&+2\sum_{m,n,p}K^{00}{}_{mp}
\big[K^{0m}{}_{1n}K^{np}{}_{22}
-K^{p0}{}_{n2}K^{mn}{}_{12}\big].
\end{split}
\label{fs:eq:scalarcubic}
\end{equation}
This is just the six terms of the double commutator with their external
indices fixed. Equations~\eqref{fs:eq:scalarbilinear} and
\eqref{fs:eq:scalarcubic}, together with the two-site tensors
\eqref{fs:eq:normaldefinitions}, give every row of
\eqref{fs:eq:thirdtable} using sums over \(0,1,2\).

For example, the \(\mathsf U\) row splits into the following two
types of monomial:
\begin{align*}
\tfrac12\ell\mathscr B(T,\mathsf U)
={}&f^{000}(f^{01r}g_{12r}+f^{02r}g_{22r})
-f^{00r}(f^{001}g_{12r}+f^{002}g_{22r})
-3\mathsf b\\
&+f^{00r}(2g_{12r}g_{122}-g_{11r}g_{222}-g_{22r}g_{112})\\
={}&\mathsf a-4\mathsf b-\mathsf c.
\end{align*}
Repeated \(r\) is summed. The first two terms combine to
\(\mathsf a-\mathsf b\). The last parenthesis is
\[
-\varepsilon^{0ij}\varepsilon^{0mn}g_{imr}g_{jn2}.
\]
For \(S\), the surviving terms in \eqref{fs:eq:scalarbilinear} give
\((2+2-6)f^{000}M^0{}_2=-2\mathsf a\).
For \(PS\), the middle contribution is absent, giving
\((2-6)f^{000}M^0{}_2=-4\mathsf a\).
The \(\mathsf F,\mathsf G\) rows use the same four products and
\(\varepsilon_{ijr}\varepsilon^{klr}
=\delta_i^k\delta_j^l-\delta_i^l\delta_j^k\).

The required contractions are
\begin{equation}
\begin{array}{c|c}
\text{expression}&\ell\text{ applied to the expression}\\ \hline
F(T)&-\mathsf a\\
\mathscr B(T,\mathsf F)&-4\mathsf a+4\mathsf b\\
\mathscr B(T,\mathsf G)&-8\mathsf c\\
\mathscr B(T,\mathsf U)&2\mathsf a-8\mathsf b-2\mathsf c\\
\mathscr B(T,S)&-2\mathsf a\\
\mathscr B(T,PS)&-4\mathsf a
\end{array}
\label{fs:eq:thirdtable}
\end{equation}
Each row uses only \eqref{fs:eq:Qexplicit},
\eqref{fs:eq:tensor}, and \eqref{fs:eq:normaldefinitions}.
Inserting the five coefficients of \(N_0\) gives
\begin{equation}
\ell\bigl(F(T)+\mathscr B(T,N_0)\bigr)
=-6(2\mathsf a-6\mathsf b-3\mathsf c)
=-6E_+(f,g)^{0000}{}_2.
\label{fs:eq:thirdhighest}
\end{equation}
There is no trace correction in this component, since its upper
indices are all \(0\) and its lower index is \(2\).

It remains to show that the same projection applies in the
presence of \(A,B\). We have already proved \([A,B]=0\).
Apply \eqref{fs:eq:exacthelix} to
\(k=P+\tau T+\tau^2N_0\), and then add \(\tau\Tel A\).
The first two coefficients of the resulting arc are \(K,N_*\).
The equivariant contraction \(N_0\) inherits both symmetries.
Witnesses through second order inherit them as well: after fixing
their scalar ambiguity, this follows by applying the symmetry
commutator to \(\Delta X\), whose kernel consists of scalars.
Equations \eqref{fs:eq:exacttel}--\eqref{fs:eq:exacthelix} therefore
preserve the third-order residual modulo \(\operatorname{im}\Delta\),
because \(G(0)=\I\).

An actual arc has the additional freedom \eqref{fs:eq:freeHtwo},
which \(\ell\) annihilates at third order.
Equations \eqref{fs:eq:jet23} and \eqref{fs:eq:thirdhighest} thus imply
\(E_+^{0000}{}_2=0\).
The same holds after every common \(SU(3)\) transformation.
This extremal component generates the dual of the irreducible
35-dimensional module, so \(E_+=0\).
On the Hermitian real form \(E_-=0\) follows by conjugation.
This completes the proof of Lemma~\ref{fs:lem:necessary}.

\section{Exhaustive Hermitian intrinsic normal forms}
\label{fs:app:intrinsic}

\begin{lemma}
\label{fs:lem:intrinsic}
A nonzero \(f\in S^3\C^3\) satisfies \(E_+(f,\bar f)=0\)
if and only if it has one of the two \(SU(3)\) normal-form types
\eqref{fs:eq:triangle} and \eqref{fs:eq:secant}.
\end{lemma}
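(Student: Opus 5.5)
The sufficiency direction is a finite check. By the equivariance of \(E_+\) under common \(SU(3)\) transformations noted after \eqref{fs:eq:covariant}, it suffices to verify \(E_+(f,\bar f)=0\) on the two normal forms \eqref{fs:eq:triangle} and \eqref{fs:eq:secant}. I will do this with the polynomial form \eqref{fs:eq:polynomialcovariant}: compute \(\phi\), \(\nabla\phi\), \(\nabla^2\phi\) and the three matrices \(G_e\) for each normal form. Then check that every \(\mathcal P_e\) is of the form \(\tfrac{x_e}{6}\operatorname{div}\mathcal P\), so that each \(\mathcal E_e\) vanishes identically.

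For the triangle this is quick. The only nonzero components of \(f\) are \(f^{012}\) and its permutations, so \(G_e\) is a single off-diagonal pair. A weight argument under the diagonal torus, which stabilizes \(f\), shows that only torus-invariant quartic components can survive. These few can be evaluated directly. For the secant, the stabilizer \(\operatorname{diag}(e^{i\theta},e^{-i\theta},1)\) similarly restricts the surviving components to a short list. The reality of \(c\) is used here, since only \(|c|^2\) and \(c^2\bar c\)-type terms appear.

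For necessity I will first reduce the unknowns with unitary freedom. Choose a unit vector \(x\) maximizing \(|\phi(x)|\) on the unit sphere of \(\C^3\), rotate it to \(e_0\), and read off the Lagrange condition. Since \(\nabla\phi(e_0)\) must be parallel to \(\bar e_0\), this gives \(f^{001}=f^{002}=0\). Next apply the residual \(U(2)\) acting on \(\{x_1,x_2\}\). Takagi-factorize the symmetric \(2\times2\) block \(f^{0ij}\) with \(i,j\in\{1,2\}\), making \(f^{012}=0\) and \(f^{011}=p\ge0\), \(f^{022}=s\ge0\). A phase on \(x_0\) then makes \(f^{000}=a\ge0\).

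The remaining unknowns are \(a,p,s\) and the four binary coefficients \(f^{111},f^{112},f^{122},f^{222}\). I will then extract components of \(\mathcal E_e\) in order of increasing complexity. Components with lower index \(e=0\) and upper monomials heavy in \(x_0\) involve only \(a,p,s\). Some of them, such as \(E_+^{0000}{}_2\) used in \eqref{fs:eq:thirdhighest}, become linear in the binary coefficients with coefficients built from \(a,p,s\). Others should reduce to weighted sums of squared moduli, and I intend to use that positivity to force vanishings.

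I expect the case split to be as follows. If \(a\neq0\) (the generic maximum), the equations should force the binary part to be proportional to a specific cubic. After a further rotation, \(f\) should then be recognized as \(\alpha x_0'x_1'x_2'\) in a rotated orthonormal frame, where the maximum of \(|\phi|\) sits at \((1,1,1)/\sqrt3\) up to phases, or as the secant form. The degenerate subcases, for example \(p=s\), \(p=0\), or \(s=0\), will be handled separately, using the leftover stabilizing unitaries in the \((1,2)\) block.

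As an independent cross-check and a way to separate the two branches, I will use the contraction \(M\) of \eqref{fs:eq:contractionM}. Contracting \(E_+(f,\bar f)\) appropriately against \(\bar f\) should give trace identities constraining \(\spec(M/m)\) to the two values in \eqref{fs:eq:fingerprint}. Diagonalizing \(M\) by a unitary, as an alternative first normalization, may also shorten the elimination.

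The main obstacle is the elimination step: solving the polynomial system in the seven complex coefficients without losing branches. I will first try to produce enough sums-of-squares identities, each from a suitable pairing of \(E_+\) with \(\bar f\otimes f\) or with weight vectors, to kill the binary coefficients directly. I will fall back on an explicit Gröbner-style case analysis in \((a,p,s)\) only where positivity fails.
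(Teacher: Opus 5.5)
There is a genuine gap at your normalization step, and it breaks the necessity direction.

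\textbf{The normalization uses a transformation that is not a symmetry.} You Takagi-factorize \(B_{jk}=f^{0jk}\) with a \(U(2)\) on \((x_1,x_2)\) and then rephase \(x_0\). Together these form the plain \(U(3)\) action on \(S^3\mathbb{C}^3\), which contains the overall phase \(f\mapsto e^{i\chi}f\). The equation does not respect that phase. In \eqref{fs:eq:covariant} the first two terms have bidegree \((2,1)\) in \((f,\bar f)\), while the \(\varepsilon\varepsilon\) term has bidegree \((1,2)\). Hence \(E_+(f,\bar f)=0\) is preserved only by real rescaling and by \(SU(3)\). Your own remark that the reality of \(c\) matters in the secant check reflects exactly this.

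\textbf{What survives inside \(SU(3)\).} The stabilizer of the line of \(e_0\) is \(\{\operatorname{diag}(e^{i\theta},V):e^{i\theta}\det V=1\}\). It acts by \(f^{000}\mapsto e^{3i\theta}f^{000}\) and \(B\mapsto e^{i\theta}VBV^{\mathsf T}\), so \(\det B\) is invariant. Consequently the common Takagi phase \(\eta\) in \(f^{011}=r\eta\), \(f^{022}=s\eta\) of \eqref{fs:eq:maximumchart} cannot be removed. Your chart sets \(\eta=1\) and has \(11\) real parameters rather than \(20-8=12\).

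\textbf{The lost phase is where the solutions live.} The secant orbit has a unique maximizing line. At that line the chart has vanishing binary part and \(b=d=\pm i/3\), so \(\det B/(f^{000})^2=-1/9<0\). Every point of your slice instead has \(\det B=ps\ge0\). Your slice therefore contains no secant representative at all. For example, your procedure sends \(x^3+ix(y^2+z^2)\) to \(x^3+x(y^2+z^2)\), for which \(E_+^{0011}{}_0=-4/9\ne0\). Likewise, only special phases of \(\alpha\) survive in the triangle type. An elimination carried out in your slice cannot prove necessity.

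\textbf{How to repair it, and what else is missing.} Keep \(\zeta=\eta^2\) on the unit circle, as the paper does.
\begin{itemize}
\item The equations \(E_+^{0000}{}_1=E_+^{0000}{}_2=0\) then have coefficients \(2b-3\bar d\) and \(2d-3\bar b\). They give \(u=\rho e\), \(v=\rho t\) with the complex ratio \(\rho=(3s\zeta-2r)/(2s-3r\zeta)\).
\item Your expectation that \(x_0\)-heavy components with lower index \(0\) involve only the block data is also wrong. At \(r=s=0\) one has \(E_+^{0000}{}_0=-(4+|e|^2+3|t|^2+3|u|^2+|v|^2)/3\). This positivity is precisely what excludes \(r=0\).
\item The core of the proof is excluding \(r>s\); your plan leaves this to hoped-for sums of squares. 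For \(r>s\), \(|\rho|<1\) forces \(e\bar t=0\). Eliminating the single unknown \(|e|^2\) or \(|t|^2\) from the master components \(E_+^{0011}{}_0\) and \(E_+^{0022}{}_0\) gives determinants proportional to \((r^2-s^2)H(r,s,\zeta)\) and \((r^2-s^2)H(s,r,\zeta)\). One must then show that \(H\) has no zero on \(|\zeta|=1\).
\item That step needs the second-order maximality bound \(r,s\le1/2\), which you never impose, since you use only the first-order Lagrange condition. For instance, at \(s=0\) one determinant reduces to \(\zeta^2r^2(10-16r^2)\), which vanishes at \(r^2=5/8\).
\item Only after establishing \(r=s\) do four explicit components split the problem into the secant case (\(e=0\)) and the triangle case (\(e\ne0\)).
\end{itemize}
Your sufficiency direction, verifying the two normal forms via equivariance, is fine.
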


The proof proceeds through a chart obtained from a global maximum of
\(|f|\), so every nonzero tensor is covered. Its transverse Takagi
singular values obey \(0\le s\le r\le1/2\). We first exclude
\(r=0\), then rule out \(r>s\), including \(s=0\) and the endpoint
\(r=1/2\). For \(r=s>0\), the remaining equations split into the
cases \(e=0\) and \(e\ne0\), which yield the two forms. We then
check their converse, disjointness, and stabilizers.

\subsection{A compact maximum and its transverse singular values}

Let \(a>0\) be the maximum of \(|f|\) on the complex unit sphere.
Change the phase of a maximizing vector, move it to \(e_0\) by
\(SU(3)\), and divide \(f\) by the positive number \(a\).
The equation remains valid by real homogeneity.
We then have \(f^{000}=1\). First variation in the two orthogonal
complex directions gives \(f^{001}=f^{002}=0\).
Takagi factorization on that two-plane, with the determinant phase
separated from the \(SU(2)\) change of basis, gives
\begin{equation}
\begin{split}
f={}&x^3+3x(by^2+dz^2)+ey^3+3ty^2z+3uyz^2+vz^3,\\
b={}&r\eta,\qquad d=s\eta,\qquad
|\eta|=1,\qquad 0\le s\le r\le\tfrac12.
\end{split}
\label{fs:eq:maximumchart}
\end{equation}
The common phase \(\eta\) is retained because the residual change
of basis is in \(SU(2)\).
To see the upper bound, evaluate \(f\) on
\(\sqrt{1-\epsilon^2}e_0+\epsilon e^{i\phi}e_j\).
The maximal second-order coefficient of its modulus is
\(-3/2+3|f^{0jj}|\), which must be nonpositive.

Here is the coefficient extraction behind the chart equations.
At \(x=e_0\), the gradient and Hessian of the cubic are
\[
\nabla\phi=(3,0,0)^{\mathsf T},\qquad
\nabla^2\phi=6\operatorname{diag}(1,b,d),\qquad
G_0=\operatorname{diag}(1,\bar b,\bar d).
\]
The last two diagonal entries of
\(\Omega(e_0)G_0\Omega(e_0)\) are \(-\bar d,-\bar b\).
For \(e=1,2\), the trace-projection term
\(x_e\operatorname{div}\mathcal P/6\) vanishes at \(e_0\).
Equation~\eqref{fs:eq:polynomialcovariant} therefore gives, for example,
\[
\mathcal E_1(e_0)
=2(b\bar e+d\bar u)-3(\bar d\bar e+\bar b\bar u).
\]
The other equation follows with \((\bar e,\bar u)\) replaced by
\((\bar t,\bar v)\). For the next equation, extraction of \(x^3y\)
in \(\mathcal E_2\), divided by two, gives
\[
2E_+^{0001}{}_2
=e\bar t+2t\bar u+u\bar v
 +3b(\bar t\bar u-\bar e\bar v).
\]
After \(u=\rho e\), \(v=\rho t\), the last parenthesis vanishes,
leaving \eqref{fs:eq:harmonic}.

The components
\(E_+^{0000}{}_1=E_+^{0000}{}_2=0\) give
\begin{equation}
\begin{aligned}
(2b-3\bar d)\bar e+(2d-3\bar b)\bar u&=0,\\
(2b-3\bar d)\bar t+(2d-3\bar b)\bar v&=0.
\end{aligned}
\label{fs:eq:chartlinear}
\end{equation}
If \(r=s=0\), another component is
\[
E_+^{0000}{}_0
=-\frac{4+|e|^2+3|t|^2+3|u|^2+|v|^2}{3}<0,
\]
which is impossible. Thus \(r>0\). Define
\begin{equation}
\zeta=\eta^2,\qquad
\rho=\frac{3s\zeta-2r}{2s-3r\zeta}.
\label{fs:eq:rho}
\end{equation}
The denominator is nonzero because \(3r>2s\).
Taking conjugates in \eqref{fs:eq:chartlinear} gives
\(u=\rho e,\ v=\rho t\).
The equation \(2E_+^{0001}{}_2=0\) now reduces to
\begin{equation}
(1+|\rho|^2)e\bar t+2\bar\rho\,t\bar e=0.
\label{fs:eq:harmonic}
\end{equation}

\subsection{Unequal singular values are impossible}

Suppose \(r>s\).
The difference between the squared moduli of the denominator and
numerator in \eqref{fs:eq:rho} is \(5(r^2-s^2)>0\).
Hence \(|\rho|<1\). Taking absolute values in \eqref{fs:eq:harmonic}
shows that \(e\bar t=0\), since
\(1+|\rho|^2>2|\rho|\).
There are therefore only two cases:
\(t=v=0\) or \(e=u=0\).

Let \(D_0=r^2-s^2\), \(R_0=|\rho|^2\), and set
\[
c_1=-4(b+\bar d)-2D_0(3b-\bar d),\qquad
c_2=-4(\bar b+d)+2D_0(3d-\bar b).
\]
The two coefficient extractions
\(6E_+^{0011}{}_0=[x^2y^2]\mathcal E_0\) and
\(6E_+^{0022}{}_0=[x^2z^2]\mathcal E_0\) give the master equations
\begin{align}
6E_+^{0011}{}_0={}&c_1+(8b-3\bar d)|e|^2+(12b+\bar d)|t|^2
 +(3b+4\bar d)|u|^2-b|v|^2\nonumber\\
&+(3d-7\bar b)(e\bar u+t\bar v),\nonumber\\
6E_+^{0022}{}_0={}&c_2-d|e|^2+(3d+4\bar b)|t|^2
 +(12d+\bar b)|u|^2+(8d-3\bar b)|v|^2\nonumber\\
&+(3b-7\bar d)(u\bar e+v\bar t).
\label{fs:eq:chartmaster}
\end{align}
Substitution of \(u=\rho e\), \(v=\rho t\), and the two possible
zero pairs in \eqref{fs:eq:chartmaster} yields the table below.
The equations \(6E_+^{0011}{}_0=6E_+^{0022}{}_0=0\) take the
form \(c_1+a_1w=c_2+a_2w=0\).
Their coefficients are as follows:
\begin{equation}
\begin{array}{c|c|l}
\text{case}&w&(a_1,a_2)\\ \hline
t=v=0&|e|^2&
\begin{aligned}
a_1&=b(8+3R_0)+(3d-7\bar b)\bar\rho+\bar d(4R_0-3),\\
a_2&=(3b-7\bar d)\rho+\bar bR_0+d(12R_0-1);
\end{aligned}\\[4pt]
e=u=0&|t|^2&
\begin{aligned}
a_1&=b(12-R_0)+(3d-7\bar b)\bar\rho+\bar d,\\
a_2&=\bar b(4-3R_0)+d(3+8R_0)+(3b-7\bar d)\rho .
\end{aligned}
\end{array}
\label{fs:eq:chartcoefficients}
\end{equation}
Eliminating the single real unknown \(w\) requires
\(c_1a_2-c_2a_1=0\).
Using \eqref{fs:eq:rho}, the two determinants factor respectively as
\begin{equation}
\frac{-8D_0H(r,s,\zeta)}
{\zeta(3r\zeta-2s)(2s\zeta-3r)},
\qquad
\frac{-8D_0H(s,r,\zeta)}
{\zeta(3r\zeta-2s)(2s\zeta-3r)},
\label{fs:eq:chartdeterminants}
\end{equation}
where
\begin{align}
H(r,s,\zeta)={}&72r^2s^2\zeta^4
 -(39r^3s+69rs^3)\zeta^3\nonumber\\
&+(-6r^4-34r^2s^2-16s^4+10r^2+10s^2)\zeta^2\nonumber\\
&+(45r^3s+55rs^3+20rs)\zeta-24r^2s^2.
\label{fs:eq:Hpolynomial}
\end{align}
For completeness, the determinant factorization can be checked without
division or a resultant. Introduce the four linear polynomials
\[
U=2s-3r\zeta,\quad V=2s\zeta-3r,\quad
M=3s\zeta-2r,\quad N=3s-2r\zeta.
\]
Then \(\rho=M/U\), \(\bar\rho=N/V\), and \(R_0=MN/(UV)\).
Multiplying the two equations by \(\eta UV\) gives
\(C_1UV+A_1w=C_2UV+A_2w=0\) in the first case and
\(C_1UV+B_1w=C_2UV+B_2w=0\) in the second, where
\begin{align}
C_1&=-4(r\zeta+s)-2D_0(3r\zeta-s),\nonumber\\
C_2&=-4(r+s\zeta)+2D_0(3s\zeta-r),\nonumber\\
A_1&=r\zeta(8UV+3MN)+(3s\zeta-7r)NU+s(4MN-3UV),\nonumber\\
A_2&=(3r\zeta-7s)MV+rMN+s\zeta(12MN-UV),\nonumber\\
B_1&=r\zeta(12UV-MN)+(3s\zeta-7r)NU+sUV,\nonumber\\
B_2&=r(4UV-3MN)+s\zeta(3UV+8MN)+(3r\zeta-7s)MV.
\label{fs:eq:clearedchartrows}
\end{align}
Here \(C_j=\eta c_j\), while \(A_j\) or \(B_j\) equals
\(\eta UVa_j\). Distributivity gives the polynomial identities
\begin{align}
C_1A_2-C_2A_1&=8D_0H(r,s,\zeta),\nonumber\\
C_1B_2-C_2B_1&=8D_0H(s,r,\zeta).
\label{fs:eq:clearedchartdeterminants}
\end{align}
Thus \eqref{fs:eq:chartdeterminants} follows on dividing by
\(\eta^2UV=\zeta UV\). Alternatively, multiply the two cleared
equations by \(A_2,-A_1\), or by \(B_2,-B_1\), and add.
This eliminates \(w\) directly. The identities hold over every
commutative ring; the nonvanishing argument below is the step that
uses the Hermitian maximum chart.

All denominators in \eqref{fs:eq:chartdeterminants} are nonzero.
Indeed \(|U|,|V|\ge3r-2s>0\), and \(|\zeta|=1\).
If \(s=0\), the two \(H\) factors reduce to
\(\zeta^2r^2(10-6r^2)\) and
\(\zeta^2r^2(10-16r^2)\). Both are nonzero for \(0<r\le1/2\).

For \(rs>0\), a zero of \(H\) on the unit circle would also obey
\begin{align}
0&=H(r,s,\zeta)-\zeta^4H(r,s,\zeta^{-1})\nonumber\\
&=4rs(\zeta^2-1)
\left[24rs(\zeta^2+1)-(21r^2+31s^2+5)\zeta\right].
\label{fs:eq:Hantisym}
\end{align}
Apart from \(\zeta=\pm1\), this would require
\[
48rs\,\operatorname{Re}\zeta=21r^2+31s^2+5.
\]
It is impossible by the explicit identity
\[
21r^2+31s^2+5-48rs
=5+\frac{(21r-24s)^2+75s^2}{21}>0.
\]
At the two remaining phases,
\begin{align}
H(r,s,1)&=-2(r+s)^2(3r^2-9rs+8s^2-5),\nonumber\\
H(r,s,-1)&=-2(r-s)^2(3r^2+9rs+8s^2-5).
\label{fs:eq:Hendpoints}
\end{align}
The first is nonzero because \(3r^2+8s^2\le11/4<5\).
For the second, \(3r^2+9rs+8s^2\le5\), with equality only at
\(r=s=1/2\), already excluded by \(r>s\).
The same estimates hold with \(r,s\) interchanged.
Thus neither determinant can vanish, a contradiction.
We have proved
\begin{equation}
r=s>0.
\label{fs:eq:equalsingular}
\end{equation}

\subsection{The two remaining forms}

Now \(b=d\), \(u=-e\), \(v=-t\), and \eqref{fs:eq:harmonic} gives
\(e\bar t=t\bar e\).
The coefficients \(e,t\) have a common phase.
A real \(SO(2)\) rotation acts with three times the angle on the
binary harmonic cubic, so it can set \(t=v=0\).
The form is reduced to
\[
f=x^3+3bx(y^2+z^2)+e(y^3-3yz^2).
\]
Four components of \eqref{fs:eq:covariant} suffice:
\begin{align}
E_+^{1222}{}_2&=-\tfrac14e(2|e|^2-1),\nonumber\\
E_+^{0011}{}_0&=\tfrac23(b+\bar b)(2|e|^2-1),\nonumber\\
E_+^{0000}{}_0
&=\tfrac43(6|b|^2-3b\bar e^{\,2}-3\bar b^{\,2}-|e|^2-1),\nonumber\\
E_+^{1111}{}_0
&=-3(2b^2+2|b|^2+2\bar b e^2-|e|^2).
\label{fs:eq:lastfour}
\end{align}
If \(e=0\), these equations give \(b+\bar b=0\) and
\(9|b|^2=1\), so \(b=\pm i/3\).
The planar \(SU(2)\) matrix
\[
\frac1{\sqrt2}\begin{pmatrix}1&\mp i\\ \mp i&1\end{pmatrix}
\]
transforms the cubic into \(x(x^2+2yz)\).
Restoring the real scale and cyclically relabeling coordinates
gives \eqref{fs:eq:secant}.

If \(e\ne0\), the first equation in \eqref{fs:eq:lastfour}
gives \(|e|^2=1/2\).
The conjugate of the last equation says
\[
3b\bar e^{\,2}
=\tfrac34-3\bar b^{\,2}-3|b|^2.
\]
Substitution into the third gives \(9|b|^2=9/4\).
The last equation then yields \(e^2=-4b^3\).
Since \(|b|^2=1/4\), set \(\lambda=-e/b\). The relation
\(e^2=-4b^3\) gives directly
\(\lambda^2=-4b\) and \(\lambda^3=4e\), while
\(|\lambda|^2=|e|^2/|b|^2=2\).
We obtain the factorization
\begin{equation}
f=(x+\lambda y)
\left(x-\frac{\lambda y}{2}+\frac{\sqrt3\lambda z}{2}\right)
\left(x-\frac{\lambda y}{2}-\frac{\sqrt3\lambda z}{2}\right).
\label{fs:eq:orthogonalfactors}
\end{equation}
Since \(|\lambda|^2=2\), the three coefficient vectors are pairwise
orthogonal in the Hermitian inner product.
Normalize them and compensate the determinant of their unitary
frame. This gives an \(SU(3)\) transformation to
\(\alpha x_0x_1x_2\), with \(\alpha\ne0\).

\subsection{Converse, disjointness, and stabilizers}

Both normal forms lie in the slice
\(f=6u x_0x_1x_2+v x_2^3\).
Define
\[
a_0=\frac{|v|^2}{3}(u-\bar u),\qquad
b_0=-4u\bar v(u-\bar u),\qquad
c_0=\frac{4v}{3}(6|u|^2+3\bar u^{\,2}-|v|^2).
\]
Using \eqref{fs:eq:polynomialcovariant}, the entire covariant on
this slice factors as
\begin{equation}
\begin{aligned}
\mathcal E_0&=2x_0x_2(6a_0x_0x_1-c_0x_2^2),\\
\mathcal E_1&=2x_1x_2(6a_0x_0x_1-c_0x_2^2),\\
\mathcal E_2&=6b_0x_0^2x_1^2-24a_0x_0x_1x_2^2+c_0x_2^4.
\end{aligned}
\label{fs:eq:conversecomponents}
\end{equation}
For the triangle, \(v=0\). For the secant,
\(u=c\in\R\), \(v=3c\).
In both cases all three quartic polynomials vanish.

The contraction \(M\) of \eqref{fs:eq:contractionM} is
\(2|u|^2\I\) on the triangle and
\(c^2\operatorname{diag}(2,2,11)\) on the secant.
This proves \eqref{fs:eq:fingerprint} and disjointness of the two types.

Unique factorization of the triangle makes its connected stabilizer
preserve each of the three orthogonal factor lines.
In \(SU(3)\), the resulting group is exactly the diagonal Cartan torus.
For the secant, the distinct eigenspaces of \(M\) preserve the
\(x_2\) line and its orthogonal two-plane.
Preserving \(x_2(2x_0x_1+x_2^2)\) makes the phase on the \(x_2\)
line a cube root of unity, hence equal to one in the connected group.
On the two-plane, preservation of \(x_0x_1\) gives opposite diagonal
phases. This is exactly \eqref{fs:eq:arithmetic}.
The proof of Lemma~\ref{fs:lem:intrinsic} is complete.

\subsection{A standard elliptic degeneration and the Hermitian boundary}
\label{fs:app:elliptic}

The \(Z_3\)-Belavin model provides an integrable three-state family
with discrete on-site symmetries~\cite{hao}. To compare it with the
present local theorem, both the density normalization and its real
form must be fixed. We examine the crossing-parameter degeneration
of the weights in Ref.~\cite[Section~B, Eqs.~(B.2) and (B.3)]{mansson}.
The elliptic modulus \(\Omega\) is initially fixed with
\(\operatorname{Im}\Omega>0\); \(\gamma\) denotes the crossing parameter.

Put \(\omega=e^{2\pi i/3}\), let
\(X\ket j=\ket{j-1}\) and \(Z\ket j=\omega^{j-1}\ket j\),
with indices modulo three, and define
\[
S_{nm}=\omega^{nm}X^nZ^m,\qquad
T_{nm}=S_{nm}\otimes S_{-n,-m},\qquad
\alpha_{nm}=\frac{n+m\Omega}{3}.
\]
These conventions give \(\sum_{n,m=0}^2T_{nm}=3P\).
Removing a common scalar factor from the standard weights and
rescaling their spectral argument to \(3\gamma u\) gives the
regular braided solution
\begin{equation}
\check R_{\gamma,\Omega}(u)
=\frac{P}{3}\sum_{n,m=0}^2
e^{2\pi i m\gamma u}
\frac{\theta_1(\gamma(1+3u)+\alpha_{nm}\mid\Omega)}
{\theta_1(\gamma+\alpha_{nm}\mid\Omega)}T_{nm}.
\label{fs:eq:belavin-scaled}
\end{equation}
Canceling the simple zero of \(\theta_1\) at the origin gives a
joint analytic extension of the zero-torsion ratio, with value
\(1+3u\) at \(\gamma=0\). Every other denominator is nonzero
there, and its ratio tends to one. Consequently
\(\check R_{0,\Omega}(u)=\I+uP\).
Thus the base point and joint analyticity alone do not exclude
this degeneration.

Write \(E(z)=\theta_1'(z\mid\Omega)/\theta_1(z\mid\Omega)\).
Its density and first variation are
\begin{equation}
\begin{split}
h_{\gamma,\Omega}
&=\gamma P\sum_{n,m=0}^2
\left(\frac{2\pi i m}{3}+E(\gamma+\alpha_{nm})\right)T_{nm}
=P+\gamma K_\Omega+O(\gamma^2),\\
K_\Omega
&=P\sum_{(n,m)\ne(0,0)}
\left(\frac{2\pi i m}{3}+E(\alpha_{nm})\right)T_{nm}.
\end{split}
\label{fs:eq:belavin-density}
\end{equation}
The zero-torsion term has no linear contribution because
\(\gamma E(\gamma)=1+O(\gamma^2)\).

Here the first variation can be identified exactly in the tensor
coordinates of \eqref{fs:eq:tensor}. Set
\[
Q=e^{2\pi i\Omega},\qquad \xi=e^{2\pi i\Omega/3},\qquad
\mathfrak a(q)=\sum_{r,l\in\mathbb Z}q^{r^2-rl+l^2},\qquad
\mathcal U=\mathfrak a(Q),\quad
\mathcal V=\frac{\mathfrak a(\xi)-\mathcal U}{2}.
\]
The needed three-torsion identities are
\begin{equation}
E(1/3)=\frac{\pi}{\sqrt3}\mathcal U,\qquad
E((\Omega+j)/3)+\frac{2\pi i}{3}
=-\frac{i\pi}{3}(\mathcal U+2\omega^j\mathcal V),\quad j=0,1,2.
\label{fs:eq:belavin-torsion}
\end{equation}
For their derivation, logarithmic differentiation of the convergent
theta product, with \(x=e^{2\pi iz}\), gives
\[
\frac{E(z)}{2\pi i}
=-\frac12-\frac{x}{1-x}
-\sum_{n\ge1}\frac{Q^nx}{1-Q^nx}
+\sum_{n\ge1}\frac{Q^n/x}{1-Q^n/x}.
\]
Evaluate this at \(z=1/3,(\Omega+j)/3\) and use the Lambert
expansion~\cite[Eq.~(2.11)]{cooper}
\[
\mathfrak a(q)=1+6\sum_{k\ge0}
\left(\frac{q^{3k+1}}{1-q^{3k+1}}
-\frac{q^{3k+2}}{1-q^{3k+2}}\right).
\]
This yields \(E(1/3)=\pi\mathfrak a(Q)/\sqrt3\) and
\(E((\Omega+j)/3)+2\pi i/3=-i\pi\mathfrak a(\omega^j\xi)/3\).
Finally, \(r^2-rl+l^2\equiv(r+l)^2\pmod3\).
The residue-zero sublattice is parametrized by
\((r,l)=(a+b,-a+2b)\) and contributes \(\mathfrak a(Q)\);
the other terms all have exponent one modulo three. Hence
\(\mathfrak a(\omega^j\xi)=\mathcal U+2\omega^j\mathcal V\), proving
\eqref{fs:eq:belavin-torsion}.

Using oddness and quasiperiodicity of \(E\), substitute these
identities into the eight nonzero terms of
\eqref{fs:eq:belavin-density}. The resulting coordinates are
\begin{equation}
\begin{gathered}
A=B=0,\qquad g=0,\qquad
f_\Omega=a_\Omega(x_0^3+x_1^3+x_2^3)+6b_\Omega x_0x_1x_2,\\
a_\Omega=\frac{2\pi}{\sqrt3}(\mathcal U-\mathcal V),\qquad
b_\Omega=-\frac{\pi}{\sqrt3}(\mathcal U+2\mathcal V),\qquad
K_\Omega=K_{f_\Omega,0}.
\end{gathered}
\label{fs:eq:belavin-firstjet}
\end{equation}
The finite identification can be seen on a single three-dimensional
block. If \(k_{nm}=2\pi im/3+E(\alpha_{nm})\), then
\[
PT_{nm}|jk\rangle
=\omega^{m(2n+j-k)}|k+n,j-n\rangle .
\]
In the ordered basis \((00,12,21)\), summing these phases gives
\begin{equation}
K_\Omega\big|_{(00,12,21)}
=\begin{pmatrix}0&a_\Omega&-a_\Omega\\
0&b_\Omega&-b_\Omega\\0&b_\Omega&-b_\Omega\end{pmatrix}.
\label{fs:eq:belavinblock}
\end{equation}
For example, oddness and quasiperiodicity give
\(k_{n2}=-k_{-n,1}\), with indices taken modulo three, and
\(k_{10}=-k_{20}=\pi\mathcal U/\sqrt3\). Consequently
\[
\sum_{m=0}^2 k_{1m}\omega^m
=\frac{2\pi}{\sqrt3}(\mathcal U-\mathcal V),\qquad
\sum_{m=0}^2 k_{2m}
=-\frac{\pi}{\sqrt3}(\mathcal U+2\mathcal V).
\]
The zero column and the negative second column use the same sums and
\(1+\omega+\omega^2=0\). Common cyclic shift supplies the other two
blocks. Equation~\eqref{fs:eq:belavinblock} is precisely the block of
\(K_{f_\Omega,0}\), proving \eqref{fs:eq:belavin-firstjet}.
The Hermitian condition \eqref{fs:eq:realform} would require
\(0=g=\bar f_\Omega\). Equivalently,
\(K_{f_\Omega,0}\) maps \(\Lambda^2V\) into \(\Sym^2V\)
and vanishes on \(\Sym^2V\), so its square is zero.
Every nonzero such matrix is non-Hermitian.

The same conclusion holds for an analytic path
\((\gamma(\tau),\Omega(\tau))\) with \(\gamma(0)=0\)
and finite \(\Omega(0)\) in the upper half-plane: since
\(h_{0,\Omega}=P\), its first derivative is
\(\gamma'(0)K_{\Omega(0)}\). It is either zero or a nonzero
square-zero matrix. This identifies the Hermitian obstruction within
this regular elliptic chart; other singular joint degenerations are
outside the scope of this comparison.

\section{Exact analytic realizations of every direction}
\label{fs:app:lifts}

\subsection{The Cartan arc and its closed spectral matrix}

Write \(\alpha=6(a+ib)\), with \(a,b\in\R\), and define
\[
\kappa=6a\tau,\qquad
\ell=\sqrt{1+\kappa^2},\qquad
p=e^{2ib\tau},\qquad q=e^{-2ib\tau}.
\]
The branch of the square root has \(\ell(0)=1\).
Define \(h_T(\tau)\) to have value \(\ell\) on each \(\ket{aa}\).
In each of the cyclically ordered blocks
\((01,10)\), \((12,21)\), and \((20,02)\), let it be
\begin{equation}
\begin{pmatrix}\kappa/3&q\\ p&-\kappa/3\end{pmatrix}.
\label{fs:eq:triangle-density}
\end{equation}
It is Hermitian for real \(\tau\), equals \(P\) at zero, and has
\[
h_T'(0)=K_{f,\bar f},\qquad f=\alpha x_0x_1x_2.
\]
The factor of six here follows from the tensor convention
\(f^{012}=\alpha/6\).

This density is a specialization of the first fifteen-vertex family
of Vieira~\cite{vieira}. In the notation of Eq.~(10) of that reference,
the parameter map is
\begin{equation}
\begin{gathered}
\alpha_{11}=\alpha_{55}=\alpha_{99}=\ell,\qquad
\alpha_{24}=\alpha_{68}=\kappa/3,\qquad
\alpha_{42}=\alpha_{37}=-\kappa/3,\\
\alpha_{22}=\alpha_{66}=q,\qquad \alpha_{33}=p.
\end{gathered}
\label{fs:eq:triangle-vieira}
\end{equation}
The derived parameters satisfy \(\omega=\kappa\) and
\(\omega\Omega=\ell^2-\kappa^2=pq=1\), so the Hamiltonian matrices
agree entry by entry, including the cyclic ordering of the \((20,02)\)
block. The case \(\kappa=0\) is obtained by analytic continuation.
The spectral matrix below uses the braided Yang--Baxter convention of
Sec.~\ref{main:sec:setting}.

A closed regular braided matrix for this density has entry
\begin{equation}
A(u)=\cosh(\kappa u)+\ell\frac{\sinh(\kappa u)}{\kappa}
\label{fs:eq:triangleA}
\end{equation}
on each equal-state sector and blocks
\begin{equation}
\begin{pmatrix}
e^{\kappa u/3}&q\,\sinh(\kappa u)/\kappa\\
p\,\sinh(\kappa u)/\kappa&e^{-\kappa u/3}
\end{pmatrix}
\label{fs:eq:triangle-R}
\end{equation}
in the three cyclic sectors above.
The quotient \(\sinh(\kappa u)/\kappa\) is continued analytically
to \(u\) at \(\kappa=0\).
Thus every displayed entry is jointly analytic in \((\tau,u)\).
Regularity and the prescribed spectral derivative follow directly.

For completeness, the scalar identities underlying the full
Yang--Baxter equation can be displayed without expanding a
\(27\times27\) residual. Put \(S_u=\sinh(\kappa u)/\kappa\).
Since \(pq=1=\ell^2-\kappa^2\), hyperbolic addition gives
\begin{align}
A(u)A(v)-pqS_uS_v&=A(u+v),\nonumber\\
S_{u+v}A(u)-S_uA(u+v)&=S_v,\nonumber\\
S_{u+v}&=e^{\kappa v}S_u+e^{-\kappa u}S_v.
\label{fs:eq:triangleidentities}
\end{align}
Also \(e^{\pm\kappa(u+v)/3}
=e^{\pm\kappa u/3}e^{\pm\kappa v/3}\).
Separate three-site words into those with one, two, or three distinct
labels. One-label words give identical scalar products. On the
three-dimensional spaces with two labels, matrix multiplication
reduces to the first two identities of \eqref{fs:eq:triangleidentities}
and their \(u,v\) interchanges. On the six permutations of three
distinct labels, it reduces to the last identity and its interchanges;
the cyclic diagonal factors occur in cubes,
\((e^{\pm\kappa u/3})^3=e^{\pm\kappa u}\).
The constant exchange factors match on the two sides.
This verifies the full Yang--Baxter equation stated in
Sec.~\ref{main:sec:setting}.
The spectral matrix preserves the whole Cartan torus.

\subsection{A rational Hermitian secant arc}

Let \(D=s^4-s^2+1\). In this subsection the lowercase letters
below denote scalar matrix entries:
\begin{align}
b&=-\frac{s^4+s^2r-2s^2+1}{s^2},&
f&=\frac{s^2}{D},&
g&=-\frac{s^3(s^2-1)}{D},\nonumber\\
e&=-\frac{2s^8+2s^6r-4s^6-2s^4r+5s^4+2s^2r-4s^2+1}
{s^2D},\nonumber\\
k&=\frac{s^6+2s^4r-2s^4-2s^2r+3s^2+2r-2}{D},&
i&=\frac{s^2-1}{sD},\nonumber\\
j&=-\frac{s^8-3s^6+3s^4-3s^2+1}{s^2D}.
\label{fs:eq:secantentries}
\end{align}
In the lexicographic order \(00,01,02,10,11,12,20,21,22\), define
\begin{equation}
H_{\rm sec}(s,r)=
\begin{pmatrix}
1&0&0&0&0&0&0&0&0\\
0&e&0&f&0&0&0&0&g\\
0&0&b&0&0&0&1&0&0\\
0&f&0&k&0&0&0&0&i\\
0&0&0&0&1&0&0&0&0\\
0&0&0&0&0&r&0&1&0\\
0&0&1&0&0&0&r&0&0\\
0&0&0&0&0&1&0&b&0\\
0&g&0&i&0&0&0&0&j
\end{pmatrix}.
\label{fs:eq:secantmatrix}
\end{equation}
It is real symmetric for real parameters and satisfies
\(H_{\rm sec}(1,0)=P\). Every entry preserves the sum of
the one-site weights \(1,-1,0\).

The density belongs to the generalized Izergin--Korepin family in
Ref.~\cite{cfr}. Denote the matrix printed in its Eq.~(5.7),
Sec.~5.1.2, by \(\widetilde H_{\rm IK}(\zeta,\xi)\), where
\(\zeta=k\) and \(\xi=\tau'_p\) in that reference.
Define the common basis permutation and the one-site matrix
\[
W\ket0=\ket1,\qquad W\ket1=\ket2,\qquad W\ket2=\ket0,
\qquad q_*=\operatorname{diag}(-1,1,0).
\]
Direct substitution gives
\begin{equation}
\begin{split}
H_{\rm sec}(s,r)={}&\I-\frac{s^4-1}{s^2}
(W\otimes W)\widetilde H_{\rm IK}(s^2,1)
(W^\dagger\otimes W^\dagger)\\
&+(r-1+s^2)\Tel(q_*).
\end{split}
\label{fs:eq:secant-ik}
\end{equation}
The identity initially holds for \(s\ne1\) near 1 and has a removable
limit there. The rescaled matrix without the last term is
\(H_{\rm sec}(s,1-s^2)\); changing \(r\) adds exactly the displayed
telescope. This identifies the known model in the density normalization
used here. Appendix~\ref{fs:app:ikmatrix} lifts this parameter map to the
explicit spectral matrix and resolves its apparent singularity at
\(s=1\).

Choose
\begin{equation}
r(s)=\frac{(1-s^2)(s^4-s^2+2)}{3D},\qquad
s(\tau)=1-\frac{3c}{2}\tau,\qquad
h_{\rm sec}(\tau)=H_{\rm sec}(s(\tau),r(s(\tau))).
\label{fs:eq:secantarc}
\end{equation}
Differentiation gives
\[
\left.\frac{d}{ds}H_{\rm sec}(s,r(s))\right|_{s=1}
=K_{f_*,f_*},\qquad
f_*=-2x_2(2x_0x_1+x_2^2).
\]
Consequently \(h_{\rm sec}'(0)=K_{f,f}\) with
\(f=3c\,x_2(2x_0x_1+x_2^2)\), exactly as in
\eqref{fs:eq:secant}.

The density identity follows from the spectral matrix constructed in
Appendix~\ref{fs:app:ikmatrix}. The following general derivative argument
also gives its boundary witness. Write a regular additive braided
solution as \(R(u)=\I+uh+u^2a+u^3b+O(u^4)\). The coefficient of \(uv\)
in its Yang--Baxter equation gives
\(2\Delta a=\Delta(h^2)\); hence
\(a=h^2/2+c\I\), since \(\ker\Delta=\C\I\).
The coefficient of \(u^2v\), with \(X=h_{12}\), \(Z=h_{23}\), is
\[
0=3\Delta b-\tfrac12[X+Z,[X,Z]]
  -\tfrac12\Delta(h^3)-3c\Delta h.
\]
Thus an explicit two-site witness is
\begin{equation}
Y=R'''(0)-3hR''(0)+2h^3,\qquad
F(h)=Y_{23}-Y_{12}.
\label{fs:eq:secantwitness}
\end{equation}
For \(R=\check R_{\rm sec}(s,r;u)\) in
\eqref{fs:eq:iksecantR}, its first derivative is exactly
\(H_{\rm sec}(s,r)\), so
\begin{equation}
F(H_{\rm sec}(s,r))=Y_{23}-Y_{12}.
\label{fs:eq:secantresh}
\end{equation}
The derivatives in \eqref{fs:eq:secantwitness} involve only the displayed
two-site rational weights and exponential factors. A scalar multiple
of \(\I\) may be subtracted from \(Y\) to impose \(Y_{00,00}=0\).
The construction below uses the density map and the standard BMW
solution, independently of \eqref{fs:eq:secantresh}.

All denominators are nonzero near \(s=1\).
Thus \eqref{fs:eq:secantarc} is an exact analytic Hermitian
Reshetikhin arc, with the desired fixed \(U(1)\) symmetry.
Appendix~\ref{fs:app:ikmatrix} gives a jointly analytic regular
Yang--Baxter matrix with exactly this density and fixed symmetry.

\subsection{Adding all allowed one-site terms}

Let \(k_\tau\) be either intrinsic arc, with spectral matrix
\(\check R_{\rm int}(\tau,u)\).
Suppose
\[
A^\dagger=A,\quad B^\dagger=-B,\quad
A,B\in\Stab(f,\bar f),\quad[A,B]=0.
\]
The chosen intrinsic family preserves the common symmetry of \(A,B\).
Put \(D_\tau=e^{-\tau B}\) and define
\begin{equation}
h(\tau)=(D_\tau\otimes\I)k_\tau
 (D_\tau^{-1}\otimes\I)+\tau\Tel(A).
\label{fs:eq:combined-density}
\end{equation}
Its exact spectral realization is
\begin{equation}
\begin{split}
\check R(\tau,u)={}&
(\I\otimes e^{-\tau uA})(D_\tau\otimes\I)
\check R_{\rm int}(\tau,u)\\
&\hspace{7mm}\cdot(D_\tau^{-1}\otimes\I)
(e^{\tau uA}\otimes\I).
\end{split}
\label{fs:eq:combined-R}
\end{equation}

The spatial conjugation preserves the Yang--Baxter equation stated in
Sec.~\ref{main:sec:setting}: on three sites,
\(D_\tau^2\otimes D_\tau\otimes\I\) conjugates both adjacent
intrinsic matrices to their twisted versions, using their common
\(B\) symmetry. To verify the spectral factors, pass to the
unbraided matrix \(R=P\check R\) and apply the rapidity-dependent
one-site basis change \(g(z)=e^{-\tau zA}\).
The standard cancellation of the \(g(z)\) factors on each
rapidity line preserves the Yang--Baxter equation.
The common total-\(A\) symmetry makes the transformed matrix depend
only on the rapidity difference \(u\). Returning to the braided
convention gives the first and last factors in \eqref{fs:eq:combined-R}.
The relation \([A,B]=0\) ensures that the spatial twist retains
this total-\(A\) symmetry.

Regularity is immediate. The spectral derivative is
\eqref{fs:eq:combined-density}, which is Hermitian for real \(\tau\)
because \(D_\tau\) is unitary.
Taking its model derivative at zero gives
\[
h'(0)=K_{f,\bar f}+\Hel(B)+\Tel(A).
\]
All operations are jointly analytic.
On the pure dressing locus \(f=0\), take
\(k_\tau=P\) and \(\check R_{\rm int}=\I+uP\) in the same
formulas. This covers every point of \eqref{fs:eq:classification},
including the intrinsic origin.

\section{The explicit Izergin--Korepin matrix and its regular limit}
\label{fs:app:ikmatrix}

We construct the spectral matrix for \eqref{fs:eq:secantarc} directly
from the Izergin--Korepin solution~\cite{ik,jimbo}. The rational
normalization below follows Ref.~\cite{ffr}, with its anisotropy
\(k=s^2\) and the analytic choice \(\sqrt{k}=s\) near \(s=1\).

\subsection{The spectral matrix and the density map}

In lexicographic order, define the unbraided matrix
\begin{equation}
R_{\rm IK}(s,x)=
\begin{pmatrix}
1&0&0&0&0&0&0&0&0\\
0&b&0&c_-&0&0&0&0&0\\
0&0&f&0&d_-&0&h_-&0&0\\
0&c_+&0&b&0&0&0&0&0\\
0&0&d_+&0&g&0&d_-&0&0\\
0&0&0&0&0&b&0&c_-&0\\
0&0&h_+&0&d_+&0&f&0&0\\
0&0&0&0&0&c_+&0&b&0\\
0&0&0&0&0&0&0&0&1
\end{pmatrix},
\label{fs:eq:ikmatrix}
\end{equation}
where \(L=(x+s^6)(x-s^4)\) and
\begin{align}
b&=\frac{s^2(x-1)}{x-s^4},&
c_-&=\frac{1-s^4}{x-s^4},&c_+&=xc_-,\nonumber\\
d&=\frac{s(1-s^4)(x-1)}{L},&
d_-&=s^4d,&d_+&=-xd,\nonumber\\
f&=\frac{s^4(x+s^2)(x-1)}{L},&&&\nonumber\\
g&=\frac{s^2(x+s^6)(x-1)+x(s^6+1)(1-s^4)}{L},&&&\nonumber\\
h_-&=\frac{[x+s^6+s^4(x-1)](1-s^4)}{L},&&&\nonumber\\
h_+&=\frac{x[x+s^6-s^2(x-1)](1-s^4)}{L}.&&&
\label{fs:eq:ikweights}
\end{align}
The overall factor \(x\) in \(h_+\) restores an omission in
Eq.~(4.11) of the arXiv v1 of Ref.~\cite{ffr}.
With this factor, the displayed weights obey the full Yang--Baxter
identity and have the density map below.

For \(s\ne1\) near 1, \(R_{\rm IK}(s,1)=P\).
The braided matrix \(\check R_{\rm IK}=PR_{\rm IK}\) satisfies
\begin{equation}
\check R_{{\rm IK},12}(s,x)
\check R_{{\rm IK},23}(s,xy)
\check R_{{\rm IK},12}(s,y)
=
\check R_{{\rm IK},23}(s,y)
\check R_{{\rm IK},12}(s,xy)
\check R_{{\rm IK},23}(s,x).
\label{fs:eq:ikyb}
\end{equation}
Here is an analytic identification that proves this identity for independent
\(s,x,y\). Set \(q=s^{-2}\), \(\omega_q=q-q^{-1}\), and let
\(\sigma\) have the following blocks in the indicated two-site bases:
\begin{equation}
\begin{array}{c|c}
00\text{ or }22&q\\[2pt]
(01,10)\text{ or }(12,21)&
\begin{pmatrix}\omega_q&1\\1&0\end{pmatrix}\\[6pt]
(02,11,20)&
\begin{pmatrix}
(1-s^2)\omega_q&-s\omega_q&s^2\\
-s\omega_q&1&0\\s^2&0&0
\end{pmatrix}.
\end{array}
\label{fs:eq:ikbraidblocks}
\end{equation}
These blocks identify a standard spin-one BMW representation explicitly.
In the conventions of Ref.~\cite{bmw}, put \(\eta=-\log s\),
\(G=\operatorname{diag}(1,i,1)\), and \(G_2=G\otimes G\).
Then
\[
\sigma=G_2P\check R_{\rm const}(\eta)PG_2^{-1},
\]
where \(\check R_{\rm const}\) is Eq.~(II.22) of that reference.
The \(P\) conjugation reverses neighboring sites; reversing all three
sites preserves the braid relation. Common \(G\) conjugation also
preserves it. The BMW parameter is \(\nu=q^{-2}=s^4\).
For example, its rank-one element is
\[
E=vv^{\mathsf T},\qquad
v=s|02\rangle+|11\rangle+s^{-1}|20\rangle,\qquad
\sigma E=s^4E,\quad
\sigma-\sigma^{-1}=\omega_q(\I-E).
\]
These equalities can also be read directly from the blocks above.

Put \(B=s^2\sigma\). The weights in \eqref{fs:eq:ikweights} combine to
the single polynomial identity
\begin{equation}
L(s,x)\check R_{\rm IK}(s,x)
=x(x-1)B+s^{10}(x-1)B^{-1}
 +x(1-s^4)(1+s^6)\I.
\label{fs:eq:ikbaxterization}
\end{equation}
To obtain it, the coefficients of \(x^2\), \(x^0\), and the value at
\(x=1\) of the left-hand side are read from the two-site blocks:
they are \(B\), \(-s^{10}B^{-1}\), and
\((1-s^4)(1+s^6)\I\), respectively. These three data determine
the quadratic polynomial.

The standard BMW spectral combination is
\[
\mathcal S_q(z)=\tfrac12q^{3/2}(1-e^{-z})\sigma
-\tfrac12(q^{3/2}+q^{-3/2})(q-q^{-1})\I
-\tfrac12q^{-3/2}(1-e^z)\sigma^{-1}.
\]
It obeys additive braided Yang--Baxter, with the normalization of
Ref.~\cite{bmw}, Eq.~(III.9). Substituting \(z=-\log x\) and multiplying
by \(-2s^5x\) gives exactly the right-hand side of
\eqref{fs:eq:ikbaxterization}. Additivity becomes multiplication of
\(x,y\), and scalar normalizations cancel. This proves
\eqref{fs:eq:ikyb} locally and hence as a rational identity.
The three-site coefficient check in the accompanying code is an
independent verification of this analytic identification.

Use \(W\) and \(q_*\) from Appendix~\ref{fs:app:lifts} and put
\begin{equation}
\lambda(s)=\frac{1-s^4}{s^2},\qquad
x(s,u)=e^{\lambda(s)u},\qquad
\beta(s,r)=r-1+s^2.
\label{fs:eq:ikparameters}
\end{equation}
Define
\begin{equation}
\begin{split}
\check R_{\rm sec}(s,r;u)={}&(1+u)
(\I\otimes e^{-\beta(s,r)u q_*})
(W\otimes W)\check R_{\rm IK}(s,x(s,u))\\
&\hspace{8mm}\cdot(W^\dagger\otimes W^\dagger)
(e^{\beta(s,r)u q_*}\otimes\I).
\end{split}
\label{fs:eq:iksecantR}
\end{equation}
Since
\(W\operatorname{diag}(0,1,2)W^\dagger=\I-q_*\),
the exponential factors are the total-charge-preserving spectral
transformation proved in \eqref{fs:eq:combined-R}. The substitution
\(x=e^{\lambda u}\) turns \eqref{fs:eq:ikyb} into the additive equation;
the scalar factor \(1+u\) and common basis change also preserve it.
Differentiation at \(u=0\) gives the exact identity
\begin{equation}
\begin{split}
\partial_u\check R_{\rm sec}(s,r;0)
={}&\I+\lambda(s)(W\otimes W)
\left.P\partial_xR_{\rm IK}(s,x)\right|_{x=1}
(W^\dagger\otimes W^\dagger)\\
&+\beta(s,r)\Tel(q_*)
=H_{\rm sec}(s,r).
\end{split}
\label{fs:eq:ikdensity}
\end{equation}
Thus the spectral construction lifts the full two-parameter density
map \eqref{fs:eq:secant-ik}.

\subsection{Joint analyticity at the permutation point}

The apparent singularity at \(s=1\) is removable. Define the entire function
\[
\operatorname{exprel}(z)=\sum_{n=0}^{\infty}\frac{z^n}{(n+1)!},
\qquad
\Phi(s,u)=u\operatorname{exprel}(\lambda(s)u).
\]
Then
\begin{equation}
x-1=\lambda\Phi,\qquad
x-s^4=\lambda(s^2+\Phi),\qquad
\mathcal A=s^2+\Phi,\qquad \mathcal B=x+s^6.
\label{fs:eq:ikcancellation}
\end{equation}
Cancelling \(\lambda\) in \eqref{fs:eq:ikweights} gives
\begin{align}
b&=\frac{s^2\Phi}{\mathcal A},&
c_-&=\frac{s^2}{\mathcal A},&c_+&=xc_-,\nonumber\\
d&=\frac{s^3\lambda\Phi}{\mathcal A\mathcal B},&
d_-&=s^4d,&d_+&=-xd,\nonumber\\
f&=\frac{s^4(x+s^2)\Phi}{\mathcal A\mathcal B},&&&\nonumber\\
g&=\frac{s^2[\mathcal B\Phi+x(s^6+1)]}
{\mathcal A\mathcal B},&&&\nonumber\\
h_-&=\frac{s^2[\mathcal B+s^4\lambda\Phi]}
{\mathcal A\mathcal B},&&&\nonumber\\
h_+&=\frac{xs^2[\mathcal B-s^2\lambda\Phi]}
{\mathcal A\mathcal B}.&&&
\label{fs:eq:ikregularweights}
\end{align}
At \((s,u)=(1,0)\), the only denominators have
\(\mathcal A=1\), \(\mathcal B=2\), and \(s\ne0\).
All entries are therefore jointly analytic on a product neighborhood.
At \(s=1\), \(\Phi=u\) and the unbraided matrix becomes
\begin{equation}
\left.R_{\rm IK}(s,x(s,u))\right|_{s=1}=\frac{P+u\I}{1+u},\qquad
\check R_{\rm sec}(1,0;u)=\I+uP,
\label{fs:eq:iklimit}
\end{equation}
where the left side denotes the analytic extension just constructed.
Regularity holds throughout the neighborhood. The full Yang--Baxter
identity extends to \(s=1\) by analyticity, as do \eqref{fs:eq:ikdensity}
and the fixed total-\(q_*\) symmetry.

Finally, insert \(s(\tau)\) and \(r(s(\tau))\) from
\eqref{fs:eq:secantarc} into \eqref{fs:eq:iksecantR}. The resulting
\(\check R(\tau,u)\) is jointly analytic and regular, has the
Hermitian density \(h_{\rm sec}(\tau)\), and retains the same
nontrivial charge for every \((\tau,u)\). Together with the twists
in Appendix~\ref{fs:app:lifts}, this proves the secant sufficiency
statement for every allowed first derivative.

\section{Exact analytic continuation of the charge}
\label{fs:app:continuation}

We prove Theorem~\ref{fs:thm:main} by retaining the full analytic equations
near \(P\). The first-order classification supplies their leading
directions, but an additional argument is needed to control an arbitrary
given arc. We first construct genuine analytic equations with the required
initial terms. Nonzero intrinsic leaders are then treated by local
exhaustion with the exact families. The remaining leaders are treated by
equivariant elimination and a positive scalar identity.

\subsection{An equivariant analytic section}
\label{fs:app:equivariantsection}

Project \(h\) orthogonally onto \(\Span\{\I,P\}\), write the projection
as \(a\I+bP\), and put \(h_n=(h-a\I)/b\). Near \(P\), \(b\) is
real analytic and nonzero, with \(b(0)=1\). This normalization preserves
all one-site stabilizers and is implemented spectrally by a scalar factor
and a spectral rescaling.

Use invariant Hilbert--Schmidt inner products on Hermitian operator
spaces. Let \(\pi\) project onto the orthogonal complement of
\(\operatorname{im}\Delta\), and set \(r(h)=\pi F(h)\).
The normalized derivative \(L=Dr_P\) has the 36-dimensional odd kernel
\(\mathcal T\) of \eqref{fs:eq:chart}. On the 43-dimensional even complement
\(\mathcal N\), orthogonal to \(\I,P\), it is an isomorphism onto
\(\operatorname{im}L\). Solve the \(\operatorname{im}L\) projection of
\(r=0\) for the normal coordinate. The analytic implicit-function theorem
gives
\begin{equation}
\mathfrak h(x)=P+K(x)+n(x),\qquad n(x)\in\mathcal N,
\qquad n(x)=O(\norm{x}^2),
\label{fs:eq:equivariantsection}
\end{equation}
where \(x=(A,B;f,g)\) and \(K(x)\) is \eqref{fs:eq:decomposition}.
Every normalized solution lies in this section. Uniqueness and invariance
of the projections make \(n\) \(SU(3)\)-equivariant for every small
\(x\), including points that fail the remaining integrability equations.
This ambient equivariance will be essential below.

Let \(\mathcal I\) be the analytic ideal generated by the entries of
\(r(\mathfrak h(x))\), and let \(\operatorname{in}\mathcal I\) denote
its ideal of lowest homogeneous terms. Complex tensor coordinates mean
the complexification of this real-analytic construction; on real
parameters we always impose \eqref{fs:eq:realform}.

\begin{lemma}[Genuine initial equations]
\label{fs:lem:genuineinitial}
Every component of
\begin{equation}
A\cdot f,\quad A\cdot g,\quad B\cdot f,\quad B\cdot g,
\quad E_{70}(f,g),\quad [B,[A,B]]
\label{fs:eq:genuineinitial}
\end{equation}
is the initial term of an analytic element of \(\mathcal I\).
The respective degrees are two and three. More precisely, for each
component \(g_i\) there are analytic coefficients \(a_{i\nu}(x)\) such that
\begin{equation}
G_i(x)=\sum_\nu a_{i\nu}(x)r_\nu(\mathfrak h(x))
=g_i(x)+O(\norm x^{d_i+1}),\qquad d_i=\deg g_i.
\label{fs:eq:boundedrepresentatives}
\end{equation}
All coefficients are bounded on one sufficiently small closed ball.
No assertion that these components generate the full analytic ideal
is needed.
\end{lemma}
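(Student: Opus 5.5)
The plan is to reuse the formal jet computations of Lemma~\ref{fs:lem:necessary}, now along the section \(\mathfrak h\) instead of along a one-parameter arc. The section is \(\mathfrak h(x)=P+K(x)+n(x)\), with \(n=O(\norm x^2)\) even and normal. Its normal part is fixed by the \(\operatorname{im}L\) projection of \(r=0\). Hence \(r(\mathfrak h(x))\) takes values only in a complement \(\mathcal C\) of \(\operatorname{im}L\) inside the range of \(\pi\). The linear functionals used in the lemma (\(\lambda_0,\lambda_1,\ell\), and \(\mu_3\)) all vanish on \(\operatorname{im}\Delta\) and on \(\operatorname{im}J\). They therefore factor through \(\pi\) and through the quotient by \(\operatorname{im}L\), and so define linear combinations of the components \(r_\nu\). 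Applying such a functional to \(r(\mathfrak h(x))\) gives an analytic element of \(\mathcal I\) with constant coefficients \(a_{i\nu}\). These are trivially bounded on any ball.

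For the quadratic components, expand \(F(\mathfrak h(x))=F(P)+J(K+n)+\mathscr Q(K)+O(\norm x^3)\). The \(J(n)\) term is killed by the functionals, and \(F(P)\in\operatorname{im}\Delta\). So \(\lambda_0,\lambda_1\) yield \(2((A+B)\cdot f)^{000}+O(\norm x^3)\) and \(6(B\cdot f)^{000}+O(\norm x^3)\) by \eqref{fs:eq:twoactionrows}. Ambient \(SU(3)\)-equivariance of \(n\) now matters. Composing these functionals with common basis changes \(g\) gives elements \(x\mapsto\lambda(g\cdot r(\mathfrak h(g^{-1}x)))\). Because \(\mathfrak h\) and \(\pi\) are equivariant, these lie in \(\mathcal I\). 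Their initial terms span the orbit of the \(000\) component, which is the full \(A\cdot f,B\cdot f\) module. Finitely many \(g\) suffice, since the module is finite-dimensional. Complexification then handles the \(g\)-components through the transpose symmetry.

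For the cubic components, take the third-order part. It equals \(J(n_3)+\mathscr B(K,n_2)+F(K)\), where \(n_2\) is the quadratic part of \(n\). We know \(J(n_2)+\mathscr Q(K)\in\mathcal C\oplus\operatorname{im}\Delta\) to second order. The component of this expression in \(\mathcal C\) is exactly the quadratic ideal element just constructed. Consequently \(n_2\) equals \(N_*\) plus a correction. That correction is \(L^{-1}\) applied to an even combination whose coefficients are quadratic in the stabilizer expressions.

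This is the main obstacle. Lemma~\ref{fs:lem:necessary} assumed the stabilizer relations held exactly, whereas here they hold only modulo \(\mathcal I\). The approach is to show that \(\ell(F(K)+\mathscr B(K,n_2))\) differs from \(-6E_+^{0000}{}_2\) by terms of the form (linear in \(x\)) times (quadratic stabilizer components) plus \(O(\norm x^4)\).

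The first idea is to write \(n_2=N_0+N_p+[B_2,T]+\Xi\), where \(\Xi\) is linear in the residual \(A\cdot f\), \(B\cdot f\). Then \(\ell\mathscr B(K,\Xi)\) is a combination of products \(x_j\cdot(\text{stabilizer component})\). Such products differ from \(x_j\,G_i(x)\) only by higher-order terms, so they can be absorbed using \(G_i\) already in \(\mathcal I\) with analytic coefficients \(x_j\). Subtracting these gives an element with initial term \(E_+^{0000}{}_2\). Equivariance again spreads this to all \(70\) components.

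For \([B,[A,B]]\), use \(\mu_3\) with \eqref{fs:eq:multiplication}. The \(\mu_2\) contractions in \eqref{fs:eq:multiplicationcontractions} hold up to stabilizer errors, which are absorbed the same way. This leaves \(-18[B,[A,B]]\) as the initial term. Finally, all coefficients are polynomial, and \(n\) is analytic, so everything stays bounded on one small closed ball.
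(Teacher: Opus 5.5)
Your treatment of the quadratic action equations and of \([B,[A,B]]\) through \(\mu_3\) matches the paper. You use constant linear contractions of the residual, spread them by group orbits, and absorb stabilizer errors as linear multiples of the action representatives. (One slip: by equivariance, \(\lambda(g\cdot r(\mathfrak h(g^{-1}x)))\) is just \(\lambda(r(\mathfrak h(x)))\); you mean \(\lambda(g\cdot r(\mathfrak h(x)))\).)

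The gap is in the cubic \(E_{70}\) step. You correctly note that the stabilizer relations now hold only modulo the ideal. However, you then absorb only \(\ell\mathscr B(K,\Xi)\) with \(\Xi=n_2-N_*\). You tacitly assume \(\ell\bigl(F(K)+\mathscr B(K,N_*)\bigr)\equiv-6E_+^{0000}{}_2\) modulo linear multiples of the action components, and Lemma~\ref{fs:lem:necessary} never proved that. It computed \(\ell(F(T)+\mathscr B(T,N_0))\) only for the pure intrinsic part. It handled \(A,B\) through the exact symmetry identities \eqref{fs:eq:exacttel}--\eqref{fs:eq:exacthelix}, which require the stabilizer relations to hold exactly and also use \([A,B]=0\).

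That is a statement on a zero set, and it does not give ideal membership. On the pure-dressing locus \(f=g=0\), every action component vanishes identically. So any cubic pure-dressing contribution to \(\ell\), for instance a commutator expression in \(A,B\) that vanishes when \([A,B]=0\), would be invisible to the zero-set argument but could not be subtracted. Similarly, terms quadratic in \(A\) or mixed in \(A,B\) need not a priori be multiples of the action components.

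What is missing is the explicit polynomial syzygy \eqref{fs:eq:initialsyzygy}, valid for arbitrary complex \(A,B,f,g\). It states that \(\ell\mathscr R+6E_+^{0000}{}_2\), with \(\mathscr R=F(K)+\mathscr B(K,N_*)\), equals explicit linear coordinates times \((A\cdot f)^{000}\), \((A\cdot f)^{001}\), \((A\cdot g)_{122}\) and their \(B\) analogues. The paper obtains it by tagging \(K_{u,v}=T+u\Tel A+v\Hel B\) with the matching \(N_{u,v}\), and applying \eqref{fs:eq:scalarcubic} and \eqref{fs:eq:scalarbilinear}. It then checks degree by degree that the rows linear in one tag and the mixed quadratic rows are action multiples. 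The \((2,0)\) row and every row cubic in the dressing tags vanish identically.

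You also assert, without proof, that \(n_2-N_c\) is a constant linear combination of the quadratic action components. This holds because the identities giving \eqref{fs:eq:fullnormal} are valid modulo the action ideal: the mixed terms enter only through \([A_1+A_2,T]\) and \([B_1+B_2,T]\). That is what yields \eqref{fs:eq:normalideallift}. With these two identities in hand, your subtraction scheme closes exactly as in the paper.
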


\begin{proof}
The two functionals in \eqref{fs:eq:twoactionrows} annihilate both
\(\Delta\) and the linearized image. Applied to \(F(\mathfrak h(x))\),
they therefore give analytic equations with the displayed quadratic
initial terms. Their common group orbits and duals supply all four action
blocks. Since these are finite-dimensional spaces, choose finitely many
orbit elements and constant linear combinations spanning every component.
This gives analytic action equations \(G_\alpha\) with quadratic terms
\(g_\alpha\), each a constant linear contraction of the residual.
Denote the polynomial ideal of the \(g_\alpha\) by \(\mathfrak a\).

The identities proving \eqref{fs:eq:fullnormal} hold over the polynomial
quotient by \(\mathfrak a\). Indeed, the pure intrinsic and pure dressing
identities are unconditional; the mixed terms use only
\([A_1+A_2,T]=[B_1+B_2,T]=0\), which are precisely the action relations.
Equivariance gives the same relations for the quadratic contraction
\(N_0\). Thus these statements hold in the quotient ring, without an
assumption that its zero set is reduced. To see explicitly how this
polynomial statement enters the analytic section, put
\(N_c=\operatorname{proj}_{\mathcal N}N_*\), and let \(P_L\) project
onto \(\operatorname{im}L\). The homogeneous quadratic vector
\[
D=\pi Q(K)+LN_c
\]
belongs componentwise to \(\mathfrak a\). All its generators have
degree two, so \(D=\sum_\alpha R_\alpha g_\alpha\) for constant
vectors \(R_\alpha\). The normal equation then gives the actual identity
\begin{equation}
n_2-N_c
=-(L|_{\mathcal N})^{-1}P_LD
=\sum_\alpha C_\alpha g_\alpha,
\qquad C_\alpha=-(L|_{\mathcal N})^{-1}P_LR_\alpha.
\label{fs:eq:normalideallift}
\end{equation}
Removing the scalar and \(P\) parts has no effect on the following
projections.

The analytic equation \(\ell F(\mathfrak h(x))\), with
\(\ell Y=Y_{000,122}\), has order at least three by \eqref{fs:eq:oddQ}.
Modulo linear multiples of \(\mathfrak a\), its cubic term is
\(\ell\mathscr R\), where
\(\mathscr R=F(K)+\mathscr B(K,N_*)\). Abbreviate tensor indices by
sorted triples, while retaining the upper or lower action convention.
Direct contraction gives the following identity for arbitrary complex
\(A,B,f,g\):
\begin{equation}
\begin{split}
\ell\mathscr R+6E_+^{0000}{}_2
={}&(B^0{}_2+2f^{001}+2g_{122})(A\cdot f)^{000}\\
&-2f^{000}\bigl((A\cdot f)^{001}+(A\cdot g)_{122}\bigr)\\
&+(4B^0{}_2+2f^{001}+2g_{122})(B\cdot f)^{000}\\
&-2f^{000}\bigl((B\cdot f)^{001}-(B\cdot g)_{122}\bigr).
\end{split}
\label{fs:eq:initialsyzygy}
\end{equation}
Here is a contraction derivation of \eqref{fs:eq:initialsyzygy}.
Introduce two scalar tags \(u,v\) and write
\[
K_{u,v}=T+u\Tel A+v\Hel B,\qquad
N_{u,v}=N_0+\tfrac12v^2(\Tel B)^2P
-\tfrac12uv([A,B]_1+[A,B]_2)+v[B_2,T].
\]
Apply \eqref{fs:eq:scalarcubic} to \(K_{u,v}\) and
\eqref{fs:eq:scalarbilinear} to \(K_{u,v},N_{u,v}\).
The latter automatically replaces \(N_{u,v}\) by its even part;
in particular the mixed term becomes
\(\tfrac12v[B_2-B_1,T]\). Put \(z=f^{001}+g_{122}\).
Collecting the powers of the two tags gives
\begin{equation}
\begin{array}{c|l}
(u\text{ degree},v\text{ degree})&
[u^rv^s]\,\ell\big(F(K_{u,v})+\mathscr B(K_{u,v},N_{u,v})\big)
\\ \hline
(0,0)&-6E_+^{0000}{}_2\\
(1,0)&2z(A\cdot f)^{000}
-2f^{000}\big((A\cdot f)^{001}+(A\cdot g)_{122}\big)\\
(0,1)&2z(B\cdot f)^{000}
-2f^{000}\big((B\cdot f)^{001}-(B\cdot g)_{122}\big)\\
(1,1)&B^0{}_2(A\cdot f)^{000}\\
(0,2)&4B^0{}_2(B\cdot f)^{000}\\
(2,0)\text{ and }r+s=3&0 .
\end{array}
\label{fs:eq:syzygydegrees}
\end{equation}
For the first row the contractions are
\eqref{fs:eq:thirdtable}. For the two rows linear in \(u\) or \(v\),
the surviving one-site factor acts on each of the three symmetric
tensor slots; a lower slot contributes its negative transpose.
This gives precisely the displayed upper and lower actions.
For the two mixed quadratic rows the contractions reduce, respectively,
to \(3B^0{}_2\sum_j A^0{}_jf^{00j}\) and
\(12B^0{}_2\sum_j B^0{}_jf^{00j}\).
The terms with two \(A\)'s and all terms without \(f,g\) cancel
between the two scalar contraction formulas.
Summing \eqref{fs:eq:syzygydegrees} at \(u=v=1\) proves
\eqref{fs:eq:initialsyzygy}. This derivation is polynomial in all
coordinates; it uses neither a stabilizer assumption nor a statement
about the radical of the action ideal.

Equation~\eqref{fs:eq:normalideallift} makes the additional cubic term
\(\ell\mathscr B(K,n_2-N_c)\) a sum of linear polynomials times
\(g_\alpha\). Combining these polynomials with
\eqref{fs:eq:initialsyzygy}, call their coefficients \(t_\alpha(x)\).
Then the explicit analytic replacement
\[
G_E=-\frac16\left(\ell r(\mathfrak h(x))
                   -\sum_\alpha t_\alpha(x)G_\alpha(x)\right)
\]
has initial term \(E_+^{0000}{}_2\). A finite choice of group orbit
elements and duals gives every component of \(E_{70}\). This is an
identity in the analytic residual ideal, rather than an inference from
its zero set.

Finally, \(\mu_3F(\mathfrak h(x))\) is an analytic linear combination
of the residual equations because \(\mu_3\Delta=0\). Its order is at
least three: \(\mu_2P=3\I\), \(\mu_2P^2=\I\), and
\(\{P,K(x)\}=0\). Equations \eqref{fs:eq:multiplicationcontractions}
give cubic term \(-18[B,[A,B]]\) modulo linear multiples of
\(\mathfrak a\). Subtracting the corresponding multiples of the
analytic action equations supplies the final representatives.
Every operation above is a constant linear contraction, a group action,
or multiplication by a linear coordinate polynomial. It therefore
produces the coefficients in \eqref{fs:eq:boundedrepresentatives}.
There are finitely many of them; continuity on a smaller closed ball
gives a common bound.
\end{proof}

The same construction controls the equations away from their zero set.
Choose invariant norms and an invariant ball. If a representative \(G\)
has an initial covariant with orthogonal output representation \(\rho_0\),
its average over any compact subgroup \(H\subset SU(3)\) obeys
\begin{equation}
\left\lVert\int_H\rho_0(k)^{-1}G(kx)\,dk\right\rVert
\le C\norm{r(\mathfrak h(x))}.
\label{fs:eq:averagedresidualbound}
\end{equation}
Indeed \(r(\mathfrak h(kx))\) is the unitary transform of
\(r(\mathfrak h(x))\), and the coefficient bound is uniform on the
ball. The average is analytic and retains the initial covariant.
This estimate requires no closure of the analytic ideal under integration.

If the coordinate arc \(x(\tau)\) is identically zero, the normalized
density is \(P\) and the charge theorem is immediate. Otherwise
analyticity gives a finite \(m\ge1\) and
\begin{equation}
x(\tau)=\tau^m\bigl(p+O(\tau)\bigr),\qquad
p=(A_0,B_0;f_0,\bar f_0)\ne0.
\label{fs:eq:firstnonzeroleader}
\end{equation}
Lemma~\ref{fs:lem:genuineinitial} imposes \eqref{fs:eq:genuineinitial} on
\(p\). The Hermitian argument of Appendixes~\ref{fs:app:obstructions}
and \ref{fs:app:intrinsic} consequently gives the same classification as
\eqref{fs:eq:classification}, including \([A_0,B_0]=0\).
This conclusion does not require replacing \(\tau\) by a fractional
power.

\subsection{Every nonzero intrinsic direction is a smooth incidence point}
\label{fs:app:intrinsicranks}

\begin{lemma}
\label{fs:lem:intrinsicrank}
At every nonzero Hermitian solution of \(E_{70}=0\), the intrinsic
differential \(dE_{70}\) has real rank 12.
\end{lemma}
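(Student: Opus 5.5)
The rank is invariant under every symmetry of the cubic covariant, so I would reduce to the two normal forms of Lemma~\ref{fs:lem:intrinsic} and compute there. The map \(f\mapsto E_+(f,\bar f)\) is real-analytic on \(S^3\C^3\cong\R^{20}\) and is \(SU(3)\)-equivariant. It is also homogeneous of degree three under real scaling and covariant under the phase \(f\mapsto e^{i\theta}f\), which multiplies \(E_+\) by \(e^{i\theta}\). Since \(E_-\) is the conjugate of \(E_+\) on the Hermitian real form, the real rank of \(dE_{70}\) equals the real rank of \(dE_+\). These symmetries preserve the rank, so it suffices to treat \(f=\alpha x_0x_1x_2\) with \(\alpha\ne0\) and \(f=3c\,x_2(2x_0x_1+x_2^2)\) with \(c\ne0\).

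The first step is the lower bound on the kernel. The tangent vectors to the \(SU(3)\) orbit, the real scaling and the phase all lie in \(\ker dE_+\), since they preserve the zero set of \(E_+\). For the triangle this gives \((8-2)+2=8\) real dimensions, matching \eqref{fs:eq:stratumdimensions}. For the secant the orbit contributes \(8-1=7\) and the real amplitude \(c\) one more. I would check that the phase rotation of the secant is already a combination of these, for instance through \(\operatorname{diag}(e^{i\phi},e^{i\phi},e^{-2i\phi})\) together with the stabilizer. So in both cases \(\dim_\R\ker dE_+\ge8\). The lemma follows once the kernel has dimension exactly \(8\), because then the rank is \(20-8=12\).

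The second step bounds the kernel from above. I would use the polynomial form \eqref{fs:eq:polynomialcovariant} and linearize: replace \(\phi\) by \(\phi+\epsilon\psi\) and \(G_e\) by \(G_e+\epsilon\bar\Psi_e\), then read off the coefficients of the quartics \(\mathcal E_e\). For the triangle the Cartan torus stabilizes \(f\), so \(dE_+\) commutes with it. Every weight of \(S^3V\) has multiplicity one, so the ten monomials \(x_0^{n_0}x_1^{n_1}x_2^{n_2}\) each span one weight line. The linearization \(\psi\mapsto dE_+(\psi,\bar\psi)\) mixes \(\psi\) with \(\bar\psi\), and \(\bar\psi\) carries the negated weight, so the map pairs a weight \(w\) of \(\psi\) with the conjugate of the weight \(w'\) satisfying \(w'-3w_0=-(w-3w_0)\) relative to the weight \(w_0\) of \(x_0x_1x_2\). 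The resulting real blocks are at most \(4\times4\). The monomials \(x_a^3\) and \(x_a^2x_b\) should give injective blocks. The \(x_0x_1x_2\) line and the off-diagonal Cartan orbit directions should give the \(8\)-dimensional kernel already found.

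For the secant only the \(U(1)\) of \eqref{fs:eq:arithmetic} survives, with weight \(n_0-n_1\) on \(x_0^{n_0}x_1^{n_1}x_2^{n_2}\). The same grading splits \(\psi\) into sectors of charge \(0,\pm1,\pm2,\pm3\), with larger blocks in the charge-\(0\) and \(\pm1\) sectors. I expect this to be the main obstacle. I would first use the slice computation \eqref{fs:eq:conversecomponents}: differentiating \(a_0,b_0,c_0\) at \(u=c\), \(v=3c\) handles the directions inside the slice. For the transverse directions I would extract a few explicit coefficients of the linearized \(\mathcal E_e\), in the manner of the chart equations \eqref{fs:eq:chartlinear}--\eqref{fs:eq:lastfour}, and show that the resulting block determinants are nonzero multiples of powers of \(c\). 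As a cross-check on the count, I would compare with the smooth incidence picture: the exact families of Appendix~\ref{fs:app:lifts} sweep out an 8-dimensional real-analytic manifold in each stratum, which is consistent with, though not a proof of, kernel dimension \(8\).
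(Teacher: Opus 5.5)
Your outline has the same architecture as the paper's proof. You reduce by equivariance to the two normal forms, get $\dim_\R\ker dE_{70}\ge8$ from the tangent to the eight-dimensional orbit-and-amplitude family, and obtain the reverse inequality from a weight-graded computation. But the reverse inequality is the whole content of the lemma, and you only announce it. Two of the structural claims you announce are also false.

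First, $E_+$ is not phase covariant. Under $f\mapsto e^{i\theta}f$, $g=\bar f$, the two $ffg$ terms of \eqref{fs:eq:covariant} acquire $e^{i\theta}$, but the $\varepsilon\varepsilon fgg$ term acquires $e^{-i\theta}$. At the triangle this does no harm, because the entire family $\alpha x_0x_1x_2$, $\alpha\in\C^\times$, lies in the zero set. At the secant, however, the phase direction is not in $\ker dE_+$. Putting $u=ce^{i\theta}$, $v=3ce^{i\theta}$ in \eqref{fs:eq:conversecomponents} gives $a_0=6ic^3\sin\theta$, whose $\theta$-derivative at $0$ is nonzero. So the check you propose via $\operatorname{diag}(e^{i\phi},e^{i\phi},e^{-2i\phi})$ must fail: that matrix fixes $f^{012}$ and only rotates the relative phase of $f^{222}$. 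Had you kept the phase in the kernel, you would have concluded rank at most $11$.

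Second, at the triangle the $x_a^2x_b$ blocks are not injective. The six orbit directions transverse to the Cartan torus have nonzero torus weight and lie exactly in these weight spaces. Hence each four-real-dimensional pair block $\{x_a^2x_c,\,x_b^2x_c\}$ has real rank exactly two. Your sentence about injective blocks contradicts your own sentence locating the kernel.

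What must be supplied is the computation the paper actually writes out.

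At the triangle, take $f^{012}=a$ and $g_{012}=b=\bar a$. The relevant rows are $\dot E_+^{iiii}{}_i=4b(2a+b)\dot f^{iii}$, their duals $4a(a+2b)\dot g_{iii}$, and $\dot E_+^{iiij}{}_i=(3a+2b)(b\dot f^{iij}+a\dot g_{jkk})$. These give six pure-cube conditions and six conditions on disjoint pairs of variables. That is complex rank at least $12$ on the complexification with $f,g$ independent, which equals the real rank because the differential commutes with conjugation. You also have to note that $2a+b$, $a+2b$ and $3a+2b$ never vanish; this holds for every phase because $|a|=|b|\ne0$.

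At the secant you need the $\operatorname{diag}(1,-1,0)$-weight blocks of \eqref{fs:eq:continuationsecantblocks}. Their ranks are two in weight $3$ (the block $\left(\begin{smallmatrix}9&6\\6&9\end{smallmatrix}\right)$ on $(\dot f^{000},\dot g_{111})$), one in weight $2$, two in weight $1$, and two in weight $0$. With the conjugate blocks this gives $2(2+1+2)+2=12$.

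Your slice idea does settle weight $0$ correctly. The slice $6ux_0x_1x_2+vx_2^3$ is exactly the weight-zero sector, and differentiating $a_0,c_0$ at $u=c$, $v=3c$ yields $\operatorname{Im}\dot u=0$ and $\operatorname{Re}\dot v=3\operatorname{Re}\dot u$. But the weights $\pm1,\pm2,\pm3$, which carry ten of the twelve, are not touched. As written, the proposal establishes only rank at most $12$.
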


\begin{proof}
The differential can be taken in the three polynomial expressions
\eqref{fs:eq:polynomialcovariant}. If \(p=\nabla\phi\), \(H=\nabla^2\phi\),
and dots denote arbitrary variations of \(f,g\), its raw part is
\begin{equation}
\begin{split}
\dot{\mathcal P}_e={}&
\tfrac13\dot\phi\,\tr(G_eH)
+\tfrac13\phi\,\tr(\dot G_eH+G_e\nabla^2\dot\phi)\\
&-\tfrac43(\nabla\dot\phi)^{\mathsf T}G_ep
-\tfrac23p^{\mathsf T}\dot G_ep\\
&+\sum_r\big[
(\partial_r\dot\phi)\tr(\Omega G_r\Omega G_e)
+(\partial_r\phi)\tr(\Omega\dot G_r\Omega G_e
                      +\Omega G_r\Omega\dot G_e)\big].
\end{split}
\label{fs:eq:polynomialdifferential}
\end{equation}
The projected differential is
\(\dot{\mathcal E}_e=\dot{\mathcal P}_e
-x_e\operatorname{div}\dot{\mathcal P}/6\).
This is a linear formula in the 20 tensor variations, with coefficient
extraction normalized as after \eqref{fs:eq:polynomialcovariant}.

At a triangle write \(f^{012}=a\), \(g_{012}=b\), with
\(b=\bar a\ne0\). Differentiate \eqref{fs:eq:covariant} and denote
variations by dots. Its entries include
\begin{equation}
\begin{aligned}
\dot E_+^{iiii}{}_i&=4b(2a+b)\dot f^{iii},\\
\dot E_-{}_{iiii}{}^i&=4a(a+2b)\dot g_{iii},\\
\dot E_+^{iiij}{}_i&=(3a+2b)
\bigl(b\dot f^{iij}+a\dot g_{jkk}\bigr),
\qquad\{i,j,k\}=\{0,1,2\}.
\end{aligned}
\label{fs:eq:trianglephasejacobian}
\end{equation}
For instance, at \(\phi=6ax_0x_1x_2\),
\(g(x)=6bx_0x_1x_2\), the variation \(\dot\phi=x_0^3\) gives
\[
\dot{\mathcal P}_0=(4ab+6b^2)x_0^4,\qquad
\operatorname{div}\dot{\mathcal P}=-12b(2a-b)x_0^3,
\]
so its projected coefficient is \(4b(2a+b)\).
For the variations \(\dot\phi=3x_0^2x_1\) and
\(\dot g(x)=3x_1x_2^2\), the pairs
\(([x_0^3x_1]\dot{\mathcal P}_0,
[x_0^2x_1]\operatorname{div}\dot{\mathcal P})\) are, respectively,
\[
\big(-6b(2a-b),-12b(12a+b)\big),\qquad
\big(12ab,-24a(3a-b)\big).
\]
Subtracting one sixth of the second entry and dividing by four gives
\(b(3a+2b)\) and \(a(3a+2b)\), the third line of
\eqref{fs:eq:trianglephasejacobian}. Cyclic relabeling and duality
supply the other rows.

The first two lines give six independent pure-cube conditions and the
last line gives six disjoint pairs. Every prefactor is nonzero:
\(|a|=|b|\ne0\), while the coefficients in \(2a+b\), \(a+2b\),
and \(3a+2b\) have unequal absolute values. This covers every complex
triangle phase.

At the unit secant \(f^{012}=g_{012}=1\),
\(f^{222}=g_{222}=3\), group variations by the charge
\(\operatorname{diag}(1,-1,0)\). With tensor indices abbreviated by
triples, selected derivative rows have the coefficient blocks
\begin{equation}
\begin{array}{c|c|c}
\text{weight}&\text{variables}&\text{coefficient block}\\ \hline
3&(\dot f^{000},\dot g_{111})&
\begin{pmatrix}9&6\\6&9\end{pmatrix}\\[4pt]
2&(\dot f^{002},\dot g_{112})&\begin{pmatrix}1&1\end{pmatrix}\\[3pt]
1&(\dot f^{001},\dot f^{022},\dot g_{011},\dot g_{122})&
\begin{pmatrix}0&-4&5&-6\\5&-6&0&-4\end{pmatrix}\\[4pt]
0&(\dot f^{012},\dot f^{222},\dot g_{012},\dot g_{222})&
\begin{pmatrix}-1&0&1&0\\-3&1&-3&1\end{pmatrix}.
\end{array}
\label{fs:eq:continuationsecantblocks}
\end{equation}
In order, the selected rows are
\[
\begin{gathered}
\dot E_+^{0000}{}_0,\quad \dot E_-{}_{1111}{}^1;\qquad
-\dot E_+^{0001}{}_2/9;\\
\dot E_+^{2222}{}_1/6,\quad \dot E_-{}_{2222}{}^0/6;\\
\dot E_+^{0011}{}_2/12,\quad
(\dot E_+^{0222}{}_0+\dot E_-{}_{0222}{}^0)/12.
\end{gathered}
\]
To make the secant entries equally explicit, its polynomial and
Hessian are
\[
\phi=g(x)=6x_0x_1x_2+3x_2^3,\qquad
\nabla^2\phi=6
\begin{pmatrix}0&x_2&x_1\\x_2&0&x_0\\x_1&x_0&3x_2\end{pmatrix}.
\]
For \(\dot\phi=x_0^3\) and \(\dot g(x)=x_1^3\), respectively,
\eqref{fs:eq:polynomialdifferential} gives
\[
\begin{array}{c|c|c}
&\dot{\mathcal P}&\operatorname{div}\dot{\mathcal P}\\ \hline
\dot f^{000}=1&
(10x_0^4,-26x_0^3x_1-36x_0^2x_2^2,-8x_0^3x_2)&6x_0^3\\
\dot g_{111}=1&
(0,-18x_0^3x_1-27x_0^2x_2^2,-18x_0^3x_2)&-36x_0^3 .
\end{array}
\]
The projected \(x_0^4\) coefficients are \(10-6/6=9\) and
\(0+36/6=6\). Applying the same derivative and coefficient rule to
\(3x_0^2x_2\), \(3x_0^2x_1\), \(3x_0x_2^2\),
\(6x_0x_1x_2\), and \(x_2^3\), and their duals, gives the remaining
selected rows in \eqref{fs:eq:continuationsecantblocks}.

The conjugate blocks give the negative weights. The ranks are
\(2(2+1+2)+2=12\). Cubic homogeneity multiplies the entire differential
by \(c^2\) at a nonzero real secant amplitude \(c\).

In both cases the orbit and amplitude manifold has real dimension eight:
six orbit directions and two amplitudes for triangles, seven orbit
directions and one amplitude for secants. Its tangent gives the matching
rank upper bound in the 20-dimensional intrinsic space. Equivariance
covers all normal-form representatives. The differential respects
Hermitian conjugation, so its complex rank equals the rank on the real
form. This proves the assertion.
\end{proof}

Let \(s\) be the intrinsic stabilizer dimension, equal to two or one.
The allowed pairs \((A,B)\) form two copies of this abelian algebra.
The resulting Hermitian incidence manifold \(M\) has dimension
\begin{equation}
d=8+2s=12\ \text{or}\ 10.
\label{fs:eq:incidencedimension}
\end{equation}
At every \(p\in M\), the action equations have rank \(2(8-s)\) in
the \((A,B)\) columns. The intrinsic equations have rank 12 and no
entries in those columns. A block-triangular minor therefore gives rank
at least \(36-d\) for \eqref{fs:eq:genuineinitial}. Since \(M\) lies
in their common zero set, its tangent gives the opposite inequality.
The rank is exactly \(36-d\), for arbitrary allowed dressing data.

\subsection{Exact families exhaust arcs with a nonzero intrinsic leader}
\label{fs:app:familyfilling}

\begin{proposition}
\label{fs:prop:familyfilling}
If \(f_0\ne0\) in \eqref{fs:eq:firstnonzeroleader}, the given normalized
density arc is locally a member of the corresponding exact family in
Appendix~\ref{fs:app:lifts}, with parameters and a common unitary basis
depending analytically on \(\tau\). Its space of traceless Hermitian
one-site charges has dimension exactly two in the triangle case and one
in the secant case for small nonzero \(\tau\). Every charge of the
leading direction extends analytically, with constant spectrum.
\end{proposition}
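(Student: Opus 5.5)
The plan is to compare the given arc with the exact families of Appendix~\ref{fs:app:lifts} inside the analytic section \eqref{fs:eq:equivariantsection}. The comparison will be done after a weighted blow-up, so that the rank statement of Lemma~\ref{fs:lem:intrinsicrank} and of \eqref{fs:eq:incidencedimension} can be used through the implicit-function theorem.

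\emph{Rescaled equations.} Write $x=\sigma y$ with $\sigma$ real and $y$ near the unit leader $p/\norm p$; this normalization is a harmless rescaling of $\tau$. Take the genuine representatives $G_i$ of Lemma~\ref{fs:lem:genuineinitial} and form $\widetilde G_i(\sigma,y)=\sigma^{-d_i}G_i(\sigma y)=g_i(y)+O(\sigma)$. By \eqref{fs:eq:boundedrepresentatives} these are real-analytic near $(0,p)$. At $\sigma=0$ their $y$-differential is that of the initial system \eqref{fs:eq:genuineinitial}. By Lemma~\ref{fs:lem:intrinsicrank} and the block-triangular count after \eqref{fs:eq:incidencedimension}, this differential has rank exactly $36-d$ at every point of the incidence manifold $M$. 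Choose $36-d$ of the $\widetilde G_i$ whose differentials are independent at $(0,p)$. The implicit-function theorem then shows that their common zero set $\mathcal Z$ is a real-analytic manifold of dimension $d+1$ near $(0,p)$, and that $\mathcal Z\cap\{\sigma=0\}=M$ locally. The rescaled given arc, $\sigma=\tau^m$ and $y(\tau)=\tau^{-m}x(\tau)$, lies in $\mathcal Z$ because $r(\mathfrak h(x(\tau)))=0$. The same holds for every rescaled member of an exact family.

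\emph{The exact families fill $\mathcal Z$.} Next I will build an analytic map $\Psi(\sigma,g,w)$. Here $g$ runs over a local analytic section of $SU(3)/\Stab$ through the identity, and $w$ collects the intrinsic amplitude(s) together with the dressing data $(A,B)$ in the stabilizer. To define $\Psi$, conjugate the normalized density of the family \eqref{fs:eq:combined-density} (triangle or secant) by $g$, evaluate it at model parameter $\sigma$, read off its section coordinate, and divide by $\sigma$. Division is legitimate because each family has the stated first derivative and the amplitudes enter analytically. At $\sigma=0$ the map $\Psi$ is the orbit-times-amplitude-times-dressing parametrization of $M$, which is an immersion of dimension $d$. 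Since $\partial_\sigma$ is transverse to $\{\sigma=0\}$, $\Psi$ is a local analytic diffeomorphism onto $\mathcal Z$.

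\emph{Conclusion for the given arc.} Inverting $\Psi$ along the rescaled arc yields analytic $g(\tau)$ and $w(\tau)$. Hence the arc is the family member with analytically varying parameters, conjugated by $U(\tau)=g(\tau)$, and $U(0)=\I$ after composing with a constant element of the leading stabilizer. Each family preserves the whole connected stabilizer, so every charge $q_0$ of the leading direction continues as $U(\tau)q_0U(\tau)^\dagger$. This continuation is exact and has constant spectrum, which gives at least $s$ charges. For the upper bound at small $\tau\ne0$, note that a traceless $q$ commutes with $h_n(\tau)$ if and only if it commutes with $(h_n(\tau)-P)/\tau^m$, because $P$ commutes with every $q_1+q_2$. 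This operator tends to $K(p)$. Upper semicontinuity of the kernel dimension of $q\mapsto[\,\cdot\,,q_1+q_2]$ then bounds the charge space by the one of $K(p)$. By Corollary~\ref{fs:cor:firstcharge} and Theorem~\ref{fs:thm:classification} that space has dimension $s$, which is two for the triangle and one for the secant.

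\emph{Expected main obstacle.} The hardest step is showing that $\Psi$ is analytic at $\sigma=0$ and has full rank there. The division by $\sigma$ must be uniform in the amplitude and dressing parameters. The families must also be reexpressed in the normalized section coordinates, which means removing their $\I,P$ parts and projecting $n$ onto $\mathcal N$. I would first check that the triangle and secant arcs can be written with the amplitude as a free linear factor, namely $\tau\alpha$ and $\tau c$. I would then check that the tangent of $\Psi$ at $\sigma=0$ reproduces exactly the tangent space of $M$, including the complex phase of $\alpha$ in the triangle case. A fallback, if joint analyticity in the dressing parameters proves delicate, is to use the real-analytic structure: $\mathcal Z$ and $\operatorname{im}\Psi$ are manifolds of the same dimension, and $\operatorname{im}\Psi\subset\mathcal Z$, so equality near $(0,p)$ only requires that $\operatorname{im}\Psi$ be an embedded submanifold of that dimension.
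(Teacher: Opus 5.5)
Your proposal is correct and follows essentially the same route as the paper. You rescale the genuine equations to $\sigma^{-d_i}G_i(\sigma z)$, whose zero set is a smooth $(d+1)$-dimensional manifold, fill it with the exact families via the analytic map $(\sigma,p')\mapsto(\sigma,x_*(\sigma,p')/\sigma)$ and the inverse-function theorem, transport the standard charges by the analytic common unitary, and bound the kernel of $q\mapsto[(h_n-P)/\tau^m,q\otimes\I+\I\otimes q]$ by semicontinuity, which is the paper's nonzero-minor argument. The analyticity point you flag is settled in the paper as you anticipate: the parametrizations $\kappa=6a\sigma$, $e^{2ib\sigma}$ and $s=1-3c\sigma/2$ are jointly analytic in all parameters (including $a=0$ or $b=0$), and uniqueness of the section gives $x_*=\sigma p'+O(\sigma^2)$.
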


\begin{proof}
Select \(36-d\) homogeneous equations \(g_i\) from
\eqref{fs:eq:genuineinitial} with independent differentials at \(p\).
Lemma~\ref{fs:lem:genuineinitial} supplies analytic representatives
\(G_i\in\mathcal I\), with initial degree \(d_i=2\) or \(3\).
Introduce an independent scale \(\sigma\) and define
\begin{equation}
\widetilde G_i(\sigma,z)=\sigma^{-d_i}G_i(\sigma z).
\label{fs:eq:analyticrescaling}
\end{equation}
This is analytic at \(\sigma=0\), with value \(g_i(z)\). Its common
zero set \(Y\) is smooth of dimension \(d+1\) near \((0,p)\).
Every rescaled actual density arc lies in \(Y\).

Here is a local parameterization of the exact families used to fill
\(Y\). Let \(\mathfrak t\) be the real traceless diagonal matrices
and \(q_*=\operatorname{diag}(1,-1,0)\). Before a common basis change,
the parameters and their real dimensions are
\begin{equation}
\begin{array}{c|c|c|c|c}
\text{type}&f&A&B&\text{dimensions: orbit, amplitude, }A,B\\ \hline
\text{triangle}&6(a+ib)x_0x_1x_2&\mathfrak t&i\mathfrak t&6+2+2+2=12\\
\text{secant}&3c\,x_2(2x_0x_1+x_2^2)&\mathbb Rq_*&i\mathbb Rq_*&7+1+1+1=10
\end{array}
\label{fs:eq:familyparameters}
\end{equation}
Here \(a+ib\ne0\) and \(c\in\mathbb R\setminus\{0\}\).
A local analytic section of \(SU(3)/H\), with \(H\) the intrinsic
stabilizer, supplies the orbit coordinates. Its connected stabilizer
acts trivially on the allowed abelian \(A,B\); any discrete
identifications are removed by choosing a local parameter chart.
Thus these are \(d\) independent coordinates on \(M\).

For the intrinsic density \(k_\sigma\), use
\(\kappa=6a\sigma\), \(p=e^{2ib\sigma}\) in the triangle formulas,
or \(s=1-3c\sigma/2\) and \(r=r(s)\) in the secant formulas.
The hyperbolic quotient is removable, and the secant denominators are
nonzero near \(s=1\). These formulas and their spectral matrices are
jointly analytic in all the displayed parameters, including \(a=0\)
or \(b=0\). Apply \eqref{fs:eq:combined-density} with
\(D_\sigma=e^{-\sigma B}\), then the common unitary from the orbit
section. This supplies every allowed dressing parameter and preserves
the fixed standard charges. Its derivative at \(\sigma=0\) is
exactly \(K(p')\). Scalar and exchange normalization leave that
derivative unchanged, since it is exchange-odd. By uniqueness of
\eqref{fs:eq:equivariantsection}, the exact density has coordinates
\[
x_*(\sigma,p')=\sigma p'+O(\sigma^2).
\]
Consequently
\begin{equation}
\Xi(\sigma,p')=(\sigma,x_*(\sigma,p')/\sigma)
\label{fs:eq:familyfilling}
\end{equation}
extends analytically to \(\sigma=0\) and takes values in \(Y\).
Its differential in \(p'\) is the inclusion of \(T_pM\), and its
\(\sigma\) differential has first component one. Its rank is
\(d+1\), so the analytic inverse-function theorem makes its image a
neighborhood of \((0,p)\) in \(Y\).

Apply this inverse to the analytic path
\[
(\sigma,z)=\bigl(\tau^m,x(\tau)/\tau^m\bigr).
\]
It gives analytic parameters \(p'(\tau)\) and leaves the scale equal
to \(\tau^m\). Thus the given density equals a member of the exact
family. This argument uses genuine equations and an exact family filling
their smooth solution space; no higher-degree generation statement or
fractional reparametrization is involved.

Transport the fixed standard charges of that family by its analytic
common unitary. These charges have constant spectrum. For the exact
count, divide the normalized commutator map
\(q\mapsto[h_n(\tau),q\otimes\I+\I\otimes q]\) by \(\tau^m\).
At zero its kernel is the \(s\)-dimensional stabilizer of \(K(p)\).
A nonzero minor bounds the nearby kernel dimension above by \(s\),
while the transported charges give \(s\) independent elements.
They form an analytic basis and extend every leading charge.
\end{proof}

\subsection{A nonzero helix leader eliminates all charged variables}
\label{fs:app:helixcontinuation}

Consider a remaining leader with \(f_0=0\), \(B_0\ne0\).
Let \(T_0\) be the compact closure of \(\exp(\theta B_0)\) in
\(SU(3)\). The point \(p\) is fixed by \(T_0\), since
\([A_0,B_0]=0\). We show that an analytic common basis change makes
the entire density fixed by this group.

First make the analytic representatives of the \(B\)-action and
double-commutator equations equivariant under \(T_0\). If \(G\) has
initial covariant \(g\), with output representation \(\rho\), replace it
on an invariant neighborhood by
\begin{equation}
\widehat G(x)=\int_{T_0}\rho(k)^{-1}G(kx)\,dk.
\label{fs:eq:equivariantaverage}
\end{equation}
The integral is real analytic, preserves the initial covariant, and
vanishes on every actual solution because that solution set is invariant.
Only this vanishing is needed; no ideal-closure assertion for the integral
is assumed. Rescale these representatives as in
\eqref{fs:eq:analyticrescaling}.

The anti-Hermitian matrix \(B\) in \(z(\tau)=x(\tau)/\tau^m\)
equals \(B_0\) at zero. Separate the distinct eigenspaces of \(B_0\).
Their spectral clusters remain separated nearby. Analytic spectral
projections and analytic orthonormal frames give a common unitary equal
to \(\I\) at zero, in which
\[
[B,B_0]=0.
\]
Explicitly, if \(P_\alpha(\tau)\) is a cluster projection and
\(E_\alpha\) is a fixed orthonormal frame for its initial range, use
\(W_\alpha=P_\alpha(\tau)E_\alpha\) and the frame
\(W_\alpha(W_\alpha^\dagger W_\alpha)^{-1/2}\). The inverse square
root is analytic near \(\I\). Concatenating these frames and comparing
with the initial basis gives the required unitary.
This block gauge does not label eigenvalues within a repeated block.
It therefore requires no simple-spectrum hypothesis.

In this slice, split coordinates into the \(T_0\)-fixed variables \(u\)
and charged variables \(v\). The latter are the nonzero-weight parts
of \(A\) and \((f,g)\); all \(B\) variables are fixed. Project the
rescaled \(B\)-action and double-commutator equations onto their
charged outputs. At \((\sigma,z)=(0,p)\), their derivative in \(v\) is
\begin{equation}
\delta(f,g)_{\rm ch}\longmapsto B_0\cdot\delta(f,g)_{\rm ch},
\qquad
\delta A_{\rm ch}\longmapsto-\ad_{B_0}^2\delta A_{\rm ch}.
\label{fs:eq:chargedhelixderivative}
\end{equation}
Both maps are invertible on their charged spaces. Variations of \(B\)
within its blocks produce no charged derivative: \(f_0=g_0=0\) and
\([B_0,[A_0,\delta B]]=0\).

At every fixed input \((\sigma,u,0)\), the charged outputs vanish
exactly by equivariance. The analytic implicit-function theorem makes
\(v=0\) the unique nearby solution for \(v\), given \(\sigma,u\).
Hence every actual arc in the block gauge has fixed tangent coordinates.
Equation \eqref{fs:eq:equivariantsection} then makes its complete density
fixed by \(T_0\). Undoing the basis change extends the Hermitian charge
\(-iB_0\), with constant spectrum. No classification of the remaining
neutral equations is required.

\subsection{A pure-telescope leader and a positive analytic scalar}
\label{fs:app:telescopecontinuation}

The last case is \(p=(A_0,0;0,0)\), with \(A_0\ne0\).
The linearized double commutator does not eliminate \(B\) here.
Instead we use the exact identity \eqref{fs:eq:multiplication} and
positivity on the Hermitian real form.

Let \(T_0\) be the compact closure of \(\exp(i\theta A_0)\).
As above, an analytic common basis change puts the \(A\) coordinate
of \(z(\tau)\) in the block slice \([A,A_0]=0\).
Write its coordinates as the fixed part \(u\), the charged
anti-Hermitian part \(v=B_{\rm ch}\), and the charged intrinsic part
\(w=(f,g)_{\rm ch}\). At the base point, \(u=u_0\) and \(v=w=0\).

Average the analytic \(A\)-action equations as in
\eqref{fs:eq:equivariantaverage}, rescale by \(\sigma^2\), and project
to the charged intrinsic output. Their derivative in \(w\) is the
invertible action of \(A_0\) on its nonzero weights. They therefore
have a unique equivariant analytic solution
\begin{equation}
w=\varphi(\sigma,u,v),\qquad
\varphi(\sigma,u,0)=0,\qquad \varphi(0,u_0,v)=0.
\label{fs:eq:intrinsicelimination}
\end{equation}
The second equality follows from equivariance. The third holds because
at \(\sigma=0,u=u_0\) the equation is simply \(A_0\cdot w=0\),
independent of \(B\). Every actual arc satisfies this elimination.

On the whole implicit section define the real scalar
\begin{equation}
\mathcal S(\sigma,z)
=-\frac{1}{18\sigma^3}\operatorname{Re}\tr\!\left(
A_0[\mu_2(\mathfrak h(\sigma z)^2),
       \mu_2(\mathfrak h(\sigma z))]\right).
\label{fs:eq:coercivescalar}
\end{equation}
It extends analytically through \(\sigma=0\): the two base contractions
are scalar and \(\{P,K(z)\}=0\), so the commutator has no term of
degree below three. On an actual Reshetikhin density it is exactly zero
by \eqref{fs:eq:multiplication}.

The scalar is \(T_0\)-invariant. More strongly, it vanishes at every
\(T_0\)-fixed input \(z\), whether or not that input is integrable.
Indeed, ambient equivariance of \eqref{fs:eq:equivariantsection} makes
both contractions commute with \(A_0\). For such contractions \(M,N\),
\(\tr(A_0[M,N])=\tr([A_0,M]N)=0\). This exact vanishing controls
the neutral remainder in the positivity argument.

For pure dressing data, the quadratic normal term is \(N_p\) in
\eqref{fs:eq:purenormal}, up to scalar and \(P\) normalization. Thus
\eqref{fs:eq:multiplicationcontractions} gives
\[
\mu_2K_p=-3B,\qquad
\mu_2(K_p^2+\{P,N_p\})=-6[A,B].
\]
The normalization changes the second expression only by a scalar.
At \(\sigma=0,A=A_0,B=v,f=g=0\), the leading scalar is therefore
\begin{equation}
\mathcal S(0,A_0,v;0,0)=\tr([A_0,v]^2)
=2\sum_{i<j}(\lambda_i-\lambda_j)^2|v_{ij}|^2,
\label{fs:eq:positivechargedform}
\end{equation}
where \(\lambda_i\) are the eigenvalues of \(A_0\).
Here \([A_0,v]\) is Hermitian. Since \(v\) contains only entries
between distinct eigenspaces, this form is strictly positive unless
\(v=0\), including when \(A_0\) has repeated eigenvalues.

\begin{lemma}[Uniform positivity in charged variables]
\label{fs:lem:equivariantpositivity}
Let a compact group \(H\) act orthogonally on a finite-dimensional real
space \(V\), with \(V^H=0\), and trivially on a finite-dimensional real parameter space
\(\Theta\). Let \(s(\theta,v)\) be a real \(C^2\), \(H\)-invariant
function near \((\theta_0,0)\). Suppose \(s(\theta,0)=0\) for every
nearby \(\theta\), and \(D_v^2s(\theta_0,0)\) is positive definite.
Then, after shrinking the neighborhood, some \(c>0\) satisfies
\(s(\theta,v)\ge c\norm v^2\) uniformly in \(\theta\).
\end{lemma}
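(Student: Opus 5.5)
The plan is a second-order Taylor expansion in \(v\) with remainder, followed by two compactness arguments: first in \(\theta\), then in the radial/angular decomposition of \(v\) using the \(H\)-invariant structure. Write \(Q_\theta=D_v^2s(\theta,0)\).

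First I will show the linear term vanishes. Since \(s(\theta,0)=0\) for all nearby \(\theta\), we have \(s\) identically zero on the slice \(v=0\). The \(v\)-gradient \(D_vs(\theta,0)\) is a vector in \(V^*\). By invariance it satisfies \(D_vs(\theta,0)\circ k=D_vs(\theta,0)\) for every \(k\in H\), because \(k\) fixes \(0\) and \(s(\theta,kv)=s(\theta,v)\). So this vector lies in \((V^*)^H\). Because the action is orthogonal, \((V^*)^H\cong V^H=0\). Hence \(D_vs(\theta,0)=0\) for every nearby \(\theta\), not only at \(\theta_0\). This is the single place where \(V^H=0\) enters, and I regard it as the main point: without it a linear term of size \(\norm v\) could dominate the quadratic one away from \(\theta_0\).

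Next I will use positivity and continuity of the Hessian. Since \(s\) is \(C^2\), the map \(\theta\mapsto Q_\theta\) is continuous, and \(Q_{\theta_0}\ge 2c_0\I\) for some \(c_0>0\). After shrinking the \(\theta\)-neighborhood to a compact ball, I can take \(Q_\theta\ge c_0\I\) uniformly.

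Then I apply Taylor's theorem with integral remainder in \(v\), at fixed \(\theta\):
\[
s(\theta,v)=\int_0^1(1-t)\,D_v^2s(\theta,tv)[v,v]\,dt .
\]
This uses that \(s(\theta,0)\) and \(D_vs(\theta,0)\) both vanish. Uniform continuity of \(D_v^2s\) on a compact neighborhood of \((\theta_0,0)\) gives \(D_v^2s(\theta,tv)\ge (c_0/2)\I\) once \(\norm v\) is small enough, uniformly in \(\theta\) and \(t\in[0,1]\). Integrating yields \(s(\theta,v)\ge (c_0/4)\norm v^2\). So \(c=c_0/4\) works on the shrunken product neighborhood.

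The remaining routine check is that the first-order vanishing holds at each \(\theta\) separately, so that the Taylor remainder bound is uniform. This follows from the invariance argument above, which is pointwise in \(\theta\). Invariance is needed only for that step; positivity then comes from \(C^2\) continuity alone.
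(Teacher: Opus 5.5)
Your proof is correct and follows essentially the same route as the paper: invariance together with \(V^H=0\) kills \(D_vs(\theta,0)\) at every nearby \(\theta\), and then continuity of the Hessian on a product neighborhood convex in \(v\), combined with Taylor's integral formula, gives the uniform quadratic lower bound. The paper kills the linear term by averaging \(v\) over \(H\), whereas you identify \((V^*)^H\cong V^H\) for an orthogonal action; the only blemish is that your opening plan mentions a radial/angular compactness step that the argument never actually uses.
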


\begin{proof}
The linear functional \(D_vs(\theta,0)\) is invariant. For every \(v\),
its value equals its value on \(\int_Hkv\,dk\), which is zero because
\(V^H=0\). Thus the constant and linear charged terms vanish exactly
at every neutral parameter. Continuity of the Hessian and compactness
of the unit sphere in \(V\) give a product neighborhood, convex in \(v\),
on which \(D_v^2s(\theta,v)[\xi,\xi]\ge2c\norm\xi^2\).
Taylor's integral formula gives
\[
s(\theta,v)=\int_0^1(1-t)D_v^2s(\theta,tv)[v,v]\,dt
\ge c\norm v^2.
\]
Equivalently, for \(f(t)=s(\theta,tv)\), two mean-value arguments
integrate \(f''(t)\ge2c\norm v^2\), using \(f(0)=f'(0)=0\).
This is a bound on the complete function, with no power-series truncation.
\end{proof}

Apply the lemma with \(\theta=(\sigma,u)\), \(V\) the charged
anti-Hermitian matrices, and
\[
s(\sigma,u,v)=\mathcal S(\sigma,u,v,\varphi(\sigma,u,v)).
\]
Equivariance of \(\varphi\) and the ambient section gives invariance
and exact neutral vanishing for all nearby \((\sigma,u)\).
The identity \(\varphi(0,u_0,v)=0\) identifies the Hessian at the
leader with twice the quadratic form in
\eqref{fs:eq:positivechargedform}. Explicitly, choose \(\gamma>0\)
no larger than any nonzero eigenvalue gap of \(A_0\). For charged \(v\),
\begin{equation}
\tr([A_0,v]^2)=\norm{[A_0,v]}_\HS^2
=\sum_{i,j}(\lambda_i-\lambda_j)^2|v_{ij}|^2
\ge\gamma^2\norm v_\HS^2.
\label{fs:eq:chargedgapbound}
\end{equation}
Equal eigenvalues belong to a single neutral block; all within-block
entries of \(v\) vanish. Thus repeated eigenvalues cause no loss of
positivity on the charged space. The lemma supplies
\begin{equation}
s(\sigma,u,v)\ge c\norm v^2
\label{fs:eq:fullanalyticpositivity}
\end{equation}
for the full analytic scalar, uniformly over the nearby neutral inputs.

On an actual arc, \(s=0\), so \(v=0\), and then \(w=0\) by
\eqref{fs:eq:intrinsicelimination}. All coordinates are fixed by \(T_0\),
and ambient equivariance makes the density fixed as well. Undoing the
analytic basis change extends the charge \(A_0\), with constant
spectrum.

\begin{proof}[Proof of Theorem~\ref{fs:thm:main}]
Every nonconstant normalized analytic arc has a first nonzero coefficient
\eqref{fs:eq:firstnonzeroleader}. If \(f_0\ne0\), use
Proposition~\ref{fs:prop:familyfilling}. If \(f_0=0\) and \(B_0\ne0\),
use Subsection~\ref{fs:app:helixcontinuation}; otherwise use
Subsection~\ref{fs:app:telescopecontinuation}. Each case transports a fixed
nonzero traceless Hermitian charge by an analytic common unitary. A
constant initial basis adjustment makes that unitary equal to \(\I\)
at zero. The normalized constant arc is immediate, and restoring scalar
and scale terms preserves every charge. This proves the theorem.
\end{proof}
\section{Defects and normalization}\label{main:app:defects}
Let $P$ exchange two copies of $\mathbb C^3$. All local Hilbert--Schmidt
norms use the unnormalized trace. For a Hermitian two-site density set
\begin{align}
\delta(h)&=\min_{\substack{q=q^\dagger,\ \tr q=0\\\norm q_\HS=1}}
  \norm{[h,q\otimes\I+\I\otimes q]}_\HS,\label{st:eq:delta}\\
F(h)&=[h_{12}+h_{23},[h_{12},h_{23}]],\qquad
r(h)=\pi_\Delta F(h),\quad \varepsilon(h)=\norm{r(h)}_\HS,
\label{st:eq:residual}
\end{align}
where $\pi_\Delta$ projects orthogonally away from
$\operatorname{im}\Delta$, with $\Delta X=X_{23}-X_{12}$.
Write the orthogonal projection of $h$ onto
$\operatorname{span}\{\I,P\}$ as $a\I+bP$, and put
$h_n=(h-a\I)/b$. Then
\begin{equation}
\delta(h)=|b|\delta(h_n),\qquad
\varepsilon(h)=|b|^3\varepsilon(h_n).
\label{st:eq:scale}
\end{equation}

Theorem~\ref{st:thm:sharp} is stated in the main text.

\begin{corollary}[Fixed-density forced charge]
Every Hermitian density in this neighborhood with $r(h)=0$ has a nonzero
traceless Hermitian one-site charge. In particular this holds for each
individual density admitting a regular analytic additive difference-form
$R$ matrix, without requiring a jointly analytic model-deformation family
through $P$.
\end{corollary}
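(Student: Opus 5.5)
The corollary follows from Theorem~\ref{st:thm:sharp} and the scaling relations \eqref{st:eq:scale}. Take a Hermitian density in the stated neighborhood, so $b\ne0$ and $\norm{h_n-P}_\HS<r_0$, and assume $r(h)=0$. Then $\varepsilon(h)=0$, and the bound $\delta(h)^3\le C\varepsilon(h)$ gives $\delta(h)=0$. The minimum in \eqref{st:eq:delta} runs over the unit sphere of traceless Hermitian matrices, which is compact, and the map $q\mapsto\norm{[h,q\otimes\I+\I\otimes q]}_\HS$ is continuous. The minimum is therefore attained at some $q$ with $\norm q_\HS=1$, and this $q$ is a nonzero traceless Hermitian matrix with $[h,q\otimes\I+\I\otimes q]=0$. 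Nothing about the size or sign of $b$ is needed beyond $b\ne0$, because \eqref{st:eq:scale} shows that the conditions $\delta=0$ and $\varepsilon=0$ are each invariant under the normalization $h\mapsto h_n$.

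For the second sentence, I show that a density $h$ in the neighborhood that admits a regular analytic additive difference-form $\check R(u)$ with $\check R(0)=\I$ and $h=\check R'(0)$ satisfies $r(h)=0$. Expanding $\check R(u)=\I+uh+u^2a+u^3b'+O(u^4)$ in the braided Yang--Baxter equation of Sec.~\ref{main:sec:setting} is exactly the fixed-density derivative argument already given before \eqref{fs:eq:secantwitness}; it uses no model parameter. The coefficient of $uv$ gives $a=h^2/2+c\I$. The coefficient of $u^2v$ then yields
\[
F(h)=Y_{23}-Y_{12},\qquad Y=\check R'''(0)-3h\check R''(0)+2h^3.
\]
So $F(h)\in\operatorname{im}\Delta$, hence $\pi_\Delta F(h)=0$, and the first part applies.

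One point to check is that the analytic $R$ may come with a scalar factor or a spectral rescaling that changes $h$ to $a\I+bh$. This does not matter. The Reshetikhin condition is preserved under this change: a scalar shift does not alter $F$, and a rescaling multiplies $F$ by $b^3$. By \eqref{st:eq:scale}, the defect also transforms covariantly. The main substantive input is Theorem~\ref{st:thm:sharp} itself. Once that is granted, the only step needing care is the derivative bookkeeping that identifies the witness $Y$, and that bookkeeping is already displayed in the paper.
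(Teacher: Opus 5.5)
Your proposal is correct and takes the same route as the paper. The paper's proof is simply to set $\varepsilon=0$ in Theorem~\ref{st:thm:sharp}, and its second sentence rests on the Reshetikhin condition obtained by differentiating the additive Yang--Baxter equation, which is the $uv$ and $u^2v$ coefficient argument you cite; your remarks on attainment of the minimum and on scalar and rescaling normalizations are harmless additions rather than needed steps.
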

This follows by setting $\varepsilon=0$. It removes a hypothesis about
the additional model parameter, not spectral regularity. It gives neither
a common charge nor a globally analytic choice of charge over the
neighborhood.

\section{Proof retaining the full analytic error}\label{main:app:stability}
We use the analytic inputs proved above. Appendix~\ref{fs:app:continuation} constructs
an $SU(3)$-equivariant section
\begin{equation}
\mathfrak h(x)=P+K(x)+n(x),\qquad n(x)=O(\norm x^2),
\quad x=(A,B;f,\bar f)\in\mathbb R^{36},\label{st:eq:section}
\end{equation}
by solving the projection of $r=0$ onto $\operatorname{im}Dr_P$.
The derivative in the 43 normal coordinates is invertible. The same
appendix constructs genuine analytic residual equations whose initial
terms are the action equations, the intrinsic cubic covariant, and
$[B,[A,B]]$. Their degrees are two or three. On the Hermitian real form
their nonzero common zeros have either a nonzero intrinsic component,
a nonzero helix component $B$, or only a telescope component $A$.

\subsection{Normal reduction and compactness}
Write a normalized density as $h_n=P+K(x)+y$. Invertibility of the
normal derivative and a mean-derivative estimate imply, on a smaller
neighborhood,
\begin{equation}
\norm{h_n-\mathfrak h(x)}\le C_0\varepsilon(h_n),\qquad
\varepsilon(\mathfrak h(x))\le C_1\varepsilon(h_n).
\label{st:eq:normalerror}
\end{equation}
Here the averaged derivative is kept within half the smallest singular
value of the fixed invertible derivative at $P$; pointwise
invertibility alone would not suffice. This estimate applies off the
zero set. Also
$|\delta(h)-\delta(g)|\le4\norm{h-g}_\HS$.
It therefore suffices to prove the bound in Theorem~\ref{st:thm:sharp} on the section.

Choose an invariant tangent norm and put $x=\sigma z$,
$\sigma=\norm x>0$, $\norm z=1$, and
$\rho=\varepsilon(\mathfrak h(\sigma z))$. Since $P$ commutes with
every total one-site operator,
\begin{equation}
\delta(\mathfrak h(\sigma z))\le C_2\sigma.\label{st:eq:trivial}
\end{equation}
Equation~\eqref{fs:eq:boundedrepresentatives} gives the actual analytic
coefficients expressing each \(G_i\) in the residual components, and a
common bound for them on a small invariant ball. If its initial degree is $d_i\in\{2,3\}$, then
\begin{equation}
|\widetilde G_i(\sigma,z)|
=|\sigma^{-d_i}G_i(\sigma z)|\le C_i\rho/\sigma^{d_i}.
\label{st:eq:scalederror}
\end{equation}

If no uniform local bound existed, one could choose section points
approaching $P$ with $\delta_n^3>n\rho_n$, including possible zero
values of $\rho_n$. Equation
\eqref{st:eq:trivial} would force $\rho/\sigma^3\to0$.
Passing to a subsequence, $z\to p$ on the unit sphere. Equation
\eqref{st:eq:scalederror} forces all the genuine initial equations to vanish
at $p$. We now prove the cubic bound in a neighborhood of every possible
such $p$. This contradicts that sequence and establishes uniformity.

\subsection{Nonzero intrinsic leaders}
The family-filling proposition in Appendix~\ref{fs:app:continuation} selects independent scaled
equations whose zero set $Y$ is smooth near $(0,p)$ and is filled by
exact densities with nonzero one-site charges. Quantitative implicit
inversion, at fixed $\sigma$, gives
\[
\operatorname{dist}(z,Y_\sigma)\le C_p\rho/\sigma^3.
\]
The section has derivative $O(\sigma)$ with respect to $z$. Comparing
with a density on $Y_\sigma$ thus gives
\begin{equation}
\delta\le C_p\rho/\sigma^2.\label{st:eq:linearerror}
\end{equation}
Multiplication by the square of \eqref{st:eq:trivial} yields
$\delta^3\le C_p\rho$. If $\rho/\sigma^3$ is not small, the trivial
estimate alone suffices. The argument uses exact filling of $Y$, not
merely equality of tangent spaces. More explicitly, use
$(\sigma,u,v)\mapsto(\sigma,u,\widetilde G)$ as local coordinates and
set the last coordinates to zero while preserving $\sigma,u$.

\subsection{Nonzero helix leaders}
Use analytic spectral cluster frames to impose $[B,B_0]=0$, allowing
repeated eigenvalues within each cluster. Split the remaining variables
into fixed coordinates $u$ and charged coordinates $v$ under the compact
closure of $\exp(\theta B_0)$. Average the analytic representatives over
this group. Equation~\eqref{fs:eq:averagedresidualbound} shows that the averaged
equations obey \eqref{st:eq:scalederror} with a bounded constant,
without requiring ideal membership of the group integral.

The charged output vanishes exactly at $v=0$ and has invertible
derivative in $v$ at the leader. Hence
$\norm v\le C_p\rho/\sigma^3$. Ambient equivariance of the section
makes every $v=0$ density commute with $-iB_0$, even off the remaining
zero set. It follows that $\delta\le C_p\sigma\norm v$, which gives
\eqref{st:eq:linearerror} and the cubic bound.

\subsection{Pure telescope leaders}
Now $p=(A_0,0;0,0)$ with $A_0\ne0$. In a spectral block frame with
$[A,A_0]=0$, let $v=B_{\rm ch}$ and $w=(f,\bar f)_{\rm ch}$.
The scaled quadratic action equations have an invertible derivative in
$w$. Their exact implicit solution is $w=\varphi(\sigma,u,v)$, with
$\varphi(\sigma,u,0)=0$. For arbitrary inputs, mean-derivative inversion
therefore gives
\begin{equation}
w=\varphi(\sigma,u,v)+e,\qquad
\norm e\le C_p\rho/\sigma^2,\qquad
\norm\varphi\le C_p\norm v.\label{st:eq:werror}
\end{equation}

The complete analytic scalar constructed in Appendix~\ref{fs:app:continuation} is a bounded
linear contraction of the residual divided by $\sigma^3$. Thus its
absolute value at the actual input is at most $C_p\rho/\sigma^3$.
Its extension at $\sigma=0$ is analytic in the ambient section:
the multiplication contractions satisfy
$\mu_2\mathfrak h=3\I+O(\sigma)$ and
$\mu_2\mathfrak h^2=\I+O(\sigma^2)$ because $\{P,K\}=0$.
Their commutator thus starts at order three before imposing the
remaining equations. In particular its derivative with respect to
$w$ stays bounded. Replacing $w$ by $\varphi$ changes the scalar by
$O(\norm e)$.
After this elimination, Lemma~\ref{fs:lem:equivariantpositivity},
applied to the full analytic scalar, gives
\[
s(\sigma,u,0)=0,\qquad D_vs(\sigma,u,0)=0,\qquad
s(\sigma,u,v)\ge c_p\norm v^2.
\]
The first identity holds at all neutral inputs, not only solutions;
the second follows from compact invariance. The quadratic term at the leader
is the positive form $\tr([A_0,v]^2)$. Consequently
\begin{equation}
\norm v^2\le C_p\rho/\sigma^3.\label{st:eq:verror}
\end{equation}
Ambient equivariance, \eqref{st:eq:werror}, and \eqref{st:eq:verror} imply
\begin{equation}
\delta\le C_p\sigma(\norm v+\norm e),\qquad
\delta^2\le C_p\left(\rho/\sigma+\rho^2/\sigma^2\right).
\end{equation}
For $\rho\le\sigma^3$ and $\sigma\le1$ this gives
$\delta^2\le C_p\rho/\sigma$. Multiplying \eqref{st:eq:trivial}
proves $\delta^3\le C_p\rho$. The complementary region is handled
by \eqref{st:eq:trivial} alone.

The final exponent follows from explicit scalar inequalities.
Write \(d=\delta\), \(s=\sigma\), \(r=\rho\), and choose
nonnegative local constants \(a,b,c\) so that \(d\le as\).
In a linear chart, \(ds^2\le br\) implies
\[
d^3\le a^2(ds^2)\le a^2br.
\]
In a telescope chart, the displayed estimate has the form
\(d^2s^2\le c(rs+r^2)\). If \(0\le r\le s^3\) and
\(0<s\le1\), then \(r\le s\) and \(r^2\le rs\), giving
\[
d^2s\le2cr,\qquad d^3\le asd^2\le2acr.
\]
If \(s^3\le r\), the trivial bound gives \(d^3\le a^3r\).
Thus \(\max\{a^3,a^2b,2ac\}\) suffices on any such local chart.
The compactness argument supplies finitely many charts or,
equivalently, excludes a sequence on which their bounds fail.

Normal errors can also be restored with an explicit constant.
Suppose \(d\le d_0+L\varepsilon\),
\(d_0^3\le C\varepsilon\), and \(0\le\varepsilon\le1\),
with all quantities nonnegative. Then
\begin{equation}
d^3\le4\bigl(d_0^3+L^3\varepsilon^3\bigr)
\le4(C+L^3)\varepsilon.
\label{st:eq:normalrestorationconstant}
\end{equation}
The elementary inequality used here follows from
\(4(x^3+y^3)-(x+y)^3=3(x+y)(x-y)^2\ge0\).
In \eqref{st:eq:normalerror}, one may take \(L=4C_0\) and absorb
\(C_1\) into \(C\). Scalar shifts and density rescaling then cancel
exactly as in \eqref{st:eq:scale}.

These three charts complete the compactness argument. Equation
\eqref{st:eq:normalerror} restores arbitrary normal errors, whose cubic
contribution is $O(\varepsilon^3)$ and is absorbed locally. Equation
\eqref{st:eq:scale} restores scalar and scale directions. The explicit
perturbation below proves optimality and completes
Theorem~\ref{st:thm:sharp}.

\section{A length-independent physical residual norm}\label{main:app:norm}
Fix the density representative. For a periodic chain with $L\ge5$, set
\begin{equation}
H_L=\sum_jh_{j,j+1},\qquad
J_L=i\sum_j[h_{j,j+1},h_{j+1,j+2}],\quad
C_J(t)=\frac{\Tr(J_L(t)J_L)}{L3^L}.
\end{equation}
Jacobi cancellation of four-site terms gives
$[H_L,J_L]=i\Sigma_LF(h)$. In particular
\begin{equation}
\kappa_J:=-C_J''(0)=\frac{\norm{\Sigma_LF(h)}_\HS^2}{L3^L}.
\end{equation}
Thus $i[H_L,J_L]=-\Sigma_Lr(h)$. This local energy-current identity
is standard; see, for example, Appendix C of \cite{surace}.
We distinguish the force variance $\kappa_J$ from the relative
curvature $\Gamma_J=-[C_J(t)/\chi_J]''_{t=0}=\kappa_J/\chi_J$,
where $\chi_J=C_J(0)$.

\begin{lemma}[Periodic quotient norm]\label{st:lem:norm}
For a traceless three-site operator $F$, write
$r=\pi_\Delta F=r_1+r_2+r_3$ according to the minimum support span of
its identity/traceless tensor words. Then for $L\ge5$,
\begin{equation}
\frac{\norm{\Sigma_LF}_\HS^2}{L3^L}
=\frac{3\norm{r_1}_\HS^2+2\norm{r_2}_\HS^2+\norm{r_3}_\HS^2}{27}.
\label{st:eq:norm}
\end{equation}
Consequently $\varepsilon^2/27\le\kappa_J\le\varepsilon^2/9$.
\end{lemma}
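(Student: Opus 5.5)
We prove \eqref{st:eq:norm} by expanding everything in the orthogonal basis of tensor words $w=\sigma_{a_1}\otimes\cdots$, where $\sigma_0=\I$ and $\sigma_1,\dots,\sigma_8$ are traceless one-site matrices, orthonormal up to the factor $\tr(\sigma_a^\dagger\sigma_b)=3\delta_{ab}$. On $L$ sites distinct words are orthogonal with $\Tr(w^\dagger w)=3^L$. Hence $\norm{X}_\HS^2/(L3^L)$ is $L^{-1}$ times the sum of squared coefficients of $X$ in this word basis.

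\textbf{Step 1: reduce to $r$.} Since $\Sigma_L\Delta X=0$ on a periodic chain, $\Sigma_LF=\Sigma_L r$ with $r=\pi_\Delta F$. We must also check that the projection $\pi_\Delta$ is compatible with the word decomposition. The image of $\Delta$ consists of operators $X_{23}-X_{12}$. A word of $F$ with a nontrivial letter on site 3 and identity on site 1 can be translated to a word supported on sites 1 and 2. Orthogonality away from $\operatorname{im}\Delta$ then means that each translation class carries the correct averaged coefficient.

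\textbf{Step 2: count translation classes.} Classify each traceless word in $r$ by its minimal support span $s\in\{1,2,3\}$, meaning the number of sites from its first to its last nonidentity letter. A word of span $s$ fits into the three-site window in $4-s$ positions. Under $\Sigma_L$, all $4-s$ translates of a given pattern land on the same periodic translation orbit. For $L\ge5$, these orbits for different patterns are distinct, and each orbit consists of $L$ orthogonal words with no wraparound coincidences.

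\textbf{Step 3: identify the coefficients.} For each pattern, let $c_1,\dots,c_{4-s}$ be the coefficients of its translates in $F$. The coefficient of the pattern in $\Sigma_LF$ is $\sum_i c_i$. The image $\operatorname{im}\Delta$ is spanned by differences of adjacent translates, so its orthogonal complement, restricted to the translates of one pattern, consists of vectors with all $c_i$ equal; call the common value $c$. Consequently $\norm{r}^2$ contributes $(4-s)\,3^3|c|^2$ for that pattern, while $\Sigma_Lr$ contributes $L\,3^L(4-s)^2|c|^2$. The ratio gives weight $(4-s)/27$ relative to $\norm{r_s}_\HS^2$, which is exactly $3,2,1$ over $27$ and proves \eqref{st:eq:norm}. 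The bounds $\varepsilon^2/27\le\kappa_J\le\varepsilon^2/9$ then follow from $1\le 4-s\le3$ together with $\kappa_J=\norm{\Sigma_LF}^2/(L3^L)$.

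\textbf{Main obstacle.} The hardest step is Step 3, specifically verifying that orthogonality to $\operatorname{im}\Delta$ acts pattern by pattern in the word basis. The map $\Delta$ sends a two-site word $w$ to $\I\otimes w-w\otimes\I$. This exactly pairs adjacent translates of patterns of span at most 2, and span-3 words are unaffected. The identity component is excluded because $F$ is traceless. Once this is verified, the calculation is pure counting.
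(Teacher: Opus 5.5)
Your proposal is correct and follows the paper's proof essentially step for step. You expand in orthogonal identity/traceless tensor words, observe that $\operatorname{im}\Delta$ consists, pattern by pattern, of sum-zero combinations of the $4-s$ padded translates, so that $\pi_\Delta$ replaces their coefficients by the common average, and then compare the local norm $(4-s)\,27|c|^2$ with the periodic norm per site $(4-s)^2|c|^2$. As in the paper, $L\ge5$ supplies distinct orthogonal translation orbits and tracelessness excludes the constant word. The only difference is cosmetic: you track the unnormalized factors $3^3$ and $3^L$ explicitly, whereas the paper works in the normalized trace inner product.
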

\begin{proof}
Use an orthonormal single-site basis $\I,E_1,\ldots,E_8$ for the
normalized trace inner product. A word of support span $s$ has
$k=4-s$ padding positions in three sites. The image of $\Delta$
spans differences of these positions, so the orthogonal quotient
representative assigns their common average coefficient $a$.
Its local squared norm is $k|a|^2$; the periodic squared norm per site
is $k^2|a|^2$. Different translation orbits are orthogonal for $L\ge5$.
Span three includes words with an identity at the middle site.
The traceless assumption excludes the constant word. Summing gives
\eqref{st:eq:norm}.
\end{proof}

This quotient removes boundary terms from $F$, not from the choice of
$h$. Adding a telescope to $h$ can change the instantaneous current
representative and its curvature. The examples below have scalar partial
traces on both sites and use explicitly fixed densities.

\begin{corollary}[Sharp sixth-power initial response bound]\label{st:cor:sixth}
In the neighborhood of Theorem~\ref{st:thm:sharp},
\begin{equation}
\kappa_J(h)\ge\frac{\delta(h)^6}{27C^2}.
\label{st:eq:sixthforce}
\end{equation}
On a smaller normalized neighborhood choose
$0<\chi_-\le\chi_J(h_n)\le\chi_+$. Then
\begin{equation}
\frac{\Gamma_J(h)}{|b|^2}\ge
\frac1{27C^2\chi_+}\left(\frac{\delta(h)}{|b|}\right)^6.
\label{st:eq:sixthresponse}
\end{equation}
At fixed exchange scale this is $\Gamma_J\ge c\delta^6$.
The power six cannot be replaced by a smaller positive power.
No scalar-partial-trace assumption is required for these bounds.
\end{corollary}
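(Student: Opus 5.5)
The plan is to combine Theorem~\ref{st:thm:sharp} with Lemma~\ref{st:lem:norm}, then handle the normalization separately. For \eqref{st:eq:sixthforce}, the lemma gives $\kappa_J(h)\ge\varepsilon(h)^2/27$ for every $L\ge5$, with no hypothesis beyond the traceless form of $F(h)$. The trace of the double commutator vanishes, so $F(h)$ is traceless. On the neighborhood of Theorem~\ref{st:thm:sharp} we have $\varepsilon(h)\ge\delta(h)^3/C$. Squaring this and inserting it into the lemma gives $\kappa_J\ge\delta^6/(27C^2)$ directly. Since $\kappa_J$ is defined through the fixed representative, and the lemma applies to any representative, no scalar-partial-trace assumption enters.

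For \eqref{st:eq:sixthresponse}, I first record how each quantity scales under $h=a\I+bh_n$. The scalar $a\I$ drops out of every commutator, so $J_L(h)=b^2J_L(h_n)$ and $\chi_J(h)=b^4\chi_J(h_n)$. The force satisfies $[H_L,J_L]\propto b^3$, so $\kappa_J(h)=b^6\kappa_J(h_n)$. Hence $\Gamma_J(h)=b^2\Gamma_J(h_n)$, and by \eqref{st:eq:scale}, $\delta(h)/|b|=\delta(h_n)$.

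Next I need the bounds $0<\chi_-\le\chi_J(h_n)\le\chi_+$. By Lemma-type trace expansions, $\chi_J(h_n)$ is a polynomial in the entries of $h_n$ that is independent of $L$ for $L\ge5$. At $h_n=P$ it equals $16/9>0$, as stated in the main text; I would verify this by the same operator-word counting, applied to $i[P_{12},P_{23}]$. Continuity then gives the bounds on a smaller ball. Applying \eqref{st:eq:sixthforce} to $h_n$ and dividing by $\chi_+$ gives $\Gamma_J(h_n)\ge\delta(h_n)^6/(27C^2\chi_+)$, which is \eqref{st:eq:sixthresponse} after rescaling. At fixed $|b|$ this reads $\Gamma_J\ge c\delta^6$.

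For the sharpness of the power six, I would use the compensated Fermat interaction of Sec.~\ref{main:sec:model}. There $\delta\sim\sqrt{12}|\lambda|$ and $\kappa_J/\delta^6\to1/2$, equivalently $\Gamma_J=486\lambda^6+O(\lambda^8)$. If the bound held with $\delta^{p}$ for some $p<6$, then $\kappa_J\ge c\delta^p$ would force $\lambda^6\gtrsim\lambda^p$, which is false as $\lambda\to0$. The main obstacle is certifying the Fermat asymptotics, namely that the residual is exactly cubic and $\delta$ exactly linear; I would take these from the explicit polynomials in Appendix~\ref{main:app:fermat}. A secondary point is checking that $\chi_J(P)\ne0$, which follows from that explicit value.
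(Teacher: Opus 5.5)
Your proposal is correct and follows the paper's proof essentially step for step: combine $\varepsilon\ge\delta^3/C$ from Theorem~\ref{st:thm:sharp} with $\kappa_J\ge\varepsilon^2/27$ from Lemma~\ref{st:lem:norm}, use continuity of $\chi_J$ near $\chi_J(P)=16/9$ together with the scalings $\chi_J(h)=|b|^4\chi_J(h_n)$ and $\Gamma_J(h)=|b|^2\Gamma_J(h_n)$, and take sharpness from the compensated Fermat family, where $\kappa_J/\delta^6\to1/2$. Your extra remarks fill in details the paper leaves implicit: $F(h)$ is traceless because it is a commutator, $\chi_J$ is independent of $L$ for $L\ge5$, and $h_n$ is its own normalization, so the theorem applies to it directly.
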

\begin{proof}
Combine $\varepsilon\ge\delta^3/C$ with
$\kappa_J\ge\varepsilon^2/27$.
The susceptibility is continuous and equals $16/9$ at $P$, giving
the bounds on $\chi_J(h_n)$. Under $h=a\I+bh_n$,
$\chi_J(h)=|b|^4\chi_J(h_n)$ and
$\Gamma_J(h)=|b|^2\Gamma_J(h_n)$, which prove
\eqref{st:eq:sixthresponse}. Sharpness follows from the Fermat family below.
\end{proof}
This is a physical restatement of the sharp structural estimate, not
an independent relaxation mechanism. In particular,
$\delta^3\le C\varepsilon$ does not convert a persistence bound
proportional to $\varepsilon^{-1}$ into a general lower bound
proportional to $\delta^{-3}$: the inverse inequality runs in the
opposite direction. Such a conversion requires a separate upper bound
on $\varepsilon$, supplied by the explicit Fermat construction below.

\begin{corollary}[Sharp response constraint]
Suppose $h$ is in the neighborhood of Theorem~\ref{st:thm:sharp} and
$\Tr_1h=\Tr_2h=c\I$. For $Q_L=\sum_jq_j$ with
$\tr q=0$, $\norm q_\HS=1$, define
$C_q(t)=\Tr(Q_L(t)Q_L)/(L3^L)$ and
$\Gamma_{\rm one}=\min_q[-C_q''(0)/C_q(0)]$. Then
\begin{equation}
\boxed{\Gamma_{\rm one}=\frac{\delta(h)^2}{3}
\le C^{2/3}\kappa_J(h)^{1/3}.}\label{st:eq:response}
\end{equation}
The exponent $1/3$ is optimal.
\end{corollary}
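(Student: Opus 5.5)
The plan is to compute $C_q(0)$ and $C_q''(0)$ directly in the normalized tracial state, identify the minimization over $q$ with the one defining $\delta(h)$, and then combine with Theorem~\ref{st:thm:sharp} and Lemma~\ref{st:lem:norm}.

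\textbf{Step 1: the two correlators.} For traceless $q$ with $\norm q_\HS=1$, distinct sites are orthogonal in the tracial state. Hence $C_q(0)=\tr(q^2)/3=1/3$. Next, $-C_q''(0)=\Tr([H_L,Q_L]^\dagger[H_L,Q_L])/(L3^L)$. The force is
\[
[H_L,Q_L]=\sum_j c_{j,j+1},\qquad c=[h,q_1+q_2].
\]
For $L\ge3$ I would show that the translated terms $c_{j,j+1}$ are mutually orthogonal. Only neighbouring bonds share a site, so the cross term $\Tr(c_{12}^\dagger c_{23})$ is the only case to check. Its reduction to the shared site involves $\Tr_1 c$ and $\Tr_2 c$. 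Scalar partial traces make both vanish: for example $\Tr_1[h,q_1+q_2]=[\Tr_1h,q]+\Tr_1[h,q_1]=0$, since $\Tr_1h=c_0\I$ and the second trace vanishes by cyclicity on site 1. The same holds for $\Tr_2$. Nonadjacent bonds are orthogonal because $\Tr c=0$.

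This gives $-C_q''(0)=\norm{c}_\HS^2/9$. Here the $3^{L-2}$ padding divided by $3^L$ supplies the $1/9$, and the $L$ translates cancel the $1/L$. Dividing by $C_q(0)=1/3$ gives $-C_q''(0)/C_q(0)=\norm{[h,q_1+q_2]}_\HS^2/3$. Minimizing over the same normalized traceless Hermitian $q$ as in \eqref{st:eq:delta} yields $\Gamma_{\rm one}=\delta(h)^2/3$. This orthogonality step is the one place where the scalar-marginal hypothesis is essential.

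\textbf{Step 2: the inequality.} From Theorem~\ref{st:thm:sharp}, $\delta^3\le C\varepsilon$, so $\delta^2\le C^{2/3}\varepsilon^{2/3}$. Lemma~\ref{st:lem:norm} gives $\varepsilon^2\le27\kappa_J$, hence
\[
\varepsilon^{2/3}\le 3\kappa_J^{1/3},
\qquad
\frac{\delta^2}{3}\le C^{2/3}\kappa_J^{1/3}.
\]

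\textbf{Step 3: optimality of the exponent $1/3$.} I would use the compensated Fermat family, which has scalar partial traces. Its values $\Gamma_{\rm one}=4\lambda^2+O(\lambda^4)$ and $\kappa_J\asymp\lambda^6$ are quoted in Sec.~\ref{main:sec:model}. Then $\Gamma_{\rm one}/\kappa_J^{p}\to\infty$ for any $p>1/3$, so no larger exponent works. The sharpness therefore rests on the explicit residual polynomial deferred to the Fermat appendix, which I take as given.

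The main obstacle is Step 1. The partial-trace computation that makes the adjacent cross terms vanish is short but must be carried out carefully. The rest is immediate from earlier results.
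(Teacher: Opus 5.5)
Your proposal is correct and follows the paper's own proof essentially step for step. The argument runs: the scalar partial traces of $h$ make both partial traces of $[h,q_1+q_2]$ vanish, so its translates are orthogonal; then $C_q(0)=1/3$ and the relative curvature is $\norm{[h,q_1+q_2]}_\HS^2/3$; next you combine $\delta^3\le C\varepsilon$ with $\varepsilon^2\le 27\kappa_J$; and finally the compensated Fermat family, with $\Gamma_{\rm one}\sim4\lambda^2$ against $\kappa_J\asymp\lambda^6$, shows the exponent $1/3$ is optimal.
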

\begin{proof}
The two partial traces of $[h,q_1+q_2]$ vanish. Distinct translates
are therefore orthogonal in the infinite-temperature inner product.
Since $C_q(0)=1/3$, its relative curvature is
$\norm{[h,q_1+q_2]}_\HS^2/3$. Apply Theorem~\ref{st:thm:sharp}
and Lemma~\ref{st:lem:norm}. Optimality follows below.
\end{proof}

\section{An explicit saturating perturbation}\label{main:app:fermat}
Let the sums run over cyclic triples $(a,b,c)=(0,1,2),(1,2,0),(2,0,1)$,
and define
\begin{align}
T&=\sum_{\rm cyc}\bigl[|aa\rangle(\langle bc|-\langle cb|)
 +( |bc\rangle-|cb\rangle)\langle aa|\bigr],\\
C_+&=\sum_{\rm cyc}\bigl[|aa\rangle(\langle bc|+\langle cb|)
 +( |bc\rangle+|cb\rangle)\langle aa|\bigr],\qquad
D=\sum_a|aa\rangle\langle aa|,\\
N&=\I+2P-5D-C_+,\qquad h_\lambda=P+\lambda T+\lambda^2N.
\end{align}
Its conversion tangent already occurs in the Leigh--Strassler
Hamiltonian \cite[Eq.~(5.2)]{mansson}; the saturation below uses the
particular second-order coefficient $N$ displayed here.
For each $\lambda\ne0$ the residual below is nonzero, excluding a
regular analytic additive difference-form $R$ matrix. Put $u=\lambda^2$ and
\begin{align}
p(u)&=171u^3+378u^2+320u+108,\\
v(u)&=81u^4+72u^3+18u^2+12u+2.
\end{align}
The Weyl contractions derived below give
\begin{equation}
\boxed{\varepsilon^2=216\lambda^6p(u),\quad
\kappa_J=8\lambda^6p(u),\quad
\chi_J=\frac89v(u),\quad
\chi_E=\frac29(27u^2+12u+4).}\label{st:eq:polynomials}
\end{equation}
Here $\chi_J=C_J(0)$ and $\chi_E$ is the variance per site of the
centered energy density $e=h_\lambda-\I/3$. All three single-site
partial traces of the local current vanish, as do both partial traces
of $e$. The quotient residual has only span-three terms. These facts
make \eqref{st:eq:polynomials} exact for every $L\ge5$.

The generalized local charge-Gram spectrum is
\[
36u+36u^2\ (2),\qquad 12u+108u^2\ (4),\qquad
12u+144u^2\ (2).
\]
Thus every uniform one-site charge is broken for $\lambda\ne0$.
For $u<1/3$,
\begin{equation}
\delta=|\lambda|\sqrt{12+108u},\qquad
\Gamma_{\rm one}=4u+36u^2,\qquad
\frac{\kappa_J}{\chi_J}=\frac{9\lambda^6p(u)}{v(u)}
=486\lambda^6+O(\lambda^8).\label{st:eq:separation}
\end{equation}
Equations \eqref{st:eq:polynomials} and \eqref{st:eq:separation} prove both
the stability and balanced-response optimality assertions. They also give
$\kappa_J/\delta^6\to1/2$ and $\Gamma_J/\delta^6\to9/32$, proving
Corollary~\ref{st:cor:sixth}'s sharp exponent.
The normalization in \eqref{st:eq:scale} is regular:
here $a=-\lambda^2/4$ and $b=1+3\lambda^2/4$.

\subsection{Weyl reduction of the density}

The following derivation gives the coefficients in
\eqref{st:eq:polynomials} and the complete charge spectrum by scalar
contractions. Let \(\omega=e^{2\pi i/3}\),
\(X|j\rangle=|j+1\rangle\), \(Z|j\rangle=\omega^j|j\rangle\), and
\[
W_\alpha=\omega^{2\alpha_1\alpha_2}X^{\alpha_1}Z^{\alpha_2},
\qquad \alpha\in\mathbb F_3^2.
\]
All lattice arithmetic below is modulo three. These matrices obey
\[
W_\alpha^\dagger=W_{-\alpha},\quad
\tr(W_\alpha^\dagger W_\beta)=3\delta_{\alpha\beta},\quad
W_\alpha W_\beta=\omega^{2(\alpha_2\beta_1-\alpha_1\beta_2)}
W_{\alpha+\beta}.
\]
Define \(\eta(\alpha,\beta)\in\{0,1,-1\}\) by
\([W_\alpha,W_\beta]=i\sqrt3\,\eta(\alpha,\beta)W_{\alpha+\beta}\).
Explicitly, if \(k=2(\alpha_2\beta_1-\alpha_1\beta_2)\bmod3\),
then \(\eta=(0,1,-1)_k\). In particular,
\(\eta^2=1\) precisely when the two labels are linearly independent.

The density commutes with both \(X\otimes X\) and \(Z\otimes Z\), so
\begin{equation}
h_\lambda=\sum_{\alpha\in\mathbb F_3^2}
c_\alpha W_\alpha\otimes W_{-\alpha},\qquad
c_0=\tfrac13,\qquad c_{-\alpha}=\overline{c_\alpha}.
\label{st:eq:weyldensity}
\end{equation}
There are four nonzero lattice lines. Choose their representatives
\(a=(1,0)\), \(b=(0,1)\), \(c=(1,1)\), \(d=(1,2)\). Taking the
two-site trace against \(W_{-\alpha}\otimes W_\alpha\) gives
\begin{equation}
\begin{array}{c|rrrr}
\alpha&a&b&c&d\\ \hline
c_\alpha&\tfrac13&\tfrac13-u&
\tfrac13+u-i\lambda/\sqrt3&\tfrac13+u+i\lambda/\sqrt3 .
\end{array}
\label{st:eq:weylcoefficients}
\end{equation}
The table follows from a single three-dimensional block: in the ordered
two-site basis \((00,12,21)\) the density is
\[
\begin{pmatrix}
1-2u&\lambda-u&-\lambda-u\\
\lambda-u&u&1+2u\\
-\lambda-u&1+2u&u
\end{pmatrix},
\]
and the two other blocks are its translates by \(X\otimes X\).
Multiplication by the three-by-three Weyl matrices gives
\eqref{st:eq:weylcoefficients}.

\subsection{Susceptibilities and the eight one-site charges}

Write \(w_\alpha=|c_\alpha|^2\) for one representative of each line.
Equation~\eqref{st:eq:weylcoefficients} gives
\[
(w_a,w_b,w_c,w_d)=
\left(\tfrac19,\ \tfrac19-\tfrac23u+u^2,\
\tfrac19+u+u^2,\ \tfrac19+u+u^2\right).
\]
Orthogonality of Weyl words immediately yields
\(\chi_E=2\sum_{\alpha=a,b,c,d}w_\alpha\).
The local current is
\[
j=i[h_{12},h_{23}]
=-\sqrt3\sum_{\alpha,\beta}
\eta(-\alpha,\beta)c_\alpha c_\beta\,
W_\alpha\otimes W_{\beta-\alpha}\otimes W_{-\beta}.
\]
Each nonzero term has three nonzero labels, so all its one-site
partial traces vanish. The pair \((\alpha,\beta)\) specifies its word
uniquely. There are four choices of signs for each ordered pair of
distinct lines. Consequently
\begin{equation}
\chi_J=3\sum_{\alpha,\beta}\eta(-\alpha,\beta)^2
 |c_\alpha c_\beta|^2
=12\left[\left(\sum_{\alpha=a,b,c,d}w_\alpha\right)^2
 -\sum_{\alpha=a,b,c,d}w_\alpha^2\right].
\label{st:eq:weylsusceptibilities}
\end{equation}
These two scalar formulas give exactly the two susceptibilities in
\eqref{st:eq:polynomials}.

The charge-Gram operator commutes with conjugation by \(X\) and \(Z\).
Their eight nontrivial simultaneous characters on traceless matrices
are the eight \(W_\alpha\), so no characteristic polynomial is needed.
The coefficient of
\(W_{\beta+\alpha}\otimes W_{-\beta}\) in
\([h_\lambda,W_\alpha\otimes\I+\I\otimes W_\alpha]\) is
\(i\sqrt3\,\eta(\beta,\alpha)(c_\beta-c_{\beta+\alpha})\).
Since \(\norm{W_\alpha}_\HS^2=3\), its Gram eigenvalue is
\begin{equation}
g_\alpha=9\sum_{\beta\in\mathbb F_3^2}
\eta(\beta,\alpha)^2
|c_\beta-c_{\beta+\alpha}|^2.
\label{st:eq:weylcharge}
\end{equation}
Only six terms occur in each sum. Substituting the four coefficients gives
\[
g_a=12u+144u^2,\qquad
g_b=36u+36u^2,\qquad
g_c=g_d=12u+108u^2.
\]
The Hermitian matrices
\((W_\alpha+W_\alpha^\dagger)/\sqrt6\) and
\((W_\alpha-W_\alpha^\dagger)/(i\sqrt6)\) are an orthonormal real
pair with the same eigenvalue. This proves all eight eigenvalues,
their multiplicities, and the formula for \(\delta\).

\subsection{Six scalar residual contractions}

Expand the double commutator in the same basis:
\[
F(h_\lambda)=\sum_{\alpha,\beta}\rho_{\alpha\beta}
W_\alpha\otimes W_{\beta-\alpha}\otimes W_{-\beta}.
\]
Multiplying the current words once more by a density word gives the
explicit nine-term convolution
\begin{equation}
\begin{split}
\rho_{\alpha\beta}=-3\sum_{\gamma\in\mathbb F_3^2}c_\gamma
\big[&
c_{\alpha-\gamma}c_\beta
 \eta(\gamma-\alpha,\beta)\eta(\gamma,2\alpha-\beta)\\
&+c_\alpha c_{\beta-\gamma}
 \eta(-\alpha,\beta-\gamma)\eta(\gamma,2\beta-\alpha)
\big].
\end{split}
\label{st:eq:weylresidual}
\end{equation}
The first line comes from the commutator with \(h_{12}\); the
second comes from \(h_{23}\). This formula also fixes every sign.
For example, writing \(A=c_a\), \(B=c_b\), \(C=c_c\), and
\(\bar C=c_d\), the first contraction is
\[
\rho_{ba}=3\left[A B(A-B)+A\bar C^2+
B^2(C-\bar C)-B C\bar C\right]
=\lambda^3\left[\lambda(3u+2)+2i\sqrt3(1-u)\right].
\]
The remaining contractions require the same nine-term sum.
They reduce to the following six rows:
\begin{equation}
\begin{array}{c|c|r}
(\alpha,\beta)&\rho_{\alpha\beta}/\lambda^3&
\text{multiplicity}\\ \hline
(b,a)&\lambda(3u+2)+2i\sqrt3(1-u)&8\\
(b,c)&-2\lambda(3u+1)-i\sqrt3(u+4)&8\\
(b,d)&\lambda(4-6u)+i\sqrt3(5u+2)&8\\
(a,c)&-\lambda(9u+4)+i\sqrt3(3u+2)&8\\
(a,-d)&\lambda(3u+8)+i\sqrt3u&8\\
(c,d)&-2i\sqrt3(5u+2)&4
\end{array}
\label{st:eq:weylresidualtable}
\end{equation}
Here multiplicities can be counted without additional contractions.
The involutions
\[
(\alpha,\beta)\mapsto(-\alpha,-\beta),\qquad
(\alpha,\beta)\mapsto(J\alpha,J\beta),\qquad
(\alpha,\beta)\mapsto(\beta,\alpha),\qquad
J(r,t)=(r,-t),
\]
send \(\rho\) to its complex conjugate, its complex conjugate, and
its negative, respectively. The first two follow from the Weyl
coefficients. For the third, spatial reflection sends
\(F(h_\lambda)\) to \(-F(h_{-\lambda})\), while
\(c_\alpha(-\lambda)=\bar c_\alpha(\lambda)\).
The first five orbits have eight elements and the last has four.
Among the 64 pairs with nonzero endpoints, the other 20 coefficients
vanish: 16 have \(\beta=\pm\alpha\), and four are the orbit of
\((c,-d)\). Their vanishing follows directly from
\eqref{st:eq:weylresidual}.

Antisymmetry also gives \(\rho_{0\beta}=-\rho_{\beta0}\).
Those terms are exactly a boundary difference, whereas every word
with both endpoints nonzero is orthogonal to all boundary differences.
Thus the latter words are the quotient residual. Its only support
span is three, and
\[
\frac{\varepsilon^2}{27}=\kappa_J
=\sum_{\alpha,\beta\ne0}|\rho_{\alpha\beta}|^2.
\]
Taking squared moduli in \eqref{st:eq:weylresidualtable}, with the
listed multiplicities, gives
\[
\kappa_J=8\lambda^6(108+320u+378u^2+171u^3),
\]
which proves the remaining identities in \eqref{st:eq:polynomials}.
Weyl orthogonality and the periodic word lemma make these identities
exact for every \(L\ge5\).

The sharp exponents also have a shorter structural proof. For the
Fermat tensor, the contractions of Appendix~\ref{fs:app:obstructions}
give \(M=\I\), \(m=3\), \(\mathsf U=D\),
\(\mathsf F+\mathsf G=C_+\), and \(S=2\I\).
Hence the particular quadratic correction \(N_0\) is exactly \(N\).
Equation~\eqref{fs:eq:intrinsicnormal} therefore gives
\(\varepsilon(h_\lambda)=O(|\lambda|^3)\), while the absence of
one-site charges for \(T\), compactness of the unit charge sphere,
and continuity give
\(\delta(h_\lambda)=|\lambda|(\delta(T)+O(|\lambda|))\).
These two orders alone exclude any stability exponent greater than
\(1/3\). The norm comparison in Appendix~\ref{main:app:norm}
and \(\chi_J(P)>0\) similarly exclude an exponent below six in a
uniform lower bound for current curvature in terms of \(\delta\).
The six contractions above additionally determine the exact constants.

\subsection{The same conversion without compensation}
\label{st:app:rawcomparison}

For \(h_\lambda^{\rm raw}=P+\lambda T\), remove only the quadratic
terms from \eqref{st:eq:weylcoefficients}. The four coefficients are
\[
(c_a,c_b,c_c,c_d)=
\left(\tfrac13,\tfrac13,
\tfrac13-i\lambda/\sqrt3,\tfrac13+i\lambda/\sqrt3\right),
\]
with \(c_0=1/3\). This can also be read directly from its block
\(\left(\begin{smallmatrix}1&\lambda&-\lambda\\
\lambda&0&1\\-\lambda&1&0\end{smallmatrix}\right)\)
in the basis \((00,12,21)\). The weights are
\((1/9,1/9,1/9+u/3,1/9+u/3)\), so
\eqref{st:eq:weylsusceptibilities} gives
\begin{equation}
\chi_E^{\rm raw}=\frac49(2+3u),\qquad
\chi_J^{\rm raw}=\frac89(2+6u+3u^2).
\label{st:eq:rawsusceptibilities}
\end{equation}
Equation~\eqref{st:eq:weylcharge} gives \(12u\) on six Hermitian
charge directions and \(36u\) on the other two. Hence
\(\delta_{\rm raw}^2=12u\) and \(\Gamma_{\rm one}^{\rm raw}=4u\).

The same convolution \eqref{st:eq:weylresidual} and the same orbit
counting give the following four squared-modulus classes:
\begin{equation}
\begin{array}{c|c|r}
\text{representative orbits}&
|\rho_{\alpha\beta}^{\rm raw}|^2/\lambda^4&\text{multiplicity}\\ \hline
(b,a)&4/9&8\\
(b,c),\ (b,d)&(16+3u)/9&16\\
(a,c),\ (a,-d)&(4+3u)/9&16\\
(c,d)&4u/3&4
\end{array}
\label{st:eq:rawresidualtable}
\end{equation}
For example, the first, second, third, and last representatives have
\(\rho^{\rm raw}/\lambda^2=-2/3\),
\(-4/3-i\lambda/\sqrt3\), \(-2/3+i\lambda/\sqrt3\), and
\(2i\lambda/\sqrt3\), respectively. The other 20 coefficients with
nonzero endpoints vanish. The coefficients with a zero endpoint remain
an antisymmetric boundary difference, so the quotient residual again
has only span three. Summing the four rows gives
\begin{equation}
\kappa_J^{\rm raw}=\frac{16}{9}\lambda^4(22+9u),\qquad
\varepsilon_{\rm raw}^2=48\lambda^4(22+9u).
\label{st:eq:rawresidual}
\end{equation}
Equivalently, writing the projected residual as
\(\lambda^2R_2+\lambda^3R_3\), its three Gram contractions are
\(\tr(R_2^2)=1056\), \(\tr(R_3^2)=432\), and
\(\tr(R_2R_3)=0\). These identities are also obtained directly from
the local matrices by the accompanying comparison program.

Both paths have scalar partial traces and currents with vanishing
one-site partial traces. Thus the same periodic word lemma applies,
with no change of density representative or normalization. Dividing
\eqref{st:eq:rawresidual} by \eqref{st:eq:rawsusceptibilities} gives
\begin{equation}
\Gamma_J^{\rm raw}
=\frac{2\lambda^4(22+9\lambda^2)}{2+6\lambda^2+3\lambda^4}
=22\lambda^4+O(\lambda^6),
\label{st:eq:rawcurvature}
\end{equation}
which proves the raw-current curvature formula stated in Sec.~\ref{main:sec:model} for every \(L\ge5\).
The spectral inequality \eqref{st:eq:timebound} then gives
the two half-retention scalings stated in Sec.~\ref{main:sec:model}. These are certificates of retention; neither
curve determines an actual decay time.

\section{Resonant obstruction and local charge repair}\label{main:app:resonance}
We now prove a statement that does not use Appendix~\ref{fs:app:continuation}. For a completely
symmetric complex tensor $f\in\operatorname{Sym}^3\mathbb C^3$, define
the Hermitian intrinsic density
\begin{equation}
(T_f)^{ab}{}_{ij}
=\epsilon_{ijm}f^{abm}+\epsilon^{abm}\bar f_{ijm},\qquad
\norm{T_f}_\HS^2=4\norm f^2,\label{st:eq:intrinsic}
\end{equation}
where $\norm f^2=\sum_{a,b,c}|f^{abc}|^2$ includes all symmetric-index
multiplicities. It obeys $PT_fP=-T_f$. Put
$H_{0,L}=\sum_jP_{j,j+1}$, $V_{f,L}=\sum_j(T_f)_{j,j+1}$, and
$Q_L(q)=\sum_jq_j$ on a periodic chain.

\begin{lemma}[Finite-dimensional repair formula]\label{st:lem:pinching}
For a Hermitian $H_0=\sum_EEP_E$, let
$\mathcal P(A)=\sum_EP_EAP_E$. For any $A$,
\begin{equation}
\inf_X\norm{[H_0,X]+A}_{\rm op}=\norm{\mathcal P(A)}_{\rm op}.
\label{st:eq:pinching}
\end{equation}
If $A$ is anti-Hermitian, the infimum may be restricted to Hermitian $X$.
\end{lemma}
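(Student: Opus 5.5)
The proof has two halves: a lower bound showing that no $X$ does better than $\norm{\mathcal P(A)}_{\rm op}$, and an explicit choice of $X$ attaining that value. The finite dimension of the underlying space makes both halves elementary. Write the spectral decomposition $H_0=\sum_E EP_E$ with distinct eigenvalues $E$ and mutually orthogonal projections summing to $\I$.

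For the lower bound, the first step is to show that the pinching map is a contraction in operator norm. One route is the unitary-averaging formula: let $U_\theta=\sum_E e^{i\theta_E}P_E$ and average over independent uniform phases $\theta_E$, which gives $\mathcal P(Y)=\int U_\theta YU_\theta^\dagger\,d\theta$. Each conjugate of $Y$ has the same operator norm, so $\norm{\mathcal P(Y)}_{\rm op}\le\norm Y_{\rm op}$. Alternatively, writing $\mathcal P(Y)=\bigoplus_E P_EYP_E$ gives the norm as $\max_E\norm{P_EYP_E}\le\norm Y$ directly. The second step is $\mathcal P([H_0,X])=0$, since $P_E[H_0,X]P_E=EP_EXP_E-P_EXP_EE=0$. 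Combining the two steps, for every $X$,
\[
\norm{[H_0,X]+A}_{\rm op}\ge\norm{\mathcal P([H_0,X]+A)}_{\rm op}=\norm{\mathcal P(A)}_{\rm op}.
\]

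For attainment, define $X=\sum_{E\ne E'}P_EAP_{E'}/(E'-E)$. Then $P_E[H_0,X]P_{E'}=(E-E')P_EXP_{E'}=-P_EAP_{E'}$ for $E\ne E'$, while the diagonal blocks of $[H_0,X]$ vanish. Hence $[H_0,X]+A=\mathcal P(A)$ exactly, and the infimum is attained and equals $\norm{\mathcal P(A)}_{\rm op}$.

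For the Hermitian refinement, suppose $A^\dagger=-A$ and take the adjoint of the explicit $X$:
\[
X^\dagger=\sum_{E\ne E'}\frac{P_{E'}A^\dagger P_E}{E'-E}=\sum_{E\ne E'}\frac{P_{E'}AP_E}{E-E'}.
\]
Relabeling $E\leftrightarrow E'$ shows $X^\dagger=X$, so the minimizer is Hermitian. The lower bound holds for all $X$, in particular Hermitian ones, so restricting the infimum to Hermitian $X$ gives the same value. No step is a real obstacle; the only point needing care is the contractivity of the pinching, which either the phase-averaging identity or the block-diagonal norm formula settles.
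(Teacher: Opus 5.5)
Your proof is correct and follows the paper's argument. A contractive pinching that annihilates every commutator $[H_0,X]$ gives the lower bound, the explicit off-diagonal choice $P_EXP_{E'}=-P_EAP_{E'}/(E-E')$ attains it, and the adjoint of that block formula gives Hermiticity when $A^\dagger=-A$. The only difference is that the paper proves contractivity by writing $\mathcal P$ as a product of the two-term averages $\tfrac12(A+S_EAS_E)$ with $S_E=2P_E-\I$, whereas you use continuous phase averaging or the block-diagonal norm formula; either works equally well.
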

\begin{proof}
For each distinct energy \(E\), let \(S_E=2P_E-\I\). It is a
Hermitian unitary, and the map
\[
\mathcal C_E(A)=\tfrac12(A+S_EAS_E)
\]
has operator norm at most one. On a block \(P_FA P_G\), it retains
the block exactly when either both \(F,G\) equal \(E\) or neither
does. Consequently the finite product of the maps \(\mathcal C_E\)
is precisely \(\mathcal P\). This proves operator-norm
contractivity while keeping repeated eigenvalues in entire blocks.
Moreover,
\[
\mathcal P([H_0,X]+A)=\mathcal P(A),
\]
so every residual has norm at least \(\norm{\mathcal P(A)}_{\rm op}\).

Set the diagonal blocks of \(X_*\) to zero and choose
\[
P_EX_*P_{E'}=-\frac{P_EAP_{E'}}{E-E'}\qquad(E\ne E').
\]
Then \([H_0,X_*]+A=\mathcal P(A)\), which attains the lower bound.
When \(A^\dagger=-A\), conjugation of the \((E',E)\) block changes
the sign of both the numerator and the energy difference, giving
\((P_{E'}X_*P_E)^\dagger=P_EX_*P_{E'}\). Thus \(X_*\) is
Hermitian. In particular, exact repair is possible if and only if
\(\mathcal P(A)=0\). A unitary eigenbasis change preserves the
operator norm and commutator, so the construction applies to any
finite-dimensional Hermitian \(H_0\).
\end{proof}
The obstruction from degenerate blocks is standard linear algebra,
also present in finite-periodic generator analyses \cite{vanovac}.
Theorem~\ref{st:thm:resonance} gives an explicit bound for every intrinsic
direction, every one-site charge, and every periodic length.

Theorem~\ref{st:thm:resonance} is stated in the main text.
\begin{proof}[Proof of Theorem~\ref{st:thm:resonance}]
The normalized one-magnon states
\[
|k,a\rangle=L^{-1/2}\sum_{j=0}^{L-1}e^{ikj}
|0\cdots a_j\cdots0\rangle,\qquad a=1,2,
\]
have the same exact exchange energy $E_L(k)=L-2+2\cos k$. The same
holds after any common $SU(3)$ rotation. Every matrix element of
$[H_{0,L},Q_1]$ between these rotated states vanishes, regardless of
$Q_1$.

Set $(A_f)_{ab}=\langle a0|T_f|0b\rangle$ for $a,b=1,2$. The
on-site contributions of the two adjacent bonds cancel by flip
oddness; hopping contributions carry phases $e^{ik}$ and $-e^{-ik}$.
Consequently
\begin{equation}
\langle k,a|V_{f,L}|k,b\rangle=2i\sin k\,(A_f)_{ab},\quad
A_f=\begin{pmatrix}
f^{012}-\bar f_{012}&-f^{011}-\bar f_{022}\\
f^{022}+\bar f_{011}&-f^{012}+\bar f_{012}
\end{pmatrix}.\label{st:eq:magnoncompression}
\end{equation}
This is a compression to degenerate states, not an invariant-subspace
assertion for $V_{f,L}$.

The complex representation $\operatorname{Sym}^3\mathbb C^3$ is
irreducible of dimension ten and is not self-dual. Schur orthogonality
and the multiplicity-three tensor normalization give
\[
\int|f_U^{011}|^2\,dU=\int|f_U^{022}|^2\,dU=\norm f^2/30,
\qquad \int f_U^{011}f_U^{022}\,dU=0.
\]
Thus $\int|(A_{f_U})_{12}|^2\,dU=\norm f^2/15$. For some rotation,
the corresponding resonant matrix element has modulus at least
$2|\sin k|\norm f/\sqrt{15}$.

The Hermitian density $i[T_f,q_1+q_2]$ is again intrinsic, with tensor
$f_q$ satisfying $2\norm{f_q}=\norm{[T_f,q_1+q_2]}_\HS$. Apply the
preceding bound to it. The $Q_1$ matrix element vanishes in the same
states, proving \eqref{st:eq:resonance}. For every $L\ge3$ there is an
allowed $k$ with $\sin k\ne0$, so the repair equation implies local
commutation. The converse takes $Q_1=0$.
\end{proof}

For the Fermat tensor $f^{000}=f^{111}=f^{222}=1$, the infinitesimal
action obeys
\[
\norm{q\cdot f}^2
=9\sum_a|q_{aa}|^2+3\sum_{a\ne b}|q_{ab}|^2
\ge3\norm q_\HS^2.
\]
Choosing an allowed momentum with $|\sin k|\ge\sqrt3/2$ gives, for
every $L\ge3$ and every $\norm q_\HS=1$,
\begin{equation}
\inf_{Q_1}\norm{[H_{0,L},Q_1]+[V_{f,L},Q_L(q)]}_{\rm op}
\ge\sqrt{3/5}.\label{st:eq:fermatobstruction}
\end{equation}
Such a momentum is obtained by taking the nearest integer to $L/4$
for $L\ge6$; the three smaller lengths satisfy the same bound directly.
No charge $Q_L(q)+\lambda Q_1+O(\lambda^2)$ can therefore have an
$O(\lambda^2)$ commutator with
$H_{0,L}+\lambda V_{f,L}+\lambda^2W_L$, for any $W_L$.
In particular, the compensation $N$ that suppresses current breaking
does not repair these charges. A translation-invariant bulk repair of
any fixed finite range would periodize to all sufficiently long rings
and is likewise excluded. This operator-norm obstruction is not a
per-site infinite-temperature decay-rate bound.

\subsection{Why an open-chain bilocal generator does not repair the ring}
Set $r_f=-iT_fP/2$ and $X_L=\sum_{a<b}(r_f)_{ab}$. Then $r_f$ is
Hermitian and $i[r_f,P]=T_f$. For an open chain, each adjacent
permutation commutes with the paired sum of terms connecting that bond
to any external site. Only the term on the bond remains, giving
\begin{equation}
i[X_L,H_{0,L}^{\rm open}]=V_{f,L}^{\rm open}.\label{st:eq:bilocal}
\end{equation}
Jacobi therefore makes $Y_L(q)=i[X_L,Q_L(q)]$ an open-chain
first-order charge correction. Contraction of a symmetric tensor index
pair against the Levi--Civita tensor makes both partial traces of
$r_f$ vanish, as do those of its charge commutator. Distinct site pairs
are orthogonal, so
\begin{equation}
\frac{\Tr Y_L(q)^2}{3^L}
=\frac{\binom L2}{36}\norm{[T_f,q_1+q_2]}_\HS^2.
\label{st:eq:bilocalvariance}
\end{equation}
For a unit Fermat charge this lies between $\binom L2/3$ and
$\binom L2$, growing quadratically rather than extensively.
This statement concerns the displayed representative. Another
open-chain solution is $Y_L+K$ with $[H_{0,L}^{\rm open},K]=0$;
the minimum normalized squared Hilbert--Schmidt norm over these solutions is
$3^{-L}\norm{(1-\mathcal P_{\rm open})Y_L}_\HS^2$, whose volume scaling
has not been determined here. The periodic closing bond also spoils
\eqref{st:eq:bilocal} by reversing
the ordering of intervening pairs. Theorem~\ref{st:thm:resonance} rules
out any alternative finite-periodic correction of these charges.

\subsection{Local repair of the invariant commuting charges}
The same bilocal construction does give local corrections to the
$SU(3)$-invariant charges. We make the mechanism and its range explicit.

Proposition~\ref{st:prop:scalarrepair} is stated in the main text.
\begin{proof}[Proof of Proposition~\ref{st:prop:scalarrepair}]
Expand $r_f=\sum_{a,b}c_{ab}t^a\otimes t^b$ in traceless single-site
generators. For any interval $K\supset I$, the terms of $X_K$
connecting an external site $k<\min I$ to $I$ sum to
$\sum_{a,b}c_{ab}t_k^a\sum_{j\in I}t_j^b$ and thus commute with
$A_I$. Sites to the right of $I$ give the same cancellation. Terms
fully outside $I$ commute as well. Therefore
\begin{equation}
i[X_K,A_I]=i[X_I,A_I],\qquad
\norm{\mathcal D_f A_I}_{\rm op}
\le |I|(|I|-1)\norm{r_f}_{\rm op}\norm{A_I}_{\rm op}.
\label{st:eq:scalarlocality}
\end{equation}
The generators $t^a,t^b$ need not commute with each other; the target
density must commute with each total generator on its support.

For two invariant local operators, choose an interval containing both
supports and apply Jacobi using its $X_K$. This gives
$\mathcal D_f[A,B]=[\mathcal D_fA,B]+[A,\mathcal D_fB]$.
The map also commutes with translations and is therefore well-defined
on invariant local densities modulo total differences. In a local
operator-word basis, a vanishing total commutator has zero sums of
coefficients on each translation orbit. Its density is thus a finite
total difference, whose primitive can be averaged to be invariant.
Applying $\mathcal D_f$ to the bulk relations
$[C_\alpha^{(0)},C_\beta^{(0)}]=0$ proves
\eqref{st:eq:mutualrepair}. Since $\mathcal D_fP=i[r_f,P]=T_f$,
the first identity in Eq.~\eqref{st:eq:mutualrepair} follows. These are finite-range density
identities and can be periodized once the ring exceeds their support
spans; for a collection with maximum range $R$, $L>2R$ suffices.
This does not require periodizing the bilocal generator itself.
If only the total charges are given as invariant, their densities can
first be averaged over $SU(3)$ without changing the charges or ranges.
The opposite one-site conclusion is Theorem~\ref{st:thm:resonance}.
\end{proof}

For the exchange current $j_0=i[P_{12},P_{23}]$, this construction gives
$\mathcal D_fj_0=i([T_{12},P_{23}]+[P_{12},T_{23}])$, exactly its
first-order current correction. It also applies to every other
finite-range invariant commuting charge, so the first-order effect
is not specific to the energy current. Within the degenerate energy
blocks, the first identity in Eq.~\eqref{st:eq:mutualrepair} further implies
$[\mathcal P_L(V_f),C_\alpha^{(0)}]=0$; the same resonant perturbation
need not commute with the uniform one-site $SU(3)$ charges.

This is an application of bilocal localization, closely related to
Eq.~(12) of \cite{surace} and the finite-periodic distinction in
Sec.~II.3 of \cite{vanovac}. A recent quantum-group construction also
describes long-range deformations to first order and discusses extensions
using Lie-algebra charges \cite[Secs.~3.1--3.3.3]{schouten}.
This local repair is not evidence that the perturbation
lies outside weak integrability-breaking constructions. Moreover,
$\mathcal D_f A_I$ is generally no longer invariant, so the locality
argument cannot simply be iterated. Second-order repair of the entire
commuting hierarchy is not proved here. The compensated Fermat
current's $O(\lambda^3)$ nonconservation remains a separate exact result.

\section{Guaranteed current retention and energy spreading}\label{main:app:timebounds}
\subsection{Initial-charge drift along exact arcs}

Use the exact analytic charge of Theorem~\ref{fs:thm:main} and the
notation of Sec.~\ref{main:sec:exact}, with $\norm{q_0}_{\rm op}=1$.
Conservation of $Q_L(\tau)$ gives
\begin{equation}
\frac1L\norm{Q_L^0(t)-Q_L^0}_{\rm op}
\le2\norm{q(\tau)-q_0}_{\rm op}\le2M_q|\tau|.
\label{main:eq:alltime}
\end{equation}
Indeed, the difference is the change under time evolution of
$Q_L^0-Q_L(\tau)$, whose norm is at most
$L\norm{q_0-q(\tau)}_{\rm op}$. Analyticity supplies a finite $M_q$.
Differentiating local conservation at $\tau=0$ also gives
$[K,q_0\otimes\I+\I\otimes q_0]=0$, since $P$ commutes with every
total one-site operator. Write $h(\tau)=P+\tau K+\tau^2r(\tau)$
with $\norm{r(\tau)}_{\rm op}\le M_r$. The commutator inequality
then yields $\norm{[H_L(\tau),Q_L^0]}_{\rm op}/L\le4M_r\tau^2$.
Its time integral, combined with \eqref{main:eq:alltime}, proves
\eqref{main:eq:drift}. The time-independent term uses the exact analytic
charge, beyond the first-order restriction.

\subsection{Current retention and energy spreading}

For any finite Hermitian $H,J$, positivity of the energy-basis spectral
weights and $0\le1-\cos x\le x^2/2$ imply
\begin{equation}
0\le\chi_J-C_J(t)\le\frac12\kappa_Jt^2.\label{st:eq:timebound}
\end{equation}
For the Fermat model this guarantees $C_J(t)\ge\chi_J/2$ whenever
\begin{equation}
|t|\le t_{\rm cert}(\lambda):=
\sqrt{\frac{v(u)}{9\lambda^6p(u)}}
\sim\frac{|\lambda|^{-3}}{\sqrt{486}}.\label{st:eq:window}
\end{equation}
The statement is uniform in $L\ge5$ and supplies a lower bound on a
possible half-decay time, without asserting that half-decay occurs.

On the infinite chain define
$M_2(t)=\chi_E^{-1}\sum_xx^2\langle e_x(t)e_0\rangle_0$.
The continuity equation and discrete summation by parts give
$M_2''(t)=2C_J^\infty(t)/\chi_E$, with
$C_J^\infty(t)=\sum_x\langle j_x(t)j_0\rangle_0$ as in the main
text; locality justifies the spatial sums at every fixed time~\cite{bravyi}.
The scalar marginals give $M_2(0)=0$, and cyclicity and translation invariance
make $M_2(t)$ even. Integrating \eqref{st:eq:timebound} yields
\begin{equation}
\frac{\chi_J}{\chi_E}t^2-\frac{\kappa_J}{12\chi_E}t^4
\le M_2(t)\le\frac{\chi_J}{\chi_E}t^2.\label{st:eq:spreading}
\end{equation}
Throughout \eqref{st:eq:window}, the lower bound is at least
$(11/12)(\chi_J/\chi_E)t^2$. This is a finite-time ballistic
second-moment bound. The correlation profile need not be a positive
probability distribution, and no infinite-time Drude weight follows.

\subsection{An explicit finite-volume resonant conversion}
In the original physical basis take the uniform background
$|u\rangle=(|0\rangle+|1\rangle)/\sqrt2$ and the two excitation
flavors $|v\rangle=(-|0\rangle+|1\rangle)/\sqrt2$,
$|w\rangle=|2\rangle$. Equivalently, rotate the density by
\[
U=\begin{pmatrix}1/\sqrt2&1/\sqrt2&0\\
-1/\sqrt2&1/\sqrt2&0\\0&0&1\end{pmatrix}\in SU(3).
\]
Its Fermat tensor has $f^{011}=1/\sqrt2$ and
$f^{022}=f^{012}=0$. Equation~\eqref{st:eq:magnoncompression} then
gives the exact compression
\begin{equation}
P_{\rm mag}V_{f,L}P_{\rm mag}=\sqrt2\sin k\,\sigma_y,
\qquad \langle k,w|V_{f,L}|k,v\rangle=i\sqrt2\sin k.
\label{st:eq:conversion}
\end{equation}
By the variational principle, the complete exchange-energy cluster
containing these states has a first-order splitting width at least
$2\sqrt2|\lambda\sin k|-O_L(\lambda^2)$. The two eigenvalues of the
compression are not asserted to be eigenvalues of the full perturbation.

For fixed $L$, write $V_{\rm res}=\sum_E\Pi_EV_{f,L}\Pi_E$ using
the complete energy projections of $H_{0,L}$. Finite-dimensional
averaging gives, for fixed bounded $s$ and $\lambda\to0^+$,
\begin{equation}
e^{iH_{0,L}s/\lambda}e^{-iH_{\lambda,L}s/\lambda}
=e^{-isV_{\rm res}}+O_L(\lambda).
\label{st:eq:averaging}
\end{equation}
Indeed a first-order block change of basis removes off-diagonal energy
blocks with denominators $E-E'\ne0$; its $O_L(\lambda^2)$ Hamiltonian
remainder accumulates only $O_L(\lambda)$ error on this time scale.
Equations~\eqref{st:eq:conversion}--\eqref{st:eq:averaging} imply, for
small fixed $s$,
\begin{equation}
\big|\langle k,w|e^{-iH_{\lambda,L}s/\lambda}|k,v\rangle\big|^2
=2\sin^2 k\,s^2+O_L(s^3)+O_L(\lambda).\label{st:eq:probability}
\end{equation}
On the same time scale the current bound, for every $L\ge5$, reads
\[
0\le1-C_J(s/\lambda)/\chi_J
\le243s^2\lambda^4+O(s^2\lambda^6).
\]
Thus the same Hamiltonian has a controlled resonant conversion and
strongly suppressed current drift. The protocols use different states
and ensembles. The error constants in \eqref{st:eq:averaging} and
\eqref{st:eq:probability} depend on $L$ and its spectral gaps, unlike the
current bound. No thermodynamic $\lambda^{-1}$ flavor-relaxation law
or isolated magnon band has been established.

\section{Controls, recoverability, and remaining questions}\label{main:app:controls}
A matched exact YBE control is
$h_\theta=(U_\theta\otimes U_\theta^\dagger)P$ with
$U_\theta=\operatorname{diag}(e^{i\theta},1,e^{-i\theta})$.
It is a Hermitian involutive braid generator, so
$\check R(u)=\I+u h_\theta$ satisfies additive braid YBE.
It preserves diagonal charges and has
$\kappa_J=0$, $\chi_J=16/9$, $\chi_E=8/9$.
The charge-preserving control $P+\lambda Z\otimes Z$, with
$Z=\operatorname{diag}(1,-1,0)$, instead has
\[
\kappa_J=\frac{16}{9}\lambda^2(\lambda^2+2\lambda+2),\quad
\chi_J=\frac{16}{9}(\lambda^2+\lambda+1),\quad
\chi_E=\frac49(\lambda^2+\lambda+2).
\]
Its normalized current curvature begins at $2\lambda^2$, despite two
exact diagonal charges. The corresponding bound from
\eqref{st:eq:timebound} guarantees only an $O(|\lambda|^{-1})$ window.
This comparison concerns guaranteed windows, not actual relaxation times.

\subsection{Trace derivation for the comparison controls}

The exact control acts as
\(h_\theta|ab\rangle=e^{i(\theta_b-\theta_a)}|ba\rangle\).
Its square is the identity, and both three-site braid products multiply
\(|abc\rangle\) by \(e^{2i(\theta_c-\theta_a)}\) and reverse the
three letters. Hence \(F(h_\theta)=2\Delta h_\theta\).
The common three-site change of basis
\(\I\otimes U_\theta^\dagger\otimes U_\theta^{-2}\)
maps the two exchange bonds to the two \(h_\theta\) bonds. It preserves
the current norm and all partial traces. This proves its stated
susceptibilities and zero curvature.

For the \(Z\otimes Z\) control, put \(X=P_{12}\), \(Y=P_{23}\),
and use the diagonal three-site operators
\[
A=(Z_1-Z_2)Z_3,\qquad B=Z_1(Z_2-Z_3).
\]
Moving diagonal factors through the two permutations gives
\[
[h_{12},h_{23}]=C_0+\lambda C_1,\qquad
C_0=XY-YX,\quad C_1=AX+BY.
\]
The required traces are consequences of the elementary permutation rule
\begin{equation}
\Tr\!\left[P_\pi(A_1\otimes\cdots\otimes A_n)\right]
=\prod_{\text{cycles }(i_1\ldots i_k)\text{ of }\pi}
\tr(A_{i_k}\cdots A_{i_1}),
\label{st:eq:permutationtrace}
\end{equation}
with the cycle order fixed by the convention for \(P_\pi\).
For diagonal \(A_i\) their order is immaterial. Here
\(\tr Z=\tr Z^3=0\), \(\tr Z^2=\tr Z^4=2\), and
\(\tr\I=3\). Thus, for \(j_r=iC_r\),
\[
\left(\Tr(j_rj_s)\right)_{r,s=0,1}
=\begin{pmatrix}48&24\\24&48\end{pmatrix}.
\]
More explicitly, the three entries are
\(2(3^3-3)\), \(4(3\tr Z^2-(\tr Z)^2)\), and
\(4\cdot3(\tr Z^2)^2\).
For the last one, \(\Tr A^2=\Tr B^2=24\); the mixed
three-cycle trace vanishes because it sets all three eigenvalues equal.
The current has zero one-site partial traces. Dividing its norm by 27
gives \(\chi_J\). Similarly
\(\Tr h=3\), \(\Tr h^2=9+4\lambda+4\lambda^2\), and the
two partial traces of \(h-\I/3\) vanish. This gives \(\chi_E\).

To compute the quotient residual, define
\[
K=(Z_1-Z_2)^2P,\qquad W=Z^2-\tfrac23\I,\qquad
O=2BXY+2AYX.
\]
Both partial traces of \(K\) vanish, since a partial trace of a
weighted swap sets its two eigenvalues equal. Expanding one more
commutator gives
\begin{align*}
F(h)&=2\Delta P+\lambda\left(2\Delta(Z\otimes Z)+O\right)\\
&\quad+\lambda^2\left(-K_{12}Z_3^2+Z_1^2K_{23}\right).
\end{align*}
For example, \([Z_1Z_2+Z_2Z_3,AX]=-A^2X\) and the
corresponding \(BY\) term is \(B^2Y\), giving the last line.
The unnormalized endpoint traces of \(O\) are \(-2K\) and \(2K\).
Therefore its two quotient coefficients are
\[
r_1=O-\tfrac23\Delta K,\qquad
r_2=-K_{12}W_3+W_1K_{23}.
\]
They have only span-three terms. Their Gram matrix is
\begin{equation}
\left(\Tr(r_rr_s)\right)_{r,s=1,2}
=\begin{pmatrix}96&48\\48&48\end{pmatrix}.
\label{st:eq:controlresidualgram}
\end{equation}
Here are scalar contractions establishing all three entries.
For \(z_a\in\{1,-1,0\}\),
\[
\norm K_\HS^2=\sum_{a,b}(z_a-z_b)^4=36,\qquad
\tr W^2=\tfrac23,\qquad \norm O_\HS^2=8\Tr B^2=192.
\]
The orthogonal boundary part has norm
\(\norm{(2/3)\Delta K}_\HS^2=96\), leaving
\(\norm{r_1}_\HS^2=96\).
The two summands of \(r_2\) are orthogonal by
\eqref{st:eq:permutationtrace}, so
\(\norm{r_2}_\HS^2=2\cdot36\cdot(2/3)=48\).
Finally their cross contraction is
\[
\Tr(r_1r_2)=\Tr(Or_2)
=-8\sum_{a,b}z_a(z_b-z_a)^3(z_a^2-\tfrac23)=48.
\]
The sum is \(-6\): the rows with \(z_a=1,-1,0\) contribute
\(-3,-3,0\), respectively.
Division of the quadratic form in
\eqref{st:eq:controlresidualgram} by 27 gives
\(\kappa_J=16\lambda^2(\lambda^2+2\lambda+2)/9\).
This derives all control coefficients from the permutation trace rule.

\subsection{Recovery and calibration}

Charge recovery itself is an eight-dimensional Gram eigenproblem. If a
measured density $\widetilde h$ obeys
$\norm{h-\widetilde h}_\HS\le\eta$, a normalized smallest-eigenvalue
vector $\widetilde q$ gives the explicit certificate
\[
\norm{[h,\widetilde q_1+\widetilde q_2]}_\HS
\le\sqrt{\lambda_{\min}(G_{\widetilde h})}+4\eta.
\]
This is standard linear algebra. The new structural estimate guarantees
a small eigenvalue from a small residual, but does not yet provide
numerical uniform values of $C,r_0$. Subspace recovery must be judged
against spectral gaps, especially near the eightfold degeneracy at $P$.

The explicit current window also admits a calibration certificate.
For $\widetilde h=h_\lambda+E$ with $\norm E_\HS\le\eta$, take
$M\ge\max(\norm{h_\lambda}_{\rm op},\norm{\widetilde h}_{\rm op})$.
Expanding the two commutators in $F$ and the local current gives
\[
\widetilde\varepsilon\le\varepsilon_\lambda+24\sqrt3M^2\eta,
\qquad
\sqrt{\widetilde\chi_J}\ge\sqrt{\chi_{J,\lambda}}
-4M\eta/\sqrt3.
\]
For the second bound, Lemma~\ref{st:lem:norm} applied to the traceless
current difference bounds its periodic variance by its local squared
norm divided by nine. If the numerator is positive,
\begin{equation}
|t|\le t_{\rm cert}:=
\frac{3\sqrt{\chi_{J,\lambda}}-4\sqrt3M\eta}
{\varepsilon_\lambda+24\sqrt3M^2\eta}
\quad\Longrightarrow\quad
\widetilde C_J(t)\ge\widetilde\chi_J/2.\label{st:eq:noise}
\end{equation}
An analogous two-site translation bound gives
$\widetilde\Gamma_{\rm one}\ge
(\delta(h_\lambda)-4\sqrt2\eta)_+^2/3$, even when $E$ is not
balanced. Thus $\eta=O(|\lambda|^3)$ is a sufficient calibration
precision to retain a cubic current window and exclude exact uniform
one-site charges for small nonzero $\lambda$. These conservative
bounds do not determine an optimal noise tolerance or effective
constants for Theorem~\ref{st:thm:sharp}.

Weak integrability-breaking constructions already yield high-order
quasi-conserved quantities and rigorous persistence bounds
\cite{surace}; their finite-periodic realization has also been studied
\cite{vanovac}. Recent exact results derive commuting local hierarchies
or Yang--Baxter realizations from the Reshetikhin conservation law
under their stated hypotheses \cite{hokkyo,ssi}. These exact criteria
concern vanishing residual; the estimate here controls the least
one-site charge defect also when the residual is nonzero.
The structural contribution is the sharp
qutrit-neighborhood constraint, its explicitly saturating nearest-neighbor
response, and the resonant obstruction to repairing the broken one-site
charges. The general implication from quasi-conservation to slow change
is established background. Proposition~\ref{st:prop:scalarrepair} identifies the
first-order repair of the invariant hierarchy; its higher-order repair
and effective uniform stability constants remain separate questions.

The structural claims do not require a thermodynamic relaxation theorem.
The operator-norm bound alone does not establish macroscopic thermal
weight: for
$F_q=i[V_f,Q_L(q)]$, it only gives
$\norm{\mathcal P_L(F_q)}_\HS^2/(L3^L)\ge3/(5L3^L)$ in the unit
Fermat case. Nor does it select a thermodynamic time scale. A future
dynamical result could establish finite memory change at some
$t_q=o(|\lambda|^{-3})$ without presupposing $t_q\sim|\lambda|^{-1}$
or irreversible relaxation. The next two appendixes prove thermal
protection and a fixed-strength complete-memory contraction. They narrow
this question without establishing a fixed fractional loss as
$\lambda\to0$.

\section{Thermal protection by a finite-range bilocal correction}
\label{main:app:thermal}

\subsection{Local identities and tracial norms}

Write $H_{\lambda,L}=H_{0,L}+\lambda V_L+\lambda^2W_L$, with
$H_{0,L}=\sum_jP_{j,j+1}$, $V_L=\sum_jT_{j,j+1}$, and
$W_L=\sum_jN_{j,j+1}$. Throughout this appendix, $T=T_f$ is
Hermitian intrinsic, $N$ is fixed Hermitian with scalar partial traces,
and $q$ is nonzero traceless Hermitian. The raw path has $N=0$.
Let
\begin{equation}
Q_L=\sum_jq_j,\quad \chi_q=\frac{\tr q^2}{3},\quad
S=i[T,q_1+q_2],\quad r=-\frac{i}{2}SP.
\label{main:eq:thermalr}
\end{equation}
The intrinsic tensor formula and permutation trace identities in
Appendix~\ref{main:app:resonance} imply
\begin{equation}
S=S^\dagger,\quad r=r^\dagger,\quad PSP=-S,\quad PrP=-r,
\quad i[r,P]=S,\quad
\Tr_1S=\Tr_2S=\Tr_1r=\Tr_2r=0.
\label{main:eq:thermalidentities}
\end{equation}
Indeed, the intrinsic subspace and its zero-partial-trace conditions,
also after multiplication by $P$, are invariant under simultaneous
one-site conjugation. Differentiating that action gives the same
conditions for $S$. Exchange oddness makes $SP$ anti-Hermitian and
gives the commutator identity for $r$.

Use $\langle A,B\rangle_0=3^{-L}\Tr(A^\dagger B)$ and
$\norm A_0^2=\langle A,A\rangle_0$ on a ring, with the analogous
normalized trace on a local support. Define the nonnegative constants
\begin{equation}
\sigma^2=\frac{\tr S^2}{9},\quad s_q=\frac{\sigma}{\sqrt{\chi_q}},
\quad a_T=\norm T_{\rm op},\quad a_N=\norm N_{\rm op},\quad
n_q=\frac{\norm{i[N,q_1+q_2]}_\HS}{3\sqrt{\chi_q}}.
\label{main:eq:thermalconstants}
\end{equation}
The autocorrelation is real and even in time by its energy-basis
spectral representation. Tracial unitarity gives
\begin{equation}
1-c_{q,L}(t)=\frac{\norm{Q_L(t)-Q_L}_0^2}{2L\chi_q}.
\label{main:eq:chargeunitarity}
\end{equation}

\subsection{An exact truncated identity}

For an integer $R\ge1$ and $L\ge2R+3$, use oriented pairs
$r_{ba}=-r_{ab}$ and set
\begin{equation}
w_d=1-\frac{d}{R+1},\quad
X_{R,L}=\sum_j\sum_{d=1}^Rw_dr_{j,j+d},\quad
s_R=\sum_{d=1}^Rw_d^2=\frac{R(2R+1)}{6(R+1)}\le\frac R3.
\label{main:eq:cutoffweights}
\end{equation}
No unoriented pair repeats under the length hypothesis. Distinct pair
supports are orthogonal because $r$ has zero partial traces. Since
$\norm r_{0,2}^2=\sigma^2/4$,
\begin{equation}
\frac{\norm{X_{R,L}}_0^2}{L}=\frac{\sigma^2s_R}{4}.
\label{main:eq:Xvariance}
\end{equation}
For each nearest-neighbor bond introduce
\begin{equation}
D_{j,k}=i(r_{j,k}-r_{j+1,k})P_{j,j+1},\qquad
K_{j,R}=\{j-R,\ldots,j-1,j+2,\ldots,j+R+1\}.
\end{equation}
Commutation with each permutation gives the exact finite-ring identity
\begin{align}
F_{S,L}:=\sum_jS_{j,j+1}&=i[X_{R,L},H_{0,L}]+B_{R,L},
\label{main:eq:cutoffidentity}\\
B_{R,L}&=\frac{1}{R+1}\left(\sum_jS_{j,j+1}
             +\sum_j\sum_{k\in K_{j,R}}D_{j,k}\right).
\label{main:eq:cutoffB}
\end{align}
To check the coefficients, the pair on the bond contributes $w_1S$.
For a site outside that bond, the two pairs joining it to the bond
contribute a difference of weights. Extend $w_d$ by zero for $d>R$.
Each of the $2R$ nonzero differences is $-1/(R+1)$, including
the two cutoff endpoints. Finally $1-w_1=1/(R+1)$.
These observations prove \eqref{main:eq:cutoffidentity} without taking
a limit or discarding an endpoint term.

Each $D_{j,k}$ has zero partial trace over all three sites. Tracing
the spectator $k$ uses the zero marginals of $r$. Tracing either bond
site cancels the two terms: in a decomposition
$r=\sum_{a,b}c_{ab}t^a\otimes t^b$ with traceless $t^a$, the
identities $\Tr_1[(t^a\otimes\I)P]=t^a$ and
$\Tr_1[(\I\otimes t^a)P]=t^a$ give identical contractions.
Also,
\begin{equation}
\norm{D_{j,k}}_{0,3}^2
=\norm{r_{j,k}-r_{j+1,k}}_{0,3}^2=\frac{\sigma^2}{2}.
\end{equation}
Operators with different exact three-site supports are orthogonal and
are orthogonal to the two-site $S$ terms. A three-site set contains
at most two ring bonds for $L\ge5$, so its multiplicity in the
double sum is at most two. Cauchy--Schwarz within each repeated support,
and $2R$ terms per bond, therefore give
\begin{equation}
\frac{\norm{B_{R,L}}_0^2}{L}
\le\sigma^2\frac{2R+1}{(R+1)^2}.
\label{main:eq:Bvariance}
\end{equation}

\subsection{Commutators with the perturbations}

For any Hermitian two-site $y$ with scalar partial traces and
$Y_L=\sum_jy_{j,j+1}$, we claim
\begin{equation}
\frac{\norm{i[Y_L,X_{R,L}]}_0^2}{L}
\le\norm y_{\rm op}^2\sigma^2(3+16s_R).
\label{main:eq:YXvariance}
\end{equation}
There are two support types. Coincident pairs have the traceless
density $iw_1[y,r]$. Only its own and its two neighboring translates
can overlap. Their total variance per site is at most
$3w_1^2\norm{[y,r]}_{0,2}^2\le3\norm y_{\rm op}^2\sigma^2$.

Terms sharing exactly one site have zero partial trace over each of
their three sites. For a nonshared site, the marginal of $y$ is
scalar or that of $r$ is zero. For the shared site, expanding in
one-site tensors leaves the trace of a commutator. Consequently these
terms are orthogonal to the coincident-pair terms. A given three-site
set has at most two choices of the $y$ bond and at most two choices
of the $r$ pair for each bond, hence multiplicity at most four.
Each unweighted term has squared norm at most
$\norm y_{\rm op}^2\sigma^2$, by
$\norm{[y,r]}_0\le2\norm y_{\rm op}\norm r_0$.
The weighted count per site is
$2w_1^2+4\sum_{d=2}^Rw_d^2\le4s_R$.
Cauchy--Schwarz within each support gives a variance at most
$16\norm y_{\rm op}^2\sigma^2s_R$.
Adding the orthogonal support types proves \eqref{main:eq:YXvariance}.

\subsection{Full dynamics and the order of limits}

Set $\widetilde Q_L=Q_L+\lambda X_{R,L}$. Equation
\eqref{main:eq:cutoffidentity} yields
\begin{equation}
i[H_{\lambda,L},\widetilde Q_L]
=\lambda B_{R,L}+\lambda^2i[V_L,X_{R,L}]
 +\lambda^2i[W_L,Q_L]+\lambda^3i[W_L,X_{R,L}].
\label{main:eq:fullthermalforce}
\end{equation}
Tracial unitarity and time integration under $H_{\lambda,L}$ imply
\begin{equation}
\norm{Q_L(t)-Q_L}_0\le2|\lambda|\norm{X_{R,L}}_0
       +|t|\norm{i[H_{\lambda,L},\widetilde Q_L]}_0.
\end{equation}
The density $i[N,q_1+q_2]$ has zero marginals, so its translates
are orthogonal and
$\norm{i[W_L,Q_L]}_0^2/L=\chi_q n_q^2$.
Equations \eqref{main:eq:Xvariance}, \eqref{main:eq:Bvariance},
and \eqref{main:eq:YXvariance} now prove
$\norm{Q_L(q,t)-Q_L(q)}_0/\sqrt{L\chi_q}\le E_R(\lambda,t)$ and $1-c_{q,L}(t)\le E_R(\lambda,t)^2/2$ at every real time with
\begin{align}
E_R(\lambda,t)={}&|\lambda|s_q\sqrt{s_R}\notag\\
&+|t|\left[
|\lambda|s_q\frac{\sqrt{2R+1}}{R+1}
+\lambda^2s_q(a_T+|\lambda|a_N)\sqrt{3+16s_R}
+\lambda^2 n_q\right].
\label{main:eq:ER}
\end{align}
This estimates the full Hamiltonian, not a truncated Dyson series.

At fixed nonzero $\lambda$, fixed $R$, and fixed time, locality
gives the thermodynamic limits of the normalized extensive correlations.
One can express each as a spatial sum of two local correlations;
Lieb--Robinson approximation by a finite-support observable and
factorization of the infinite-temperature state give exponential tails
outside the light cone~\cite{bravyi}. The finite-volume bound is
uniform for $L\ge2R+3$ and therefore passes to this limit.
Only afterwards let $\lambda\to0$, choosing
$R=\lceil|\lambda|^{-1}\rceil$. For $0<|\lambda|\le1$,
the scaling $E_R=O(|\lambda|^{1/2})+O(|t|\,|\lambda|^{3/2})$ follows. The constants $s_q,n_q$ are bounded
uniformly on $\tr q^2=1$, a compact finite-dimensional sphere.
Hence $1-c_{q,\infty}\to0$ uniformly when
$|\lambda|^{3/2}|t_\lambda|\to0$.

For arbitrary real unit vectors $v,w$, let $q_v=\sum_av_aq_a$.
Cauchy--Schwarz gives
\begin{equation}
|v^{\mathsf T}(M_L-I_8)w|
\le\frac{\norm{Q_L(q_v,t)-Q_L(q_v)}_0}{\sqrt{L/3}}
\le\sup_{\tr q^2=1}E_R.
\end{equation}
This proves the full matrix claim in Theorem~\ref{main:thm:thermal}
for every intrinsic $T$, without a discrete symmetry assumption.
For Fermat, the real diagonal form \eqref{main:eq:memoryblocks} improves
the right-hand bound on $\norm{M_L-I_8}_{\rm op}$ to
$\tfrac12\sup_q E_R^2$. At $t=s_0/|\lambda|$ it is $O(|\lambda|)$.
The correction range diverges, the first-order remainder has not been
removed, and no smallness in operator norm is asserted. Thus the proof
does not solve the forbidden finite-ring repair equation.

\subsection{Low-frequency mass and leading response}

When $\sigma>0$, let $\nu_{S,L}$ be the spectral measure of
$F_{S,L}$ for the self-adjoint Liouvillian $[H_{0,L},\,\cdot\,]$,
normalized by $\norm{F_{S,L}}_0^2=L\sigma^2$. It is a positive
measure of total mass one. Projecting \eqref{main:eq:cutoffidentity}
onto $|\omega|\le\epsilon$ gives
\begin{equation}
\nu_{S,L}([-\epsilon,\epsilon])
\le\left(\frac{\epsilon}{2}\sqrt{s_R}
       +\frac{\sqrt{2R+1}}{R+1}\right)^2.
\label{main:eq:finitefrequency}
\end{equation}
The commutator contribution is bounded by
$\epsilon\norm{X_{R,L}}_0$ on this spectral subspace.
Fixed-time convergence of local correlations and continuity at zero
give a limiting probability measure $\nu_{S,\infty}$.
To pass the inequality to the closed interval, use a slightly wider
open interval with the same fixed $R$, apply weak convergence, and
then decrease its width. Choose $R=\lceil1/\epsilon\rceil$.
For $0<\epsilon\le1$, $s_R\le2/(3\epsilon)$ and
$(2R+1)/(R+1)^2\le2\epsilon$ prove
\eqref{main:eq:lowfrequency}. In particular,
$\nu_{S,\infty}(\{0\})=0$. This is a bound on the total nearby
mass and does not establish absolute continuity or a positive density.

The same identity bounds the leading fixed-time charge response. Define
\begin{equation}
A_{q,L}(t)=\frac{1}{2L\chi_q}
\left\lVert\int_0^t F_{S,L}(s)\,ds\right\rVert_0^2,
\qquad F_{S,L}(s)=e^{isH_{0,L}}F_{S,L}e^{-isH_{0,L}}.
\label{main:eq:leadingthermal}
\end{equation}
It is the coefficient of $\lambda^2$ in $1-c_{q,L}(t)$ at fixed
$L,t$. Integrating \eqref{main:eq:cutoffidentity} under $H_0$ gives
\begin{equation}
A_{q,\infty}(t)\le\frac{s_q^2}{2}
\left(\sqrt{s_R}+|t|\frac{\sqrt{2R+1}}{R+1}\right)^2.
\end{equation}
Taking $R=\lceil|t|\rceil$ for $|t|\ge1$ yields
\begin{equation}
A_{q,\infty}(t)\le
\left(\frac43+\frac2{\sqrt3}\right)s_q^2|t|.
\label{main:eq:lineargrowth}
\end{equation}
This excludes persistent superlinear asymptotic growth of the leading
coefficient. It does not decide whether $A_{q,\infty}(t)/t$ has a
positive limit, and it is not a uniform expansion at
coupling-dependent times. The full-dynamics conclusion instead uses
\eqref{main:eq:ER}.

\section{Complete Fermat charge memory and exact spectral moments}
\label{main:app:memory}

\subsection{An exact reduction of all eight directions}

Let $X,Z$ be the Weyl matrices in Sec.~\ref{main:sec:memory}. For
$W=Z,X,XZ,XZ^2$, use
\begin{equation}
q_{W,+}=\frac{W+W^\dagger}{\sqrt6},\qquad
q_{W,-}=\frac{W-W^\dagger}{i\sqrt6}.
\end{equation}
They form a real orthonormal basis with $\tr q^2=1$.
Both Fermat Hamiltonians commute with global $X$ and global $Z$;
these identities hold separately for $P,T,N$. On the complex traceless
Weyl basis, the eight characters of these adjoint actions are distinct.
Invariant correlations between different characters vanish. Passing to
Hermitian pairs initially allows a scalar rotation block for each
conjugate pair.

Let $C\ket a=\ket{-a\bmod3}$ and let $\mathcal R$ reflect the
ring. Each of $P,T,N$ is invariant under the combined transformation
$\mathcal R C^{\otimes L}$. The uniform operator built from $W$
maps to a phase times its adjoint. The phase cancels in its normalized
autocorrelation, which therefore equals its complex conjugate. The
antisymmetric part of each real block vanishes, leaving $c_WI_2$.
Finally the Hamiltonian is real in the color basis. Complex conjugation
maps $XZ$ to $XZ^2$ and reverses time; these autocorrelations are even,
so the two scalars agree. This proves \eqref{main:eq:memoryblocks}
at all times and all ring lengths.

\subsection{Local evaluation of the spectral moments}

For either Hermitian member $q$ of a channel, put
\begin{equation}
a=[h,q_1+q_2],\qquad
b=[h_{12},a_{12}]+[h_{12},a_{23}]+[h_{23},a_{12}].
\label{main:eq:momentdensities}
\end{equation}
The periodic sums are $[H,Q]$ and $[H,[H,Q]]$, respectively.
The scalar marginals of $h$ imply that both partial traces of $a$
vanish. For $\chi_q=\tr q^2/3$,
\begin{equation}
m_2=\frac{\tr(a^\dagger a)}{9\chi_q},\qquad
m_4=\frac{3\norm{b_1}_\HS^2+2\norm{b_2}_\HS^2+
                    \norm{b_3}_\HS^2}{27\chi_q}.
\label{main:eq:localmoments}
\end{equation}
Here $b_s$ are the span-$s$ components of the orthogonal representative
of $b$ modulo boundary differences. The operator-word argument of
Appendix~\ref{main:app:norm} proves these identities for every $L\ge5$.
It also shows that the moments are independent of volume in this range.

For clarity, the complete finite contractions can be specified by small
coefficient Gram matrices. Write
$b=\sum_{k=1}^4\lambda^k b^{(k)}$ and let $b_s^{(k)}$ denote
its projected span components. Define
\begin{equation}
G_{jk}=\frac{\sum_{s=1}^3(4-s)
  \tr[(b_s^{(j)})^\dagger b_s^{(k)}]}{27\chi_q}.
\label{main:eq:momentgram}
\end{equation}
Then $m_4=(\lambda,\lambda^2,\lambda^3,\lambda^4)
G(\lambda,\lambda^2,\lambda^3,\lambda^4)^{\mathsf T}$.
The matrix is real symmetric; its only possibly nonzero entries are
listed below. The input operators are the dyadic formulas for $P,T,N$
in Appendix~\ref{main:app:fermat}. Thus these contractions can be
evaluated using $E_{ab}E_{cd}=\delta_{bc}E_{ad}$,
$\tr E_{ab}=\delta_{ab}$, and $1+\omega+\omega^2=0$,
followed by the padding averages of Appendix~\ref{main:app:norm}:
\begin{center}
\setlength{\tabcolsep}{6pt}
\renewcommand{\arraystretch}{1.1}
\begin{tabular}{llrrrrrr}
\toprule
Path & Channel & $G_{11}$ & $G_{22}$ & $G_{33}$ & $G_{44}$ & $G_{13}$ & $G_{24}$\\
\midrule
Raw & $Z$ & 96 & 264 & 0 & 0 & 0 & 0\\
Raw & $X,XZ$ & 32 & 56 & 0 & 0 & 0 & 0\\
Compensated & $Z$ & 96 & 328 & 444 & 900 & $-8$ & 168\\
Compensated & $X$ & 32 & 312 & 708 & 4464 & 72 & 336\\
Compensated & $XZ$ & 32 & 248 & 636 & 3132 & 72 & 300\\
\bottomrule
\end{tabular}
\end{center}
The second moment follows by the same two-site multiplication, or from
the corresponding Weyl charge-Gram eigenvalue divided by three.
Combining the coefficients gives Table~\ref{main:tab:moments}.

\begin{table}[htbp]
\centering
\caption{Exact second and fourth spectral moments of complete charge
memory, with $u=\lambda^2$, for every periodic $L\ge5$.}
\label{main:tab:moments}
\setlength{\tabcolsep}{6pt}
\renewcommand{\arraystretch}{1.1}
\begin{tabular}{llll}
\toprule
Path & Channel & $m_2$ & $m_4$\\
\midrule
Raw & $Z$ & $12u$ & $24u(4+11u)$\\
Raw & $X,XZ$ & $4u$ & $8u(4+7u)$\\
Compensated & $Z$ & $12u(1+u)$ & $12u(8+26u+65u^2+75u^3)$\\
Compensated & $X$ & $4u(1+12u)$ & $4u(8+114u+345u^2+1116u^3)$\\
Compensated & $XZ$ & $4u(1+9u)$ & $4u(8+98u+309u^2+783u^3)$\\
\bottomrule
\end{tabular}
\end{table}

\subsection{Proof of simultaneous contraction}

For all real $x$,
$x^2/2-x^4/24\le1-\cos x\le x^2/2$. Integrating this
pointwise inequality against each positive spectral measure proves
\eqref{main:eq:momentbounds}, with no discarded Taylor remainder.
At the time $t_*$ defined in Proposition~\ref{main:prop:memoryloss}, the lower bound for the $XZ$
loss is exactly $d_*(u)$. Subtracting it from the corresponding $X$
and $Z$ lower bounds gives positive factors times, respectively,
\begin{align}
&32+408u+1197u^2+1701u^3,\\
&16+172u+354u^2+1008u^3-4374u^4.
\end{align}
Both are positive for $0\le u\le1/25$; for the second,
$4374u^4\le4374/25^4<16$. Thus all three channel losses
are at least $d_*(u)$. In this interval $t_*^2\le3/4$ and the
largest second moment is $12u(1+u)\le312/625$.
The upper bound in \eqref{main:eq:momentbounds} keeps all three
correlations positive. Their absolute values therefore equal the
correlations, proving the matrix-norm inequality in
Proposition~\ref{main:prop:memoryloss}. The current part follows from
\eqref{st:eq:timebound}; locality passes both bounds to the
thermodynamic limit at fixed strength.

\subsection{A common leading response and its scope}

At fixed $L,t$, spatial reflection changes $\lambda$ to $-\lambda$
for both Fermat paths and leaves every uniform one-site charge unchanged.
Their memories are consequently even in $\lambda$. Since $[H_0,Q]=0$,
\eqref{main:eq:chargeunitarity} gives
\begin{equation}
1-c_{W,L}(t)=\lambda^2 A_{W,L}(t)+O_{L,t}(\lambda^4),
\end{equation}
where $A_{W,L}$ is \eqref{main:eq:leadingthermal}. It depends on $T$
alone and is common to the raw and compensated paths. The Hermitian
force densities lie in the real intrinsic representation underlying
$\Sym^3(\C^3)$. This complex ten-dimensional representation is not
equivalent to its conjugate; its realification is irreducible.
The quadratic form \eqref{main:eq:leadingthermal} is real symmetric
and $SU(3)$ invariant, so it is proportional to the local norm squared.
The leading force norms in Table~\ref{main:tab:moments} give
\begin{equation}
A_{Z,L}=3A_{X,L}=3A_{XZ,L}.
\end{equation}
This normalized leading-response kernel is common to Hermitian intrinsic
directions. Equation~\eqref{main:eq:lineargrowth} controls its
thermodynamic growth. It neither identifies a full relaxation law nor
extends \eqref{main:eq:lowfrequency} to the compensated current force,
which is a different local operator.
\bibliography{references}
\end{document}